\documentclass[11pt, one column]{article}

\usepackage[english]{babel}
\usepackage[utf8x]{inputenc}
\usepackage[T1]{fontenc}

\usepackage[margin=1in]{geometry}

\usepackage{graphicx} 
\usepackage[colorlinks=False]{hyperref} 
\usepackage{amsmath}  
\usepackage{amsfonts} 
\usepackage{amssymb}  
\usepackage{amsthm}
\usepackage[shortlabels]{enumitem}
\usepackage{todonotes}
\usepackage{multirow}
\usepackage{float}
\usepackage{pgfplots}
\usepackage{tikz}
\usetikzlibrary{calc}
\usepackage{authblk}
\usepackage{cite}

\bibliographystyle{plain}


\newcommand{\C}{\mathcal{C}}

\newcommand{\F}{\mathcal{F}}







\newcommand{\cm}{\textsc{Max-}f}
\newcommand{\m}{M}
\newcommand{\optm}{O_{\m}}
\newcommand{\am}{\alpha_{\textsc{Max-}f}}

\newcommand{\cs}{\textsc{Sum-}f}
\newcommand{\s}{\scriptscriptstyle{\Sigma}}
\newcommand{\opts}{O_{\s}}
\newcommand{\as}{\alpha_{\textsc{Sum-}f}}

\newcommand{\cmm}{\textsc{Max-Max}}
\newcommand{\mm}{\scriptscriptstyle{MM}}
\newcommand{\optmm}{O_{\mm}}
\newcommand{\amm}{\alpha_{\mm}}

\newcommand{\cms}{\textsc{Max-Sum}}
\newcommand{\ms}{\scriptscriptstyle{M \Sigma}}
\newcommand{\optms}{O_{\ms}}
\newcommand{\ams}{\alpha_{\ms}}

\newcommand{\csm}{\textsc{Sum-Max}}
\newcommand{\sm}{\scriptscriptstyle{\Sigma M}}
\newcommand{\optsm}{O_{\sm}}
\newcommand{\asm}{\alpha_{\sm}}

\newcommand{\css}{\textsc{Sum-Sum}}
\newcommand{\sums}{\scriptscriptstyle{\Sigma \Sigma}}
\newcommand{\optss}{O_{\sums}}
\newcommand{\ass}{\alpha_{\sums}}


\interfootnotelinepenalty=10000

\newtheorem{theorem}{Theorem}[section]
\newtheorem{corollary}{Corollary}[theorem]
\newtheorem{lemma}[theorem]{Lemma}

\newtheorem{definition}{Definition}[section]

\title{Compatibility of Max and Sum Objectives \\ for Committee Selection and $k$-Facility Location}

\author{Yue Han}
\author{Elliot Anshelevich}
\affil{Department of Computer Science, Rensselaer Polytechnic Institute}
\affil{hany4@rpi.edu, eanshel@cs.rpi.edu}

\date{\today}
\begin{document}
\maketitle

\begin{abstract}
    We study {a version of} the metric facility location problem (or, equivalently, {variants of} the committee selection problem) in which we must choose $k$ facilities in an arbitrary metric space to serve some set of clients $\mathcal{C}$. We consider four different objectives, where each client $i\in \mathcal{C}$ attempts to minimize either the sum or the maximum of its distance to the chosen facilities, and where the overall objective either considers the sum or the maximum of the individual client costs.
    Rather than optimizing a single objective at a time, we study how {\em compatible} these objectives are with each other, and show the existence of solutions which are simultaneously close-to-optimum for any pair of the above objectives. Our results show that when choosing a set of facilities or a representative committee, it is often possible to form a solution which is good for several objectives at the same time, instead of sacrificing one desideratum to achieve another. 
\end{abstract}

\section{Introduction}
\label{section:intro}
Metric facility location problems {and their variants}
form a classic and well-known area of research. In these problems, we are given a set of possible facility locations $\F$ and a set of $n$ clients $\C$ in an arbitrary metric space $(M,d)$. {In the variants that we study in our work,} the goal is to choose a set of $k > 1$ facilities $A$ that optimizes some objective, usually making sure that the distance between clients and facilities is not too large. 
Such problems have been studied extensively in various areas such as operations research (see survey \cite{farahani2010multiple}), approximation algorithms (see book \cite{williamson2011design} and surveys \cite{davidnotes,an2017recent}),
and mechanism design (see survey \cite{chan2021mechanism}) for decades.
Metric facility location {and its variants are} general enough to capture many important settings. For example, consider when a city needs to choose a location to build a hospital from a set of potential locations, or a neighborhood wishes to choose multiple locations, each dedicated for one purpose (e.g. school, post office, library, grocery store).
Facility location can also be thought of as a committee selection problem in a spacial voting setting \cite{merrill1999unified, enelow1984spatial}, where each candidate is a potential facility location and each voter is a client; the goal is to choose $k$ candidates to form a committee which is somehow representative of all the voters according to some objective.

The objective being optimized, i.e., the measure of what makes a committee or a set of facilities be ``good'', is a crucial component of facility location problems {and their many versions}. Many different objectives have been studied in the past (see Related Work). It is not clear, however, what the correct objective is for most settings. 
For example, should we care about utilitarian objectives (minimizing the average client cost), or egalitarian objectives (making sure that all clients are not too unhappy)? 
What determines if a committee or a set of locations is good from a client's perspective? Optimizing a single chosen objective can often make the cost of a solution in terms of other objectives be extremely bad, even if these other objectives are just as reasonable as the one we decided to optimize. In our work, instead of selecting a single specific objective, we would like to see if we are capable of finding a solution that would be {\em simultaneously good for multiple objectives}. 

To see what kind of objectives we will optimize, first consider the individual cost for each client. Here note that instead of assigning each client to one facility, we are interested in modeling settings and applications where clients need to utilize all of the $k$ facilities. For example, as mentioned above, consider the case where a neighborhood wants to build $k$ different facilities to fulfill the needs of the residents in that area, e.g., a shopping center, a post office, and a grocery store. The residents will use all of these, but should they care more about not living too far away from any of these facilities, or about having a short average distance from their home to these facilities? In other words, is the goal to {\it minimize the maximum distance} from the client to all facilities ({\it max-variant} \cite{chen2020facility, lotfi2024truthful, zhao2023constrained}), or is the goal to {\it minimize the average (or total) distance} from the client to all facilities ({\it sum-variant} \cite{kanellopoulos2023truthful,lotfi2024truthful, zhao2024constrained})? Similar objectives arise in spacial voting settings as well, where distance between a voter and a committee member represents ideological differences. 
Does a voter care about {\em all} of the committee members being ideologically similar to them ({\it max-variant}), or about the {\em total} ideological difference between them and the committee members ({\it sum-variant})? While many other important objectives exist{, such as the classical setting where each client is assigned to their {\it closest} facility}, we focus on the max and sum variants in our work, and attempt to optimize {\em both} of them simultaneously.

After fixing the cost for each client, we also have many choices on how to calculate the overall cost of a committee or a set of facilities. Should we care about keeping the average of the individual costs as low as possible (a utilitarian measure), or should we keep the maximum cost of every individual low (an egalitarian measure)? With all the above individual costs, as well as different ways of combining them into an overall objective, this results in numerous possible objectives, with none necessarily ``better'' than the other. Because of this, we focus on optimizing multiple objectives simultaneously. While many other reasonable objectives exist, in our work we focus on the following four natural objectives:
\begin{definition} Let $A$ be a set of $k$ facilities and $\C$ be the set of clients in metric space $(M,d)$. We define the following:
\begin{itemize} 
	\item \cmm(A) = $\max_{i\in \C} \max_{a\in A} d(i,a)$
	\item \cms(A) = $\max_{i\in \C} \sum_{a\in A} d(i,a)$
	\item \csm(A) = $\sum_{i\in \C} \max_{a\in A} d(i,a)$
	\item \css(A) = $\sum_{i\in \C} \sum_{a\in A} d(i,a)$
\end{itemize}
	\label{def:objs}
\end{definition}


Our goal in this paper is to form solutions which are close-to-optimal for more than just a single objective. Formally, we want to form solutions which simultaneously approximate two objectives at the same time.

\begin{definition}
	{Simultaneous Approximation}: Let $c_1$ and $c_2$ be two different objectives, and let $\mathcal{A}$ be the set of all possible solutions. Let $O_i$ be an optimum solution for objective $c_i$, i.e., $O_i= \arg\min_{A\in \mathcal{A}} c_i(A)$. We then define the approximation ratio of solution $A$ with respect to objective $c_i$ as
	$$\alpha_{c_i}(A) = \frac{c_i(A)}{c_i(O_i)} \geq 1.$$
	Therefore, by choosing $A$, we would obtain a $(\alpha_{c_1}(A), \alpha_{c_2}(A))$ approximation for minimizing the two objectives. If we let $\alpha = \max\{\alpha_{c_1}(A), \alpha_{c_2}(A)\}$, this means that $A$ is within a factor $\alpha$ for minimizing {\em both} of the objectives. Hence, we define $\alpha$ as the simultaneous approximation ratio of $A$ for objectives $c_1$ and $c_2$.
	\label{def:sa}
\end{definition}

In other words, a solution $A$ which simultaneously approximates two objectives within a factor $\alpha$ is simply a solution which is within a factor $\alpha$ of optimum for each objective, and thus a solution which is an $\alpha$-approximation for each objective individually. 



In our work, we call a pair of objectives {\em $\alpha$-compatible} or just {\em compatible} if there always exists a solution which simultaneously approximates both objectives within some small constant $\alpha$. This shows that it is possible to (approximately) optimize both objectives at once, and we do not have to sacrifice one objective in order to form a good solution for the other. Our goal, then, is to see if the four objectives listed above are compatible with each other, as well as quantify the upper and lower bounds of the simultaneous approximation ratio for each pair of them. 

\subsection{Our Contributions} 
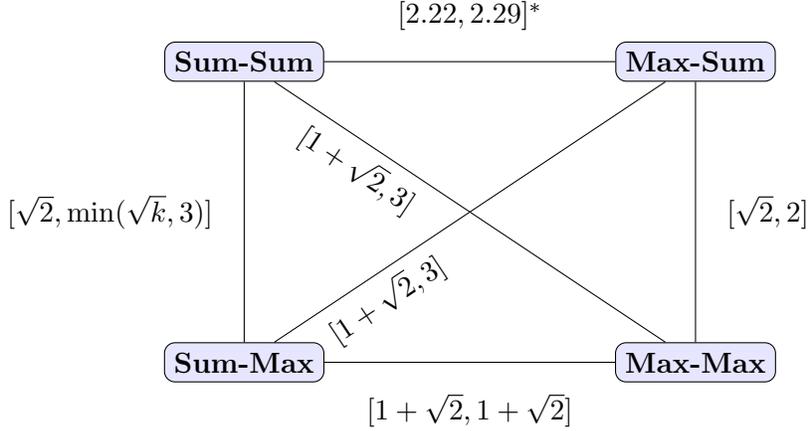
\begin{figure}[t]
  \begin{center}
   \begin{tikzpicture}
        \node[draw, fill=blue!10, rounded corners] (SS) at (-3,2) {\textbf{Sum-Sum}};
        \node[draw, fill=blue!10, rounded corners] (MS) at (3,2) {\textbf{Max-Sum}};
        \node[draw, fill=blue!10, rounded corners] (SM) at (-3,-2) {\textbf{Sum-Max}};
        \node[draw, fill=blue!10, rounded corners] (MM) at (3,-2) {\textbf{Max-Max}};
        
        \draw (SS) -- (MS) node[midway, above=8pt] {$[2.22, 2.29]^*$};
        \draw (SS) -- (SM) node[midway, left=8pt] {$[\sqrt{2}, \min(\sqrt{k},3)]$};
        \draw (MS) -- (MM) node[midway, right=8pt] {$[\sqrt{2},2]$};
        \draw (SM) -- (MM) node[midway, below=8pt] {$[1+\sqrt{2},1+\sqrt{2}]$};
        \draw (SS) -- (MM) node[midway, sloped, pos=0.25,below] {$[1+\sqrt{2},3]$};
        \draw (SM) -- (MS) node[midway, sloped, pos=0.25,below] {$[1+\sqrt{2},3]$};
    \end{tikzpicture}
        
            
            
  \caption{A summary of our results. Each line connecting two objectives includes a label showing how compatible we prove these two objectives to be. Specifically, if a line has a label $[x,y]$, this means we show there always exists a solution which is a $y$-approximation to the optimum for both of these objectives simultaneously, as well as a lower bound showing no simultaneous approximation better than $x$ is possible. Note that $(*)$ only holds when $k \geq 3$ and is an approximation for the actual bounds, $[(4 + \sqrt{7})/3\approx 2.22,1 + \sqrt{5/3}\approx 2.29]$.}
  \label{fig:results}
  \end{center}
\end{figure}

We begin by proving that for any pair of our objectives, there always exists a solution which is simultaneously a 3-approximation for both objectives. Moreover, this solution is simply the optimum solution for \css. This simple result shows that all of the objectives are 3-compatible. We then proceed to improve the 3-simultaneous approximation ratio for individual pairs of objectives; our results are summarized in Figure \ref{fig:results}. 

To do this, we first show that results from \cite{han2022optimizing} which apply to placing a single facility (i.e., $k=1$), extend without much difficulty to simultaneously approximating the pair $(\css, \cms)$, as well as the pair $(\csm, \cmm)$. This immediately implies that both pairs of objectives are $(1+\sqrt{2})$-compatible.
We then proceed with further improvements to these bounds, as shown in our main contributions:

\begin{description}
\item[\css ~and \cms] While the above approximation bound of $1+\sqrt{2}\approx 2.41$ is tight for \csm ~and \cmm, we can show better bounds for simultaneously approximating \css ~and \cms ~when choosing more than 2 facilities. In fact, in Section \ref{sec:ssms} we prove that these objectives are simultaneously approximable to within a factor of $1 + \sqrt{\frac{5}{3}} \approx 2.29$, as long as $k\geq 3$. This may be somewhat surprising, as usually things get worse and more complex as the number of facilities $k$ becomes larger. For the questions we are asking, however, the reverse turns out to be true: when $k$ is small, then only a few possible solutions may exist, and they may all be bad for at least one of the two objectives. When $k$ is large, however, many solutions become possible, and we are able to always form a solution which is good for {\em both} objectives by carefully stitching together parts of the optimum solutions for each separate objective. The solution we form is not optimum for either, but is a good approximation for both. 
\item[\cms ~and \cmm]
We then proceed to consider {\cms} and {\cmm}, {\css} and {\csm}, for which the results from \cite{han2022optimizing} can no longer be extended directly; the compatibility of these objectives has not been considered before. While we show both of these pairs of objectives are simultaneously approximable to within a factor of $\min(\sqrt{k},3)$, we are able to significantly improve this bound for the specific case of \cms ~and \cmm. We prove that these objectives are 2-compatible, showing that if we care about both of these costs, we are able to obtain a solution that is close-to-optimum for both. This result requires somewhat different techniques, and is presented in Section \ref{section:mmms}. 
\end{description}

\subsection{Related Work}

Many variants of facility location problems, as well as spacial voting problems, have been studied within many disciplines and are too much to survey here; see surveys \cite{farahani2010multiple, chan2021mechanism}. Single-facility location (i.e., $k=1$) has been especially well-studied, while building multiple facilities has received somewhat less attention to due to the added complexity. When choosing the locations for multiple facilities, the most commonly considered cost for each client is the distance from the client to their closest facility \cite{sornat2019approximation, chakrabarty2017interpolating,alamdari2017bicriteria, kumar2006fairness, tamir2001k, CHAN2023114208}. 
In our work, however, we are interested in settings where the client cares about the locations of all $k$ facilities, not just the closest one. These settings arise when the facilities being built are heterogeneous \cite{serafino2016heterogeneous} (for example one is a post office, another is a grocery store, etc, so the client has to use them all). In social choice settings, this corresponds to the idea that the voter cares about the entire committee membership, not just the member which is most similar to them. 
Thus, we instead consider other types of client costs, specifically the maximum distance to a built facility (as in \cite{chen2020facility, lotfi2024truthful, zhao2023constrained}), and the total (or average) distance to the built facilities (as in \cite{kanellopoulos2023truthful,lotfi2024truthful, zhao2024constrained,xu2021two}). 
Although we focus on the above two cost types, there are certainly other client cost functions that have been studied and are interesting. 
For example, \cite{caragiannis2022metric} studies $q$-$social$ $cost$, where the cost of choosing a committee of size $k$ for each voter is the distance from them to their $q$'th closest alternative in the committee, and \cite{gai2024two} considers the scenario where different 
clients could have different cost functions. 
For the overall objective combining the costs of the clients, max and sum are commonly used (see for example, \cite{chan2021mechanism, walsh2021strategy, CHAN2023114208,chen2020facility, lotfi2024truthful, kanellopoulos2023truthful}), but are rarely considered together under the notion of simultaneous approximation.  Note that while computing the optimum solution for some of these objectives is NP-Complete, in this work we are more focused on which objectives are compatible with each other in principle, leaving the question of poly-time simultaneous approximation for future work.

A significant amount of work also exists on optimizing several objectives at once for facility location. 
However, as described in \cite{ehrgott2000survey, farahani2010multiple}, the most commonly studied way to do this is to convert multiple objectives into a single objective. For example, \cite{mcginnis1978single, ohsawa1999geometrical} combine two objectives together using a convex combination of them. Moreover, \cite{alamdari2017bicriteria} considers a slightly different measure that would achieve a $(4,8)$ approximation for $k$-$Center$ and $k$-$Median$ problems w.r.t the optimal solution of a convex combination of the two objectives. 

Instead of such previous approaches, we attempt to approximate a pair of objectives {\em simultaneously}, so the solution formed is close-to-optimal for each of them {\em at the same time.} This notion of approximation has received far less attention (see the Related Work section in \cite{han2022optimizing} for a detailed discussion). While \cite{han2022optimizing} studied exactly this notion for exactly our setting, they only considered building one facility ($k=1$); in this paper we greatly generalize their results to $k>1$ facilities, as well as use new techniques for multiple facilities to form better approximation bounds. Although not the main focus of their work, \cite{alamdari2017bicriteria} considered this notion of approximation as well, and showed that if each client's cost is their distance to their closest facility, the simultaneous approximation ratio even for choosing two facilities for minimax and minisum can be arbitrarily large. While our results establish that many sum and max objectives are simultaneously compatible, the results from \cite{alamdari2017bicriteria} show that this is not true if we care about the distance to the closest facility instead. 
There is also other previous work, such as  \cite{kumar2006fairness, gkatzelis2020resolving}, which either considers the simultaneous approximation of max and sum objectives in our setting, or implies results that would fit under this model. However, such work only considers choosing a single facility, while our results hold for an arbitrary number of facilities $k$.



Another nice line of literature which studies facility location and committee selection exists in the area of mechanism design (see survey \cite{chan2021mechanism}). Such literature is mostly interested in finding mechanisms such that no clients have the incentive to lie about their location or preferences, i.e., a truthful (or, strategy-proof) mechanism.
In addition, much of this work is focused on choosing a single facility \cite{CHAN2023114208,chen2020facility, kanellopoulos2023truthful}, or two facilities \cite{deligkas2024agent,lotfi2024truthful,xu2021two}, and often only on a line, instead of in an arbitrary metric space. \cite{walsh2021strategy} provides a 3-simultaneous approximation deterministic strategy-proof mechanism for minimax and minisum for choosing one facility on a line with limited location options, but also shows that no deterministic strategy-proof mechanism can do better than 3.  The setting from \cite{lotfi2024truthful} may be the closest to ours in this area, which considers the two-facility location problem ($k=2$) on a line and provides a (5,11) truthful deterministic mechanism for {\cmm} and {\cms}. Furthermore, 5 and 11 are tight bounds for the truthful approximation ratio for {\cmm} and {\cms} respectively. In addition, they also showed a 22 simultaneous approximation truthful deterministic mechanism for all of the four objectives from Definition \ref{def:objs}. In this paper, we study the case where each client cares about all of the facilities in a general metric space without the consideration of strategy-proofness; our goal is to quantify what is possible to achieve when caring about multiple objectives, and which objectives are compatible with each other.

\section{Extending Previous Results to Multiple Facilities ($k>1$)}
\label{section:prelim}
Consider {the version of} the facility location problem where we are given the set $\C$ of $n$ client locations, and the multiset of possible facility locations $\F$ \footnote{Here we note that $\F$ is a multiset such that the multiplicity of a facility location in $\F$ is the number of times that location can be used/chosen.} in a metric space $(M,d)$. We want to place $k$ facilities so that the placement would simultaneously be close to optimal for several of our objectives. Denote the set of $k$ locations by $A$ such that $A \subseteq \F, |A|=k$. The set of objectives that we are interested in are chosen from objectives defined in Definition \ref{def:objs}. In this section, however, it is convenient to think more generally, and consider an arbitrary function $f:M \times M^{k} \rightarrow \mathbb{R^{+}}\cup \{0\}$ so that $f(i,A)$ denotes the cost of client $i$ for solution $A$. For such a fixed function $f$, we can also define the following.


\begin{definition}
	Let $A \subseteq \F, |A| = k$. We define {\cm} and {\cs} as follows:
	\begin{itemize}
		\item {\cm} = $\max_{i \in \C}(f(i,A))$.
		\item {\cs} = $\sum_{i \in \C}(f(i,A))$.
	\end{itemize}
	Where $f:M \times M^{k} \rightarrow \mathbb{R^{+}}\cup \{0\}$.
	\label{def:maxsum}
\end{definition}

In existing work, \cite{han2022optimizing} considered the case where $k=1$, with the cost function being \textsc{Max} and \textsc{Sum} where $f(i, A) = d(i, a), A = \{a\} \subseteq \F, i \in \C$.  Consider, however, a more general setting with $k>1$, and suppose that the function $f$ obeys the following inequality (which is essentially just the triangle inequality):
\begin{equation}
	f(i,A) \leq f(i,B) + f(j,B) + f(j,A) \qquad \forall i,j\in \C, A \subseteq \F, B \subseteq \F
	\label{equ:f}
\end{equation}
{Note that some classic objectives such as $k$-median, for which $f$ is the $\min$ function, {\em do not} obey this property. As we discuss in Section \ref{sec:examples_f}, however, many important objectives do satisfy this property.}
If $f$ satisfies this inequality, consider a new metric $(M', d')$ where each client in $\C$ and each subset of $k$ facilities in $\F$ are a point in $M'$. We denote the set of client points by $\C'$ and the set of facility combinations by $\F'$. Then, we can define a new metric $d'$ as
\begin{enumerate}
	\item $d'(i,j) = d(i,j), i,j \in \C'$ 
	\item $d'(i,A) = d'(A,i) = f(i,A), i \in \C', A \in \F'$
	\item $d'(A,B) = \min_{i\in \C'}(f(i,A)+f(i,B)),A,B \in \F'$
\end{enumerate}
Since $(M,d)$ is a metric, combined with Inequality \ref{equ:f}, we can see that $d'$ also obeys triangle inequality. Therefore, we can reduce the problem from choosing $k$ facilities in metric $(M,d)$ with clients $\C$ and facility locations $\F$ to choosing 1 facility in metric $(M',d')$ with clients $\C'$ and facility locations $\F'$.
Due to this reduction, results from \cite{han2022optimizing} immediately extend to choosing multiple facilities.\footnote{\cite{han2022optimizing} actually considers more than just $\cs$ and $\cm$. The extension we showed above also holds for all the other cost functions over all clients considered in \cite{han2022optimizing} (namely the {\it l-centrum} objectives, see Definition \ref{def:cent}).} In particular, the following results still hold:

\begin{theorem} {\bf (generalization of Theorem 3.1 from \cite{han2022optimizing})}
Given the optimal facility location set $\opts$ that minimizes $\cs$ and the optimal facility location set $\optm$ that minimizes $\cm$, we have that $\am(\opts) \leq \frac{1}{\as(\optm)} + 2$ as long as $f$ obeys Inequality \ref{equ:f}.
\label{theorem:2_gen}
\end{theorem}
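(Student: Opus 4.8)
The plan is to piggyback on the single-facility reduction just constructed in this section. Because $f$ obeys Inequality \ref{equ:f}, the constructed triple $(M',d')$ is a genuine metric space in which every size-$k$ facility set $A$ is a single point of $\F'$ and the client cost collapses to $d'(i,A)=f(i,A)$. Under this identification $\cm$ and $\cs$ become precisely the ordinary single-facility \textsc{Max} and \textsc{Sum} objectives, while $\optm$ and $\opts$ become their respective optimizers. The cleanest route is therefore to observe that the reduction preserves both objectives together with their optima, so that the desired inequality $\am(\opts)\leq \frac{1}{\as(\optm)}+2$ is word-for-word Theorem 3.1 of \cite{han2022optimizing} applied inside $(M',d')$, and simply invoke it.

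To keep the argument self-contained I would re-run that proof directly in $(M',d')$. Writing $n=|\C|$, the first step is the triangle inequality in $(M',d')$: for every client $i$ we have $d'(i,\opts)\leq d'(i,\optm)+d'(\optm,\opts)$, and maximizing over $i$ gives $\cm(\opts)\leq \cm(\optm)+d'(\optm,\opts)$. The second step bounds the facility-to-facility distance: by definition $d'(\optm,\opts)=\min_{j\in\C}[f(j,\optm)+f(j,\opts)]$, and since a minimum is at most an average, $d'(\optm,\opts)\leq \frac1n[\cs(\optm)+\cs(\opts)]$. The third step uses the trivial domination $\cs(\optm)=\sum_{i}f(i,\optm)\leq n\,\cm(\optm)$, so that $\frac1n\cs(\optm)\leq \cm(\optm)$. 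Combining the three gives $\cm(\opts)\leq 2\,\cm(\optm)+\frac1n\cs(\opts)$.

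I would then divide through by $\cm(\optm)$ to obtain $\am(\opts)\leq 2+\frac{\cs(\opts)}{n\,\cm(\optm)}$, and apply the domination $\cs(\optm)\leq n\,\cm(\optm)$ a second time, now in the denominator, to upgrade the leftover term to $\frac{\cs(\opts)}{\cs(\optm)}=\frac{1}{\as(\optm)}$ (using that $\opts$ minimizes $\cs$). This yields exactly $\am(\opts)\leq \frac{1}{\as(\optm)}+2$.

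The main obstacle, and the only genuinely new ingredient relative to the $k=1$ case, is the soundness of the reduction rather than the arithmetic: one must be confident that $d'$ really obeys the triangle inequality in all mixed cases (client--client, client--facility, and facility--facility), which is exactly what Inequality \ref{equ:f} buys us, and that the single-facility optima of $(M',d')$ correspond bijectively to the $k$-set optima of the original instance so that $\optm,\opts$ transfer correctly. Once the space is known to be metric, everything downstream is routine; the only mild subtlety to flag is the \emph{double} use of $\cs(\optm)\leq n\,\cm(\optm)$ --- once to absorb a term into the additive ``$+2$'' and once in the denominator to recover $1/\as(\optm)$.
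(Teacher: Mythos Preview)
Your proposal is correct and follows essentially the same approach as the paper: both arguments hinge on the reduction to the single-facility metric $(M',d')$ and then invoke (or, in your case, additionally unpack) Theorem~3.1 of \cite{han2022optimizing}. The self-contained rederivation you give is sound, and the paper simply cites the old theorem after establishing the reduction.
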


\begin{corollary} {\bf (generalization of Corollary 3.1.1 from \cite{han2022optimizing})} When $f$ obeys Inequality \ref{equ:f}, we have that
\begin{enumerate}
	\item By choosing the optimal facility location set $\opts$ that minimizes $\cs$, we obtain a $(3, 1)$ approximation for simultaneously minimizing $\cm$ and $\cs$. 
	\item There always exists a facility location set $A\subseteq\F$ such that choosing $A$ would give a $1+\sqrt{2}$ approximation both for minimizing $\cm$ and minimizing $\cs$. In fact, we would either get a $(1, 1+\sqrt{2})$ approximation by choosing $\optm$ or a $(1+\sqrt{2}, 1)$ approximation by choosing $\opts$. In other words, at least one of $\am(\opts)$ or $\as(\optm)$ is always less than or equal to $1+\sqrt{2}$. 
\end{enumerate}
\label{coro:bound_2}
\end{corollary}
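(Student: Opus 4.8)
The plan is to obtain both parts directly from Theorem \ref{theorem:2_gen}, which already bounds $\am(\opts)$ in terms of $\as(\optm)$. The first thing I would record are two trivial-but-essential facts about the candidate solutions: since $\opts$ minimizes $\cs$ we have $\as(\opts)=1$, and since $\optm$ minimizes $\cm$ we have $\am(\optm)=1$. Moreover, every approximation ratio is at least $1$ by Definition \ref{def:sa}; in particular $\as(\optm)\geq 1$. With these in hand, neither part requires any new geometry — the content is entirely in combining Theorem \ref{theorem:2_gen} with these normalizations.

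For the first part I would simply feed $\as(\optm)\geq 1$ into Theorem \ref{theorem:2_gen}. Since $\frac{1}{\as(\optm)}\leq 1$, the theorem immediately gives $\am(\opts)\leq 1+2=3$. Pairing this with $\as(\opts)=1$ shows that the single solution $\opts$ is a $(3,1)$ approximation for $(\cm,\cs)$, which is exactly the claim.

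For the second part I would set $x=\as(\optm)$ and prove that at least one of $\am(\opts),x$ is at most $1+\sqrt{2}$ by a short case split. If $x\leq 1+\sqrt{2}$, then choosing $\optm$ yields the pair $(\am(\optm),\as(\optm))=(1,x)$, i.e.\ a $(1,1+\sqrt{2})$ approximation. Otherwise $x>1+\sqrt{2}$, and using $\frac{1}{1+\sqrt{2}}=\sqrt{2}-1$, Theorem \ref{theorem:2_gen} gives $\am(\opts)\leq \frac{1}{x}+2<(\sqrt{2}-1)+2=1+\sqrt{2}$; choosing $\opts$ then yields $(\am(\opts),\as(\opts))$, a $(1+\sqrt{2},1)$ approximation. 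In either case the better of the two candidate solutions $\opts,\optm$ is a $(1+\sqrt{2})$-simultaneous approximation, which proves the statement and exhibits the explicit solution.

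The only substantive point beyond this bookkeeping is understanding why $1+\sqrt{2}$ is the correct threshold, and I expect this to be the part worth highlighting rather than any genuine difficulty. The value is the balance point between the two candidate bounds: it is the positive fixed point of the map $x\mapsto \frac{1}{x}+2$, equivalently the positive root of $x^2-2x-1=0$. At that point the guarantees from choosing $\opts$ and from choosing $\optm$ coincide, so $1+\sqrt{2}$ is exactly the constant the case analysis cannot improve upon — any smaller target would fail when $x$ sits near this fixed point. Since everything reduces to this one-variable inequality, there is no real obstacle; the work is choosing the threshold correctly and checking the two cases.
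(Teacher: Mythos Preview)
Your proposal is correct and matches the paper's approach: the paper states this corollary as an immediate consequence of Theorem \ref{theorem:2_gen} without writing out a separate proof, and the case split you perform on $\as(\optm)$ is exactly the argument the paper uses elsewhere for the analogous statements (e.g., Corollary \ref{coro:2_gen}).
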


\subsection{Examples of $f$}\label{sec:examples_f}
We will now look at some functions $f$ that obey Inequality \ref{equ:f}. Some examples of $f$ include using the centroid and any norms of the set of chosen facilities, since these kinds of functions would map each possible facility set/committee (set of $k$ facilities) into a new metric space, which means that $f$ would obey Inequality \ref{equ:f}. In addition, \cite{caragiannis2022metric} showed that the $q$-$social$ $cost$ of a committee of size $k$ with $q > k/2$ also obeys Inequality\ref{equ:f}. Note that for the four objectives that we consider ({\cms}, {\css}, {\cmm}, and {\csm}), we use the summation function and the maximum function as $f$. In fact, these two functions also obey Inequality \ref{equ:f}, as we prove below, immediately implying that both ({\cms}, {\css}) and ({\cmm}, {\csm}) are $(1+\sqrt{2})$-compatible. 

\begin{lemma}
	Let 
	$$f(i,A) = \sum_{a \in A}d(i,a) \qquad \forall i \in \C, A \subseteq \F,$$
	then we have that 
	$$f(i,A) \leq f(i,B) + f(j,B) + f(j,A) \qquad \forall i,j\in \C, A \subseteq \F, B \subseteq \F, |A| = |B| = k.$$
	\begin{proof}
		For clarity, let $A = \{a_1, a_2, \cdots, a_k\} \subseteq \F$, $B = \{b_1, b_2, \cdots, b_k\} \subseteq \F$, $i,j \in \C$, we then want to show that $f$ obeys the above inequality. Note that all clients and facility locations are in some metric space $(M,d)$, using triangle inequality we have  
\[
\begin{aligned}
	f(i,A) &= d(i, a_1) + d(i, a_2) + \cdots + d(i, a_k)\\
	&\leq d(i, b_1) + d(i, b_2) + \cdots + d(i, b_k) + d(a_1, b_1) + d(a_2, b_2) + \cdots + d(a_k, b_k)\\
	&\leq f(i, B) + d(j, a_1) + d(j, a_2) + \cdots + d(j, a_k) + d(j, b_1) + d(j, b_2) + \cdots + d(j, b_k)\\
	&= f(i,B) + f(j,B) + f(j, A).
\end{aligned}
\]
Since the choice of $A,B,i,j$ are arbitrary, we can conclude that $f$ indeed satisfies the statement. 
	\end{proof}
\label{lemma:sumf}
\end{lemma}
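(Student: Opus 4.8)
The plan is to exploit the additive structure of $f$ and reduce the claim to a term-by-term application of the triangle inequality in $(M,d)$. Since $|A| = |B| = k$, I would first fix an arbitrary bijection between the two facility sets, say indexing them as $A = \{a_1, \dots, a_k\}$ and $B = \{b_1, \dots, b_k\}$ and pairing $a_\ell$ with $b_\ell$ for each $\ell$. The equal-cardinality hypothesis is precisely what makes this pairing possible, and it is the only structural fact about $A$ and $B$ that the argument needs — the particular choice of pairing is irrelevant, since we will ultimately sum over all indices.

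The core of the argument is to bound a single summand $d(i, a_\ell)$ by routing it through both $b_\ell$ and $j$. Applying the triangle inequality once gives $d(i, a_\ell) \leq d(i, b_\ell) + d(b_\ell, a_\ell)$, and applying it a second time to the last term via the point $j$ gives $d(b_\ell, a_\ell) \leq d(b_\ell, j) + d(j, a_\ell)$. Chaining these yields, for every index $\ell$,
\[
    d(i, a_\ell) \leq d(i, b_\ell) + d(j, b_\ell) + d(j, a_\ell).
\]
This is exactly the target inequality restricted to the $\ell$-th matched pair of facilities.

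To finish, I would sum this inequality over all $\ell$ from $1$ to $k$. Because $f$ is defined as a sum of the individual distances, the three groups of summands on the right collapse precisely into $f(i,B)$, $f(j,B)$, and $f(j,A)$, while the left-hand side collapses into $f(i,A)$, giving the claimed bound. Since $i, j, A, B$ were arbitrary subject to $|A| = |B| = k$, the conclusion holds in general.

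I do not anticipate a genuine obstacle here; the statement is essentially a linearity-plus-triangle-inequality fact. The one point worth stating carefully is that the bound holds for \emph{any} pairing, so no optimization over matchings is needed — one simply needs both sets to have the same size $k$ so that a bijection exists at all. (This also highlights why the $\min$ function underlying $k$-median fails Inequality \ref{equ:f}: there, the contribution of each client comes from a single best facility rather than a sum over a fixed matching, so the term-by-term routing breaks down.)
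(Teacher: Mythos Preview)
Your proposal is correct and is essentially identical to the paper's proof: both fix an arbitrary bijection between $A$ and $B$, apply the triangle inequality twice per matched pair (once through $b_\ell$, once through $j$), and then sum over the $k$ indices. The only cosmetic difference is that the paper sums after the first application and then bounds $\sum_\ell d(a_\ell,b_\ell)$ via $j$, whereas you chain both inequalities pointwise before summing.
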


\begin{lemma}
	Let 
	$$f(i,A) = \max_{a \in A}d(i,a) \qquad \forall i \in \C, A \subseteq \F,$$
	then we have that 
	$$f(i,A) \leq f(i,B) + f(j,B) + f(j,A) \qquad \forall i,j\in \C, A \subseteq \F, B \subseteq \F.$$
	\begin{proof}
		Let $a^* \in A \subseteq \F, b^* \in B \subseteq \F, i,j \in \C$ such that $d(i,a^*) = \max_{a\in A}d(i,a),$ $ d(i,b^*) = \max_{b\in B}d(i,b)$. Note that we also have that $d(j, b^*) \leq \max_{b\in B}d(j,b),$ $ d(j, a^*) \leq \max_{a\in A}d(j,a)$. Then, by triangle inequality we have
\[
\begin{aligned}
	f(i,A) &= d(i,a^*)\\
	&\leq d(i, b^*) + d(j, b^*) + d(j, a^*)\\
	&\leq \max_{b\in B}d(i,b) + \max_{b\in B}d(j,b) + \max_{a\in A}d(j,a)\\
	&= f(i,B) + f(j,B) + f(j,A).
\end{aligned}
\]
Since the choice of $A,B,i,j$ are arbitrary, we can conclude that $f$ indeed satisfies the statement. 
	\end{proof}
\label{lemma:maxf}
\end{lemma}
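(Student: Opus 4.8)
The plan is to reduce the inequality to a single application of the triangle inequality along a carefully chosen four-point walk. I would fix arbitrary $i,j \in \C$ and $A,B \subseteq \F$, and let $a^* \in A$ be a facility attaining the maximum that defines the left-hand side, so that $f(i,A) = d(i,a^*)$. Since every other term in the statement is itself a maximum of distances, it suffices to produce an upper bound on the \emph{single} distance $d(i,a^*)$ in which each summand can be charged to one of $f(i,B)$, $f(j,B)$, and $f(j,A)$.

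First I would pick an element $b^* \in B$; it is convenient, following the $\cs$ case of Lemma \ref{lemma:sumf}, to take $b^*$ to be the maximizer of $d(i,b)$ over $B$, though in fact any element of $B$ works here. Next I would apply the triangle inequality in $(M,d)$ along the walk $i \to b^* \to j \to a^*$, which gives $d(i,a^*) \leq d(i,b^*) + d(b^*,j) + d(j,a^*)$. The final step is to replace each of these three legs by the relevant maximum: $d(i,b^*) \leq \max_{b \in B} d(i,b) = f(i,B)$, then $d(j,b^*) \leq \max_{b \in B} d(j,b) = f(j,B)$, and finally $d(j,a^*) \leq \max_{a \in A} d(j,a) = f(j,A)$. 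Chaining these three bounds yields $f(i,A) \leq f(i,B) + f(j,B) + f(j,A)$, and since $i,j,A,B$ were arbitrary the claim follows.

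The only place that requires a bit of care — the ``main obstacle,'' such as it is — is that $a^*$ was selected to maximize the distance from $i$, not from $j$, so when I charge the leg $d(j,a^*)$ to $f(j,A)$ I must rely on the correct fact that $f(j,A) = \max_{a \in A} d(j,a) \geq d(j,a^*)$ simply \emph{because} $a^* \in A$, and not on any relationship between $a^*$ and the maximizer of $d(j,\cdot)$ over $A$. The analogous remark applies to the middle leg and $b^*$. Everything else is routine; and unlike the $k$-median / $\min$ objective (which, as noted in the text, fails Inequality \ref{equ:f}), the $\max$ function aligns perfectly with this three-term routing precisely because a maximum of distances dominates each individual distance in the set.
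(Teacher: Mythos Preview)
Your proof is correct and follows essentially the same argument as the paper: choose $a^*\in A$ realizing $f(i,A)$, route via some $b^*\in B$ and $j$ using the triangle inequality, and bound each leg by the corresponding maximum. Your observation that any $b^*\in B$ suffices (not just the maximizer of $d(i,\cdot)$ over $B$) is a minor sharpening, but the approach is the same.
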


Unfortunately, if the cost for each client $i \in \C$ is the distance from them to the closest facility in the chosen set $A \subseteq \F$, 
$$f(i,A) = \min_{a \in A}d(i,a),$$
then such cost functions do not obey Inequality \ref{equ:f}. In addition, as we have discussed previously, the simultaneous approximation ratio for $\textsc{Max-Min}$ and $\textsc{Sum-Min}$ is unbounded.

\section{3-Compatibility of our objectives}
We are interested in simultaneously approximating all pairs of the objectives defined in Definition \ref{def:objs}. Before analyzing each pair more carefully, we first make the observation that the optimal solution of \textsc{Sum-Sum} is a 3 approximation for all the four objectives as defined in Definition \ref{def:objs}. To show this, we will first consider \textsc{Sum-Sum} and \textsc{Max-Sum}. In fact, by Lemma \ref{lemma:sumf} we note that this is a special case of Corollary \ref{coro:bound_2} and can obtain the following result.

\begin{corollary}
	The optimal solution of \textsc{Sum-Sum} is a 3 approximation for \textsc{Max-Sum}.
	\label{coro:3ssms}
\end{corollary}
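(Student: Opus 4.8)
The plan is to recognize this statement as an immediate instantiation of Corollary~\ref{coro:bound_2} under the specific choice of per-client cost function $f(i,A) = \sum_{a\in A} d(i,a)$. With this $f$, the aggregate objective $\cs$ coincides exactly with $\css$ and $\cm$ coincides with $\cms$; consequently $\opts$ is precisely the optimal \textsc{Sum-Sum} solution, and $\am(\opts)$ is exactly the \textsc{Max-Sum} approximation ratio of that solution. So the entire task reduces to checking that the general machinery applies and then reading off the correct coordinate of the resulting approximation pair.

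First I would verify that this $f$ is eligible for that machinery, namely that it obeys Inequality~\ref{equ:f}. This is exactly the content of Lemma~\ref{lemma:sumf}, which has already established the triangle-inequality-style bound $f(i,A) \leq f(i,B) + f(j,B) + f(j,A)$ for the summation cost. Hence the hypothesis of Corollary~\ref{coro:bound_2} is satisfied, with no further work required.

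With the hypothesis in hand, I would simply invoke part~1 of Corollary~\ref{coro:bound_2}: choosing $\opts$ yields a $(3,1)$ simultaneous approximation for $\cm$ and $\cs$. Reading off the first coordinate gives $\am(\opts) \leq 3$, which, after translating back to the concrete objectives via $f = \sum$, says exactly that the optimal \textsc{Sum-Sum} solution is a $3$-approximation for \textsc{Max-Sum}.

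There is essentially no obstacle here; the substantive work has already been carried out in proving that the summation cost satisfies Inequality~\ref{equ:f} (Lemma~\ref{lemma:sumf}) and in establishing the general $(3,1)$ bound (Corollary~\ref{coro:bound_2}), which in turn rests on the metric reduction to the single-facility setting and the generalization of the results of \cite{han2022optimizing}. The only point requiring care is the bookkeeping of which aggregation maps to which objective under the substitution $f = \sum$, so that the $(3,1)$ pair is correctly attributed to $(\cms, \css)$ and the $3$ is assigned to the \textsc{Max-Sum} coordinate rather than reversed.
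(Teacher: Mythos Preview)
Your proposal is correct and matches the paper's approach exactly: the paper states just before the corollary that ``by Lemma~\ref{lemma:sumf} we note that this is a special case of Corollary~\ref{coro:bound_2},'' which is precisely the two-step argument (verify Inequality~\ref{equ:f} for the summation cost, then read off part~1 of Corollary~\ref{coro:bound_2}) that you have spelled out.
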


To continue on to the other two objectives, we will first show a useful lemma. Let $O$ be the optimal solution for \textsc{Sum-Sum}, $A$ be the optimal solution for \textsc{Sum-Max} and $B$ be the optimal solution for \textsc{Max-Max}. In addition, let $o^* = \operatorname{argmax}_{o \in O} \sum_{i \in \C}d(i, o)$, $a^* = \operatorname{argmax}_{a \in A} \sum_{i \in \C}d(i, a)$, $b^* = \operatorname{argmax}_{b \in B} \sum_{i \in \C}d(i, b)$. For each $i \in \C$, let $o_i = \operatorname{argmax}_{o\in O}d(i,o)$, $a_i = \operatorname{argmax}_{a\in A}d(i,a)$, $b_i = \operatorname{argmax}_{b\in B}d(i,b)$.

\begin{lemma}
	For any $P \subseteq \F, |P| = k, p^* = \operatorname{argmax}_{p \in P} \sum_{i \in \C}d(i, p)$, we have that $\sum_{i \in \C}d(i, o^*) \leq \sum_{i \in \C}d(i, p^*)$.
	\begin{proof}
		First note that if $P = O$, then it is trivial that $\sum_{i \in \C}d(i, o^*) \leq \sum_{i \in \C}d(i, p^*)$. Therefore, we assume $P \neq O$. We will prove the above claim using contradiction. Assume otherwise, $\sum_{i \in \C}d(i, o^*) > \sum_{i \in \C}d(i, p^*)$. Note that since $P\neq O, |P| = |O| = k$, by pigeonhole principle there must exist some $p \in P$ such that $p \notin O$. Recall that $p^* = \operatorname{argmax}_{p \in P} \sum_{i \in \C}d(i, p)$, this means that $\sum_{i \in \C}d(i, o^*) > \sum_{i \in \C}d(i, p^*) \geq \sum_{i \in \C}d(i, p)$. Now, let $O' = \{p\}\cup O \setminus\{o^*\}$, we have that $\sum_{i\in \C} \sum_{o'\in O'} d(i,o') = \sum_{i\in \C} \sum_{o\in O} d(i,o) - \sum_{i \in \C}d(i, o^*) + \sum_{i \in \C}d(i, p) < \sum_{i\in \C} \sum_{o\in O} d(i,o)$, which is a contradiction. Therefore, we can conclude that $\sum_{i \in \C}d(i, o^*) \leq \sum_{i \in \C}d(i, p^*)$.
	\end{proof}
	\label{lemma:sumsum}
\end{lemma}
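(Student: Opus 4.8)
The plan is to exploit the fact that the \css~objective decomposes additively over the chosen facilities. For each facility location $x \in \F$, define its \emph{load} $\ell(x) = \sum_{i\in\C} d(i,x)$, the total distance from $x$ to all clients. Then for any set $A$ of $k$ facilities we have $\css(A) = \sum_{i\in\C}\sum_{a\in A} d(i,a) = \sum_{a\in A}\ell(a)$, so the \css~cost is simply the sum of the loads of the chosen facilities, and the quantity $\sum_{i\in\C} d(i,p^*)$ appearing in the statement is exactly $\max_{p\in P}\ell(p)$.

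With this observation the claim becomes purely combinatorial. Because $\css(A)=\sum_{a\in A}\ell(a)$, the \css-optimal set $O$ must consist of the $k$ facility locations of smallest load (counting multiplicity in $\F$). Hence $\ell(o^*)=\max_{o\in O}\ell(o)$ equals the $k$-th smallest load value over $\F$. For an arbitrary $P$ with $|P|=k$, the element $p^*$ satisfies $\ell(p^*)=\max_{p\in P}\ell(p)$; since fewer than $k$ facility slots of $\F$ can have load strictly below $\ell(o^*)$, the $k$ facilities of $P$ cannot all lie below this threshold, so $\ell(p^*)\ge\ell(o^*)$, which is precisely the desired inequality.

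Equivalently, and perhaps cleaner to write up rigorously, I would argue by an exchange/contradiction step. Suppose $P\ne O$ and, for contradiction, that $\sum_{i\in\C}d(i,o^*) > \sum_{i\in\C}d(i,p^*)$, i.e.\ $\ell(o^*)>\ell(p^*)$. Since $|P|=|O|=k$ and $P\ne O$, the pigeonhole principle yields some $p\in P$ with $p\notin O$, and by the choice of $p^*$ we have $\ell(p)\le\ell(p^*)<\ell(o^*)$. Forming $O' = (O\setminus\{o^*\})\cup\{p\}$ keeps exactly $k$ facilities and changes the \css~cost by $\ell(p)-\ell(o^*)<0$, so $\css(O')<\css(O)$, contradicting the optimality of $O$.

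The step requiring the most care is the handling of the multiset structure of $\F$: when I claim the optimum picks the $k$ smallest loads, or when I replace $o^*$ by $p$ in the exchange argument, I must ensure the resulting set is a feasible selection respecting the multiplicities of $\F$. This is exactly where the condition $p\notin O$ together with $p\in P\subseteq\F$ is used, guaranteeing that swapping in $p$ does not exceed its available multiplicity. Ties among load values are harmless, since any tie-breaking choice of the $k$ smallest-load facilities yields the same value of $\ell(o^*)$.
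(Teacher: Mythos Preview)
Your proposal is correct and your exchange/contradiction argument is essentially identical to the paper's proof: assume $\ell(o^*)>\ell(p^*)$, pick $p\in P\setminus O$ via pigeonhole, and swap $o^*$ for $p$ to contradict optimality of $O$. Your additional ``load'' notation and the alternative direct argument (that $O$ consists of the $k$ smallest-load facilities, so $\ell(o^*)$ is the $k$-th smallest load) are nice clarifying additions not present in the paper, but the core argument matches.
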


Then, we can show the following results.

\begin{theorem}
	The optimal solution of \textsc{Sum-Sum} is a 3 approximation for \textsc{Sum-Max}.
	\begin{proof}
		By triangle inequality, we have that 
		\[
		\begin{aligned}
			\textsc{Sum-Max}(O) &= \sum_{i\in \C} d(i, o_i)\\
			&\leq \sum_i \left( d(i, a_i) + d(a_i, o_i)\right)\\
			&\leq \textsc{Sum-Max}(A) + \sum_i \frac{1}{n} \sum_j d(j, a_i) + \sum_i \frac{1}{n} \sum_j d(j, o_i)\\
			&\leq \textsc{Sum-Max}(A) + \sum_i \frac{1}{n} \sum_j d(j, a^*) + \sum_i \frac{1}{n} \sum_j d(j, o^*)\\
		\end{aligned}
		\]
		The last inequality is true due to how $a^*$ and $o^*$ are defined. Now, by Lemma \ref{lemma:sumsum}, we have $\sum_{i \in \C}d(i, o^*) \leq \sum_{i \in \C}d(i, a^*)$, therefore, 
		\[
		\begin{aligned}
			\textsc{Sum-Max}(O) &\leq \textsc{Sum-Max}(A) + \sum_i \frac{1}{n} \sum_j d(j, a^*) + \sum_i \frac{1}{n} \sum_j d(j, a^*)\\
			&\leq \textsc{Sum-Max}(A) + 2\sum_j d(j, a^*)\\
			&\leq \textsc{Sum-Max}(A) + 2\sum_i d(i, a_i)\\
			&= 3\cdot\textsc{Sum-Max}(A)
		\end{aligned}
		\]
		as desired. 
	\end{proof}
	\label{thm:3sssm}
\end{theorem}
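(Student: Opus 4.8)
The plan is to start from the definition $\csm(O) = \sum_{i\in\C} d(i,o_i)$ and to bound the cost of $O$ client-by-client, routing each term through the optimal \csm{} solution $A$. First I would apply the triangle inequality $d(i,o_i) \le d(i,a_i) + d(a_i,o_i)$ for every client $i$. Summing over $i$ immediately peels off one copy of $\csm(A) = \sum_i d(i,a_i)$, so the whole task reduces to controlling the residual sum $\sum_i d(a_i,o_i)$ of distances between the ``farthest'' facilities of $A$ and of $O$.

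The crux is that $a_i$ and $o_i$ are facility points, so $d(a_i,o_i)$ has no direct relationship to any client's objective. To fix this I would insert an arbitrary client $j$ as a witness, $d(a_i,o_i) \le d(j,a_i) + d(j,o_i)$, and then average this inequality over all $n$ clients, obtaining $d(a_i,o_i) \le \frac1n\sum_j d(j,a_i) + \frac1n\sum_j d(j,o_i)$. This converts each facility-to-facility distance into an average of client-to-facility distances. Now the definitions of $a^*$ and $o^*$ do the work: since $a_i\in A$ and $a^*$ is the facility of $A$ maximizing total client distance, $\sum_j d(j,a_i)\le\sum_j d(j,a^*)$, and likewise $\sum_j d(j,o_i)\le\sum_j d(j,o^*)$.

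Summing these bounds over all $i$, the factor $\frac1n$ cancels against the outer sum over the $n$ clients, leaving $\csm(O) \le \csm(A) + \sum_j d(j,a^*) + \sum_j d(j,o^*)$. At this point I would invoke Lemma~\ref{lemma:sumsum}, whose whole purpose is to guarantee that $O$'s heaviest facility is no heavier than any other solution's; applied to $P=A$ it gives $\sum_j d(j,o^*)\le\sum_j d(j,a^*)$, collapsing the last two terms into $2\sum_j d(j,a^*)$. Finally, because $a^*\in A$ forces $d(j,a^*)\le \max_{a\in A} d(j,a)=d(j,a_j)$ for every $j$, I get $\sum_j d(j,a^*)\le \csm(A)$, and the bound becomes $\csm(O)\le 3\,\csm(A)$, as required.

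The step I expect to be the main obstacle is the middle one: the quantity we are forced to bound, $d(a_i,o_i)$, lives between two facilities and bears no a priori connection to the \csm{} objective. The idea that unlocks the proof is to average the triangle-inequality detour over \emph{all} clients rather than committing to a single witness, so that the stubborn facility-to-facility distance is replaced by the total client distance to each facility --- precisely the quantity that the definitions of $a^*$, $o^*$ and Lemma~\ref{lemma:sumsum} are tailored to control.
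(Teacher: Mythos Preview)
Your proposal is correct and follows essentially the same approach as the paper's proof: the same triangle-inequality split $d(i,o_i)\le d(i,a_i)+d(a_i,o_i)$, the same averaging-over-clients trick to bound $d(a_i,o_i)$, the same replacement of $a_i,o_i$ by $a^*,o^*$, the same invocation of Lemma~\ref{lemma:sumsum}, and the same final step $\sum_j d(j,a^*)\le\csm(A)$. Even your identification of the ``main obstacle'' matches the paper's pivotal move.
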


\begin{theorem}
	The optimal solution of \textsc{Sum-Sum} is a 3 approximation for \textsc{Max-Max}.
	\begin{proof}
		Let $i = \operatorname{argmax}_{j \in \C} d(j, o_j)$. Similiar to the proof of Theorem \ref{thm:3sssm}, by triangle inequality, we have that 
		\[
		\begin{aligned}
			\textsc{Max-Max}(O) &= d(i, o_i)\\
			&\leq d(i, b_i) + d(b_i, o_i)\\
			&\leq \textsc{Max-Max}(B) + \frac{1}{n} \sum_j d(j, b_i) + \frac{1}{n} \sum_j d(j, o_i)\\
			&\leq \textsc{Max-Max}(B) + \frac{1}{n} \sum_j d(j, b^*) + \frac{1}{n} \sum_j d(j, o^*)\\
		\end{aligned}
		\]
		The last inequality is true due to how $b^*$ and $o^*$ are defined. Now, by Lemma \ref{lemma:sumsum}, we have $\sum_{i \in \C}d(i, o^*) \leq \sum_{i \in \C}d(i, b^*)$, therefore, 
		\[
		\begin{aligned}
			\textsc{Max-Max}(O) &\leq \textsc{Max-Max}(B) + \frac{1}{n} \sum_j d(j, b^*) +  \frac{1}{n} \sum_j d(j, b^*)\\
			&\leq \textsc{Max-Max}(B) + \frac{2}{n}\sum_j d(j, b^*)\\
			&\leq \textsc{Max-Max}(B) + 2 \max_{j \in \C}d(j, b_j)\\
			&= 3\cdot\textsc{Max-Max}(B)
		\end{aligned}
		\]
		as desired. 
	\end{proof}
	\label{thm:3ssmm}
\end{theorem}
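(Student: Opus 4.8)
The plan is to mirror the argument of Theorem \ref{thm:3sssm}, replacing the outer sum over clients by a single worst-case client. First I would let $i = \operatorname{argmax}_{j\in\C} d(j, o_j)$ be the client realizing the Max-Max cost of $O$, so that $\textsc{Max-Max}(O) = d(i, o_i)$. The whole task then reduces to charging this one distance against $\textsc{Max-Max}(B)$.

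The next step is a triangle-inequality detour through the facility $b_i \in B$ that is farthest from this particular client $i$, giving $d(i,o_i) \le d(i,b_i) + d(b_i,o_i)$. The term $d(i,b_i) = \max_{b\in B} d(i,b)$ is immediately at most $\textsc{Max-Max}(B)$, so the real work lies in bounding the inter-facility distance $d(b_i,o_i)$, where $b_i \in B$ and $o_i \in O$ need not be close a priori.

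The key step, and the main obstacle, is to control $d(b_i, o_i)$ by an averaging argument over clients. For every client $j$, the triangle inequality gives $d(b_i,o_i) \le d(j,b_i) + d(j,o_i)$; averaging over all $n$ clients yields $d(b_i,o_i) \le \frac{1}{n}\sum_j d(j,b_i) + \frac{1}{n}\sum_j d(j,o_i)$. I would then replace $b_i$ by $b^*$ and $o_i$ by $o^*$ using the definitions of $b^*$ and $o^*$ as the total-distance-maximizing facilities of $B$ and $O$ respectively, which only increases each sum. This converts the per-facility distances into the Sum-Sum-style quantities that the earlier lemma can handle.

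Finally, I would invoke Lemma \ref{lemma:sumsum} with $P = B$ to obtain $\sum_j d(j,o^*) \le \sum_j d(j,b^*)$, collapsing the two averaged sums into $\frac{2}{n}\sum_j d(j,b^*)$. Since $d(j,b^*) \le \max_{b\in B} d(j,b) = d(j,b_j) \le \textsc{Max-Max}(B)$ for every $j$, this average is itself at most $\textsc{Max-Max}(B)$. Combining everything gives $d(i,o_i) \le \textsc{Max-Max}(B) + 2\,\textsc{Max-Max}(B) = 3\,\textsc{Max-Max}(B)$, as required. The only place demanding genuine care is the swap to $o^*, b^*$ followed by Lemma \ref{lemma:sumsum}: it is precisely this comparison that lets us discard $O$'s contribution in favor of $B$'s, and it is what keeps the final constant at $3$.
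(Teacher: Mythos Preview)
Your proposal is correct and follows essentially the same route as the paper's proof: same choice of $i$, same triangle detour through $b_i$, same averaging over clients, same swap to $b^*,o^*$, same appeal to Lemma~\ref{lemma:sumsum}, and the same final bound $\frac{2}{n}\sum_j d(j,b^*)\le 2\,\textsc{Max-Max}(B)$. The only cosmetic difference is that you spell out explicitly that $d(j,b^*)\le d(j,b_j)\le \textsc{Max-Max}(B)$ pointwise before averaging, whereas the paper compresses this into one line.
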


Finally, combining Corollary \ref{coro:3ssms}, Theorem \ref{thm:3sssm}, Theorem \ref{thm:3ssmm}, we can see that 

\begin{theorem}
	The optimal solution of \textsc{Sum-Sum} is a 3 approximation for \textsc{Max-Sum}, \textsc{Max-Max} and \textsc{Sum-Max}.
	\label{thm:3ssall}
\end{theorem}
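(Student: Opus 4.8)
The plan is to recognize that this statement is simply the conjunction of the three results already established, so the proof should just assemble them rather than introduce anything new. Corollary \ref{coro:3ssms} certifies that the optimum for \css{} is a $3$-approximation for \cms, Theorem \ref{thm:3sssm} does the same for \csm, and Theorem \ref{thm:3ssmm} for \cmm. Since these three objectives are precisely the three named in the statement, combining the bounds immediately yields the claim, with no further estimate required.

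It is worth recording why the three pieces fit together, since two of them rest on a shared argument. The \cms{} bound comes from the general machinery of Section \ref{section:prelim}: taking $f$ to be the summation function (Lemma \ref{lemma:sumf}), the pair $(\css, \cms)$ becomes an instance of $(\cs, \cm)$ in the lifted metric $(M', d')$, and Corollary \ref{coro:bound_2} supplies the factor $3$ directly. The \csm{} and \cmm{} bounds are not captured by that reduction; instead they follow the common template of Theorems \ref{thm:3sssm} and \ref{thm:3ssmm}, namely: bound each client's cost under $O$ by first hopping to the corresponding optimal facility via the triangle inequality, then control the residual facility-to-facility distance by averaging it over all clients, and finally collapse the averaged terms using Lemma \ref{lemma:sumsum}.

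If one were to prove the \csm{} and \cmm{} bounds from scratch rather than quote them, the one genuinely delicate point — and hence the step I expect to be the main obstacle — is controlling the cross term $d(a_i, o_i)$ (respectively $d(b_i, o_i)$), a distance between two facilities that neither objective penalizes directly. The device that resolves this is the averaging inequality $d(a_i, o_i) \le \frac{1}{n}\sum_j \left( d(j, a_i) + d(j, o_i) \right)$, which converts the uncontrolled facility-facility distance into client-facility distances the objectives actually see. Invoking Lemma \ref{lemma:sumsum} afterward — guaranteeing that $O$'s heaviest facility $o^*$ has total client-distance no larger than that of the comparison solution's heaviest facility — is exactly what turns the mixed bound into a clean factor of $3$. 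Since all of this has already been carried out in the cited results, the present theorem follows by citation alone.
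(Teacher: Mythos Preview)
Your proposal is correct and matches the paper's own proof exactly: the paper simply cites Corollary~\ref{coro:3ssms}, Theorem~\ref{thm:3sssm}, and Theorem~\ref{thm:3ssmm} and concludes. Your additional commentary on how those three ingredients work is accurate but goes beyond what the paper records for this theorem.
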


The above result gives a solution that is a 3 approximation for all the combinations of the 4 objectives defined in Definition \ref{def:objs}. In the next section, we will see if we can improve the simultaneous approximation ratio for some of the pairs of the 4 objectives.

\section{Improved Approximations for Pairs of Objectives}
\subsection{\textsc{Sum-Sum} and \textsc{Max-Sum}} \label{sec:ssms}

Note that so far when choosing multiple facilities, to get a good approximation ratio for two objectives, we have been using the optimal solution for one of the objectives. However, can we get a better approximation if we choose a subset of one of the optimal solutions and a subset of the other optimal solution? To see what it would result in, we first consider \textsc{Max-Sum} and \textsc{Sum-Sum}, and establish the following helpful notation.

\begin{definition}
Suppose $A,B$ are two non-empty sets in metric space $(M, d)$, define $f:M \times M \rightarrow \mathbb{R^{+}}\cup \{0\}$ as follows:
$$f(A,B) = \sum_{a\in A} \sum_{b\in B} d(a,b).$$
\end{definition}

\begin{lemma}
For any non-empty sets $A, B, C$ in metric space $(M, d)$, we have that $$f(A,B) \leq \frac{|B|}{|C|}f(A,C) + \frac{|A|}{|C|}f(B,C).$$
\begin{proof}
	Consider any $a \in A, b \in B, c \in C$, since $a,b,c$ are all in some metric space, we have that 
    $$f(a,b) \leq f(a,c) + f(c,b)$$
    Since the choice of $c$ is arbitrary, we also have that
    $$f(a,b) \leq \frac{1}{|C|}f(a,C) + \frac{1}{|C|}f(C, b).$$
    Then, we can see that
    \[
    \begin{aligned}
    	f(A,B) &= \sum_{a\in A}\left(\sum_{b \in B} f(a,b)\right)\\
    	&\leq \sum_{a\in A}\sum_{b \in B} \left(f(a,c) + f(c, b)\right)\\
    	&\leq \sum_{a\in A}\sum_{b \in B}\left(\frac{1}{|C|}f(a,C) + \frac{1}{|C|}f(C, b)\right)\\
    	&= \sum_{a\in A}\left(\frac{|B|}{|C|}f(a,C) + \frac{1}{|C|}f(C, B)\right)\\
    	&= \frac{|B|}{|C|}f(A,C) + \frac{|A|}{|C|}f(B,C)
    \end{aligned}
    \]
    as desired.
\end{proof}
\label{lemma:ratio}
\end{lemma}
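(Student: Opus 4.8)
The plan is to reduce the claimed inequality to the ordinary pointwise triangle inequality in $(M,d)$, and then \emph{average} over the auxiliary set $C$ before summing over $A$ and $B$. The key insight is a bookkeeping one: the normalizing factor $\tfrac{1}{|C|}$ will come from averaging over $C$, while the factors $|B|$ and $|A|$ will appear when an averaged bound, which no longer depends on the index it is paired against, is summed over the other set.

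First I would fix arbitrary points $a \in A$, $b \in B$, and $c \in C$, and apply the triangle inequality to write $d(a,b) \le d(a,c) + d(c,b)$. Since this holds for \emph{every} choice of $c \in C$, I would average the inequality over all $c \in C$, obtaining
$$d(a,b) \le \frac{1}{|C|}\sum_{c\in C} d(a,c) + \frac{1}{|C|}\sum_{c\in C} d(c,b).$$
This averaging step is the heart of the argument: replacing a single intermediate point by the average over the whole set $C$ is exactly what produces the factor $\tfrac{1}{|C|}$ and makes the subsequent counting clean. (Non-emptiness of $C$ guarantees the division is well-defined.)

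Next I would sum this bound over all $a \in A$ and $b \in B$. In the first resulting term, the quantity $\sum_{c\in C} d(a,c)$ does not depend on $b$, so summing over $b \in B$ multiplies it by $|B|$; this yields $\tfrac{|B|}{|C|}\sum_{a\in A}\sum_{c\in C} d(a,c) = \tfrac{|B|}{|C|}\, f(A,C)$. Symmetrically, in the second term $\sum_{c\in C} d(c,b)$ does not depend on $a$, so summing over $a \in A$ multiplies it by $|A|$, giving $\tfrac{|A|}{|C|}\, f(B,C)$. Adding the two contributions reproduces precisely the claimed inequality $f(A,B) \le \tfrac{|B|}{|C|} f(A,C) + \tfrac{|A|}{|C|} f(B,C)$.

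Since every step uses only the metric triangle inequality and the non-emptiness of the three sets, no further hypotheses are required. There is no genuine obstacle here beyond tracking the counting factors correctly: the one thing to be careful about is matching the $\tfrac{1}{|C|}$ that arises from averaging against the $|B|$ and $|A|$ that arise from the outer summations, and verifying in each term that the index over which the multiplicative factor is generated is indeed one that the averaged expression is independent of.
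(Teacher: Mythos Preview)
Your proposal is correct and follows essentially the same approach as the paper's proof: apply the pointwise triangle inequality, average over $C$ to introduce the $\tfrac{1}{|C|}$ factor, and then sum over $A$ and $B$ to pick up the $|B|$ and $|A|$ multipliers. Your explanation of the bookkeeping is in fact a bit more explicit than the paper's, but the argument is the same.
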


Let $\optss$ be the optimal solution for $\css$, $\optms$ be the optimal solution for $\cms$, $O = \optms \cap \optss$, and $k' = k-|O|$, where $k$ is the number of facilities we are choosing. In order to form a solution which simultaneously approximates both objectives, we will consider 3 cases, (i) $k' > 1$ and $k'$ is even, (ii) $k' > 1$ and $k'$ is odd, (iii) $k' = 1$. First, we consider case (i), assume $k' > 1$ is even. 
For any $A\subseteq \F\setminus O$ and $|A|=k'$, we reorder each $a^i \in A$ such that $\sum_{a \in \C} d(a, a^1) \leq \sum_{a \in \C} d(a, a^2) \leq \cdots \leq \sum_{a \in \C} d(a, a^{k'})$. Then, let $Q_{\ms} = \{o_{\ms}^1, o_{\ms}^2, \cdots, o_{\ms}^{k'/2} \}, Q_{\sums} = \{o_{\sums}^1, o_{\sums}^2, \cdots, o_{\sums}^{k'/2} \}$. 
Note that the above also means that $Q_{\ms} = \operatorname{argmin}_{A\subseteq \optms\setminus O :|A| = k'/2}$$\sum_{i\in \C}f(i,A)$, 
and $Q_{\sums} = \operatorname{argmin}_{A\subseteq \optss\setminus O :|A| = k'/2}\sum_{i\in \C}f(i,A)$. We can then show the following result. 

\begin{theorem}
    For $k' > 1$ and even, there always exists a facility location set $A\subseteq\F$ such that choosing $A$ would give a $(5+\sqrt{17})/4$ approximation both for minimizing $\cms$ and minimizing $\css$. 

\begin{proof}
    First, note that by Lemma \ref{lemma:sumf} and Theorem \ref{theorem:2_gen} we have that 
    $$\ams(\optss) \leq \frac{1}{\ass(\optms)} + 2$$
    which implies that we can obtain a $\min(\ass(\optms), {1}/({\ass(\optms)} + 2))$ approximation by choosing $\optms$ or $\optss$. Now, in addition to $\optms$ and $\optss$, we will look at solution $A = O \cup Q_{\ms} \cup Q_{\sums}$. Let $R_{\ms} = \optms\setminus (Q_{\ms}\cup O), R_{\sums} = \optss\setminus (Q_{\sums}\cup O)$. Note that $A$ is composed of $\optms$ and $\optss$ with no duplicate facility locations, it is a valid solution. Then, we will analyze how good it is w.r.t. to $\cms(\optms)$. With triangle inequality, we have that
    \[
    \begin{aligned}
        \cms(A) &= \max_{i \in \C}f(i, A)\\
        &= \max_{i \in \C}f(i, O \cup Q_{\ms } \cup Q_{\sums})\\
        &= \max_{i \in \C} \left(  f(i, O) + f(i, Q_{\ms}) + f(i, Q_{\sums})\right) 
     \end{aligned}
     \]
     Here note that by Lemma \ref{lemma:ratio}, we have that $f(i, Q_{\ms }) \leq f(i, R_{\ms}) + (2/k')f(R_{\ms },Q_{\sums}),\ f(R_{\ms}, Q_{\sums}) \leq ({k'}/{2}) f(i, R_{\ms }) + ({k'}/{2}) f(i, Q_{\sums})$. Therefore, we can see that
     \[
     \begin{aligned}
        \cms(A) &\leq \max_{i \in \C} \left( f(i, O)+  f(i, Q_{\ms}) + f(i, R_{\ms}) +\frac{2}{k'}\cdot f(R_{\ms}, Q_{\sums})\right)\\
        &\leq \cms(\optms) +  \frac{2}{k'}\cdot f(R_{\ms}, Q_{\sums})\\
        &\leq  \cms(\optms) + \frac{2}{k'}\cdot \frac{1}{n} \left(\frac{k'}{2}\cdot \sum_{i \in \C} f(i, R_{\ms}) + \frac{k'}{2}\cdot\sum_{i \in \C} f(i, Q_{\sums})\right)\\
        &\leq  \cms(\optms) +  \frac{1}{n} \sum_{i \in \C} f(i, R_{\ms}) + \frac{1}{n} \sum_{i \in \C}f(i, Q_{\sums})
    \end{aligned}
    \]
    Note that since $R_{\ms} \subset \optms$, so we can see that 
    \[
    \begin{aligned}
        \sum_{i \in \C} f(i, R_{\ms}) &\leq \sum_{i \in \C} f(i, \optms)\\
        &\leq n\cdot \max_{i \in \C} f(i, \optms)\\
        &\leq n\cdot \cms(\optms)
    \end{aligned}
    \]
    Now, recall that we have $Q_{\sums} = \operatorname{argmin}_{A\subseteq \optss\setminus O :|A| = k'/2}\sum_{i\in \C}f(i,A)$, this means that $\sum_{i \in \C} f(i, Q_{\sums}) \leq \sum_{i \in \C} f(i, R_{\sums})$. We can then see that 
    \[
    \begin{aligned}
        \sum_{i \in \C}f(i, Q_{\sums}) &= \frac{1}{2} \left( \sum_{i \in \C} f(i, Q_{\sums}) + \sum_{i \in \C} f(i, Q_{\sums}) \right)\\
        &\leq \frac{1}{2} \left( \sum_{i \in \C} f(i, Q_{\sums}) + \sum_{i \in \C} d(i, R_{\sums}) \right)\\
        &= \frac{1}{2}\sum_{i \in \C}f(i, \optss\setminus O)\\
        &\leq \frac{1}{2} \sum_{i \in \C}f(i, \optss)\\
        &\leq \frac{1}{2} \css(\optss)
    \end{aligned}
    \]
    Now, since we have that $\sum_{i \in \C}f(i, Q_{\sums}) \leq \frac{1}{2} \css(\optss)$ and $\sum_{i \in \C} f(i, R_{\ms}) \leq {n} \cdot \cms(\optms)$, we can derive that 
\[
    \begin{aligned}
        \cms(A) 
        &\leq  \cms(\optms) + \frac{1}{n} \sum_{i \in \C} f(i, R_{\ms}) + \frac{1}{n} \sum_{i \in \C} f(i, Q_{\sums}) \\
        &\leq \cms(\optms) + \frac{1}{n} \cdot n \cdot \cms(\optms)) + \frac{1}{2n} \css(\optss)\\
        & = 2\cdot\cms(\optms) + \frac{1}{2n} \cdot \frac{1}{\ass(\optms)}\css(\optms)\\
        &\leq 2\cdot\cms(\optms) + \frac{1}{2n} \cdot \frac{1}{\ass(\optms)} \cdot \frac{n}{1}\cms(\optms)\\
        &= \left(2+  \frac{1}{2\ass(\optms)} \right)\cms(\optms)
    \end{aligned}
    \] 
Note that we are calculating the approximation ratio for both $\cms$ and $\css$ and it could be the case that one is worse than the other. Since the final simultaneous approximation ratio would be the worse of the two, next, we will examine how good $A$ is w.r.t. $\css(\optss)$.
\[
    \begin{aligned}
        \css(A) &= \sum_{i \in \C}f(i, A)\\
        &= \sum_{i \in \C}f(i, Q_{\ms}) + \sum_{i \in \C}f(i, Q_{\sums}) + \sum_{i \in \C}f(i, O)
    \end{aligned}
    \]
    Here note that we have previously showed that $\sum_{i \in \C}f(i, Q_{\sums}) \leq \frac{1}{2}\sum_{i \in \C}f(i, \optss\setminus O)$, using the same argument we also have that $\sum_{i \in \C}f(i, Q_{\ms}) \leq \frac{1}{2}\sum_{i \in \C}f(i, \optms\setminus O)$. Therefore, 
    \[
    \begin{aligned}
        \css(A) 
        &\leq \frac{1}{2} \sum_{i \in \C}f(i, \optms\setminus O) + \frac{1}{2} \sum_{i \in \C}f(i, \optss\setminus O) + \sum_{i \in \C}f(i, O)\\
        &= \frac{1}{2} \css(\optms) + \frac{1}{2} \css(\optss)\\
        & = \frac{1}{2} \ass(\optms) \cdot \css(\optss) + \frac{1}{2} \css(\optss)\\
        &= \frac{1 + \ass(\optms)}{2} \cdot \css(\optss)
    \end{aligned}
    \]
    Hence we can conclude that choosing $A$ is a $\max \left( 2+  \frac{1}{2\ass(\optms)}, \frac{1 + \ass(\optms)}{2}\right)$ approximation of $\cms$ and $\css$. Then, this means that one of $\optms,\optss,A$ would give us a 
    $$\min \left(\ass(\optms), \frac{1}{\ass(\optms)} + 2, \max \left( 2+  \frac{1}{2\ass(\optms)}, \frac{1 + \ass(\optms)}{2}\right)\right)$$
approximation of $\cms$ and $\css$. Finally, by Lemma \ref{lemma:half-half-bound}, we have that one of $\optms,\optss,A$ would give us at most a $(5+\sqrt{17})/4$ approximation for both objectives. 
\end{proof}
\label{thm:half-half-bound}
\end{theorem}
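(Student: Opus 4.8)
The plan is to compare three candidate solutions and show that the best of them always meets the bound. The candidates are the two optima $\optms$ and $\optss$ together with the stitched set $A = O \cup Q_{\ms} \cup Q_{\sums}$, which keeps the shared facilities $O$ and adjoins the cheaper half of each optimum outside $O$ (cheaper in total distance $\sum_{i\in\C} d(i,\cdot)$, which is exactly how $Q_{\ms}$ and $Q_{\sums}$ are defined). To keep the bookkeeping linear, I would introduce a single scalar $x = \ass(\optms) \geq 1$ and phrase every estimate in terms of it. Since $f(i,A)=\sum_{a\in A} d(i,a)$ obeys Inequality \ref{equ:f} by Lemma \ref{lemma:sumf}, Theorem \ref{theorem:2_gen} applies and gives $\ams(\optss) \leq 1/x + 2$; hence picking $\optms$ is a simultaneous $x$-approximation and picking $\optss$ is a simultaneous $(1/x+2)$-approximation.

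The core of the argument is to control both objectives on $A$. For $\cms$ I would fix the worst-case client $i$, write $f(i,A) = f(i,O) + f(i,Q_{\ms}) + f(i,Q_{\sums})$, and use Lemma \ref{lemma:ratio} to trade the foreign half $Q_{\sums}$ against the discarded half $R_{\ms} = \optms \setminus (Q_{\ms}\cup O)$, i.e. $f(i,Q_{\sums}) \leq f(i,R_{\ms}) + \frac{2}{k'} f(R_{\ms}, Q_{\sums})$. The first three terms then reassemble into $f(i,\optms) \leq \cms(\optms)$, and a second application of Lemma \ref{lemma:ratio} with the whole client set as the reference converts the cross term into the averages $\frac{1}{n}\sum_i f(i,R_{\ms}) + \frac{1}{n}\sum_i f(i,Q_{\sums})$. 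Bounding the first average by $\cms(\optms)$ (as $R_{\ms}\subset\optms$) and the second by $\frac{1}{2n}\css(\optss)$ (as $Q_{\sums}$ is the cheaper half), and finally using $\css(\optss) = \frac1x\css(\optms) \leq \frac{n}{x}\cms(\optms)$, yields the target $\cms(A) \leq (2 + \frac{1}{2x})\cms(\optms)$.

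For $\css$ the estimate is easier: the cheaper-half property gives $\sum_i f(i,Q_{\sums}) \leq \frac12 \sum_i f(i,\optss\setminus O)$ and symmetrically $\sum_i f(i,Q_{\ms}) \leq \frac12\sum_i f(i,\optms\setminus O)$, so adding back $O$ gives $\css(A) \leq \frac12\css(\optms) + \frac12\css(\optss) = \frac{1+x}{2}\css(\optss)$. Thus $A$ is a $\max(2+\frac{1}{2x}, \frac{1+x}{2})$-approximation, and the best of the three candidates achieves
\[
\min\left(x,\; \frac{1}{x}+2,\; \max\left(2+\frac{1}{2x},\; \frac{1+x}{2}\right)\right).
\]
I would close by optimizing this single-variable expression over $x\geq 1$, isolated as Lemma \ref{lemma:half-half-bound}: the global maximum occurs where the branch $\frac{1+x}{2}$ crosses $\frac1x+2$, at $x=\frac{3+\sqrt{17}}{2}$, giving the value $(5+\sqrt{17})/4$.

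The step I expect to be hardest is the $\cms$ bound in the second paragraph. Because $\cms$ is a maximum over clients rather than a sum, the substitution of $Q_{\sums}$ for $R_{\ms}$ must be argued for a single adversarial client while the compensating quantities can only be controlled on average across all clients; it is precisely to bridge this per-client-versus-average gap that $Q_{\ms}$ and $Q_{\sums}$ are taken to be the cheaper halves rather than arbitrary ones, and extracting the clean constant $\frac{1}{2x}$ (rather than a weaker factor) is the delicate part of the whole proof.
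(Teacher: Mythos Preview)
Your proposal is correct and follows essentially the same route as the paper: compare the three candidates $\optms$, $\optss$, and the stitched set $A=O\cup Q_{\ms}\cup Q_{\sums}$, bound $\cms(A)$ via two applications of Lemma~\ref{lemma:ratio} (substituting $Q_{\sums}$ by $R_{\ms}$ plus a cross term, then routing the cross term through the client set) together with the cheaper-half property, bound $\css(A)$ directly from the cheaper-half property, and finish with the single-variable optimization of Lemma~\ref{lemma:half-half-bound}. Your write-up even states the first Lemma~\ref{lemma:ratio} application more cleanly than the paper, which has a typo ($Q_{\ms}$ for $Q_{\sums}$) in that line.
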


\begin{lemma}
For $x \geq 1$, $$\min \left(x, \frac{1}{x} + 2, \max \left( 2+  \frac{1}{2x}, \frac{1 + x}{2}\right)\right)$$ 
is bounded above by $(5+\sqrt{17})/4$.

\begin{proof}
	Let $f_1 = x, f_2 = \frac{1}{x} + 2, f_3 = 2+  \frac{1}{2x}, f_4 = \frac{1 + x}{2}.$ Suppose $x \geq 1$, we have that $f_1, f_4$ are monotone increasing and $f_2, f_3$ are monotone decreasing. In addition, note that we must have $f_2 = \frac{1}{x} + 2 > 2+  \frac{1}{2x} = f_3$ when $x \geq 1$, these two function would never intersect on $x \geq 1$. In addition, note that when $x \geq 3/2 + \sqrt{13}/2$, we have $f_3 < f_4$. Therefore, to find the maximum value of $\min(f_1, f_2, \max(f_3, f_4))$, we consider two cases, (i) $x \geq 3/2 + \sqrt{13}/2$, $\min(f_1, f_2, \max(f_3, f_4)) = \min(f_1, f_2, f_4)$, (ii) $x < 3/2 + \sqrt{13}/2$, $\min(f_1, f_2, \max(f_3, f_4)) = \min(f_1, f_2, f_3)$.
	
	First, consider case (i). Here note that when $f_1 = f_4$, we have $x = 1$, therefore $f_4 < f_1$ when $x \geq 1$. Hence this case reduce to $\min(f_2, f_4)$. We claim that one of $f_2$ and $f_4$ must be less than or equal to $(5+\sqrt{17})/4$. To see this, suppose $f_2 \geq (5+\sqrt{17})/4$, when $x \geq 3/2 + \sqrt{13}/2$, this gives us $3/2 + \sqrt{13}/2 \leq x \leq 3/2 + \sqrt{17}/2$, then $f_4 = (1+x)/2 \leq 5/4 + \sqrt{17}/4 = (5+\sqrt{17})/4$. Similarly, suppose $f_4 \geq (5+\sqrt{17})/4$, when $x > 3/2 + \sqrt{13}/2$, this gives us $x \geq 3/2 + \sqrt{17}/2$, then $f_2 = 1/x + 2 \leq 1/(3/2 + \sqrt{17}/2) + 2 = (5+\sqrt{17})/4$. Therefore we can conclude that when $x \geq 3/2 + \sqrt{13}/2$, $\min(f_1, f_2, \max(f_3, f_4)) = \min(f_1, f_2, f_4) \leq (5+\sqrt{17})/4$.
	
	Then, consider case (ii). Recall that we have shown that $f_2 > f_3$ for $x \geq 1$, this case reduce to $ \min(f_1, f_3)$. We claim that one of $f_1$ and $f_3$ must be less than or equal to $1 + \sqrt{3/2}$. To see this, suppose $f_1 = x \geq 1 + \sqrt{3/2}$, when $x \geq 3/2 + \sqrt{13}/2$, otherwise we would have $f_1 = x < 1 + \sqrt{3/2}$. This means that $f_3 = 2 + 1/(2x) \leq 2 + 1/(2 + \sqrt{6}) = 1 + \sqrt{3/2}$. Therefore we can conclude that when $x < 3/2 + \sqrt{13}/2$, $\min(f_1, f_2, \max(f_3, f_4)) = \min(f_1, f_2, f_3) \leq 1 + \sqrt{3/2}$.
	
	Now, combine the above two cases together, we have that 
	$$\min(f_1, f_2, \max(f_3, f_4)) \leq \max (1 + \sqrt{\frac{3}{2}}, \frac{5+\sqrt{17}}{4}) = \frac{5+\sqrt{17}}{4}$$
	as desired. 
\end{proof}
\label{lemma:half-half-bound}
\end{lemma}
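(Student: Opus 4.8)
The plan is to treat this as a single-variable optimization over $x \geq 1$ of a min-of-maxes, and to locate its supremum at a crossing of an increasing piece with a decreasing piece. First I would name the four pieces $f_1(x) = x$, $f_2(x) = \frac{1}{x} + 2$, $f_3(x) = 2 + \frac{1}{2x}$, and $f_4(x) = \frac{1+x}{2}$, and record their monotonicity on $[1,\infty)$: $f_1$ and $f_4$ are increasing, while $f_2$ and $f_3$ are decreasing. The quantity to bound is $g(x) = \min(f_1, f_2, \max(f_3, f_4))$; because it mixes increasing and decreasing terms, I expect its maximum to occur where an active increasing function meets an active decreasing function, not at the endpoints.

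Next I would simplify the inner maximum. Since $\frac{1}{x} > \frac{1}{2x}$ for $x \geq 1$, we have $f_2 > f_3$ throughout, so inside the outer min the term $f_2$ always dominates $f_3$, and $f_3$ can only matter through the max with $f_4$. Solving $f_4 = f_3$ reduces to $x^2 - 3x - 1 = 0$, giving the threshold $x_0 = (3+\sqrt{13})/2$: for $x \geq x_0$ we have $\max(f_3,f_4) = f_4$, and for $x < x_0$ we have $\max(f_3,f_4) = f_3$. This splits the analysis into two cases.

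In the range $x \geq x_0$ I would note that $f_4 \leq f_1$ for all $x \geq 1$ (equivalent to $x \geq 1$), so $g = \min(f_2, f_4)$, a minimum of a decreasing and an increasing function whose value peaks exactly where they coincide. Solving $f_2 = f_4$ reduces to $x^2 - 3x - 2 = 0$ with positive root $(3+\sqrt{17})/2$, at which the common value is precisely $(5+\sqrt{17})/4$; I would check that this root exceeds $x_0$ so it indeed lies in this range. In the range $1 \leq x < x_0$ the outer min collapses, using $f_2 > f_3$ again, to $\min(f_1, f_3)$, another increasing-versus-decreasing pair peaking at its crossing; solving $2x^2 - 4x - 1 = 0$ gives $x = 1 + \sqrt{3/2}$ and peak value $1 + \sqrt{3/2}$, which I would verify lies below $x_0$.

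Finally I would combine the two case maxima as $\max(1+\sqrt{3/2},\ (5+\sqrt{17})/4)$ and observe that $1 + \sqrt{3/2} \approx 2.22 < 2.28 \approx (5+\sqrt{17})/4$, so the global bound is $(5+\sqrt{17})/4$, as claimed. The one place to be careful — and the main obstacle — is the bookkeeping of which two functions are simultaneously active on each subinterval and the confirmation that each relevant crossing point actually falls inside the subinterval where that pair governs $g(x)$; once the monotonicity directions are fixed, the peak-at-the-crossing principle makes the remaining computation routine algebra.
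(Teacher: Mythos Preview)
Your proposal is correct and follows essentially the same approach as the paper: naming the four pieces, observing $f_2>f_3$, splitting at the $f_3=f_4$ threshold $x_0=(3+\sqrt{13})/2$, reducing to $\min(f_2,f_4)$ and $\min(f_1,f_3)$ on the two subintervals, and comparing the two peak values $1+\sqrt{3/2}$ and $(5+\sqrt{17})/4$. Your ``peak-at-the-crossing'' framing is a slightly cleaner articulation of what the paper does via its ``suppose $f_i\geq$ bound, then $f_j\leq$ bound'' case checks, but the content is the same.
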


Now, we consider the case (ii), where $k' > 1$ is odd. Similar to case (i), let $Q_{\ms} = \{o_{\ms}^1, o_{\ms}^2, \cdots,$\\$o_{\ms}^{(k'-1)/2} \}$, $Q_{\sums} = \{o_{\sums}^1, o_{\sums}^2, \cdots, o_{\sums}^{(k'+1)/2} \}$. This means that $$Q_{\ms} = \operatorname{argmin}_{A\subseteq \optms\setminus O :|A| = (k'-1)/2}\sum_{i\in \C}f(i,A),$$ $$Q_{\sums} = \operatorname{argmin}_{A\subseteq \optss\setminus O :|A| = (k'+1)/2}\sum_{i\in \C}f(i,A).$$

\begin{theorem}
    For $k' > 1$ and odd, there always exists a facility location set $A\subseteq\F$ such that choosing $A$ would give a $1+\sqrt{5/3}$ approximation both for minimizing $\cms$ and minimizing $\css$. 

\begin{proof}
    Similar to Theorem \ref{thm:half-half-bound}, first, note that by Lemma \ref{lemma:sumf} and Theorem \ref{theorem:2_gen} we have that 
    $$\ams(\optss) \leq \frac{1}{\ass(\optms)} + 2$$
    which implies that we can obtain a $\min(\ass(\optms), {1}/({\ass(\optms)} + 2))$ approximation by choosing $\optms$ or $\optss$. Now, in addition to $\optms$ and $\optss$, we will look at solution $A = O \cup Q_{\ms} \cup Q_{\sums}$. Let $R_{\ms} = \optms\setminus (Q_{\ms}\cup O), R_{\sums} = \optss\setminus (Q_{\sums}\cup O)$. Note that $A$ is composed of $\optms$ and $\optss$ with no duplicate facility locations, it is a valid solution. Then, we will analyze how good it is w.r.t. to $\cms(\optms)$. Here note that by Lemma \ref{lemma:ratio} and $k' \geq 3$, we have that $f(i, Q_{\sums}) \leq f(i, R_{\ms}) + \frac{2}{k'+1}\cdot f(R_{\ms}, Q_{\sums}),\ f(R_{\ms}, Q_{\sums}) \leq \frac{k'+1}{2}\cdot f(i, R_{\ms}) + \frac{k'+1}{2}\cdot f(i, Q_{\sums})$ With triangle inequality, we have that
    \[
    \begin{aligned}
        \cms(A) &= \max_{i \in \C}f(i, A)\\
        &= \max_{i \in \C}f(i, O \cup Q_{\ms} \cup Q_{\sums})\\
        &= \max_{i \in \C} \left(  f(i, O) + f(i, Q_{\ms}) + f(i, Q_{\sums})\right) \\
        &\leq \max_{i \in \C} \left( f(i, O)+  f(i, Q_{\ms}) + f(i, R_{\ms}) + \frac{2}{k'+1}\cdot f(R_{\ms}, Q_{\sums})\right)\\
        &\leq \cms(\optms) +  \frac{2}{k'+1}\cdot f(R_{\ms}, Q_{\sums})\\
        &\leq  \cms(\optms) + \frac{1}{n}\cdot \frac{2}{k'+1}\left(\frac{k'+1}{2}\cdot\sum_{i \in \C} f(i, R_{\ms}) + \frac{k'+1}{2}\cdot\sum_{i \in \C} f(i, Q_{\sums})\right)\\
        &\leq  \cms(\optms) +  \frac{1}{n} \sum_{i \in \C} f(i, R_{\ms}) + \frac{1}{n} \sum_{i \in \C}f(i, Q_{\sums})
    \end{aligned}
    \]
    Recall that we have shown in Theorem \ref{thm:half-half-bound} that $\sum_{i \in \C} f(i, R_{\ms}) \leq n\cdot \cms(\optms)$.
    In addition, we have that 
    \[
    \begin{aligned}
        \sum_{i \in \C}f(i, Q_{\sums}) &= \frac{k'+1}{2k'} \sum_{i \in \C} f(i, Q_{\sums}) + \frac{k'-1}{2k'} \sum_{i \in \C} f(i, Q_{\sums}) 
    \end{aligned}
    \]
    Now, recall that we have $Q_{\sums} = \operatorname{argmin}_{A\subseteq \optss\setminus O :|A| = (k'-1)/2}\sum_{i\in \C}f(i,A)$, this means that $$\sum_{i \in \C} f(i, Q_{\sums}) \leq \frac{k'+1}{k'-1} \sum_{i \in \C} f(i, R_{\sums}).$$ We can then see that 
    \[
    \begin{aligned}
        \sum_{i \in \C}f(i, Q_{\sums}) 
        &\leq \frac{k'+1}{2k'} \sum_{i \in \C} f(i, Q_{\sums}) + \frac{k'-1}{2k'} \cdot \frac{k'+1}{k'-1} \sum_{i \in \C} f(i, R_{\sums})\\
        &= \frac{k'+1}{2k'}\sum_{i \in \C}f(i, \optss\setminus O)\\
        &\leq \frac{k'+1}{2k'} \sum_{i \in \C}f(i, \optss)\\
        &\leq \frac{k'+1}{2k'} \css(\optss)
    \end{aligned}
    \]
    Now, since we have that $\sum_{i \in \C}f(i, Q_{\sums}) \leq \frac{k'+1}{2k'} \css(\optss)$ and $\sum_{i \in \C} f(i, R_{\ms}) \leq {n} \cdot \cms(\optms)$, we can derive that 
\[
    \begin{aligned}
        \cms(A) 
        &\leq  \cms(\optms) + \frac{1}{n} \sum_{i \in \C} f(i, R_{\ms}) + \frac{1}{n} \sum_{i \in \C} f(i, Q_{\sums}) \\
        &\leq \cms(\optms) + \frac{1}{n} \cdot n \cdot \cms(\optms) + \frac{k'+1}{2k'n} \css(\optss)\\
        & = {2}\cdot \cms(\optms) + \frac{k'+1}{2k'n} \cdot \frac{1}{\ass(\optms)}\css(\optms)\\
        &\leq {2}\cdot \cms(\optms) + \frac{k'+1}{2k'n} \cdot \frac{1}{\ass(\optms)} \cdot \frac{n}{1}\cms(\optms)\\
        &= \left(2+  \frac{k'+1}{2k'}\cdot\frac{1}{\ass(\optms)} \right)\cms(\optms)
    \end{aligned}
    \] 
Now, since $k' \geq 3$, $\frac{k'+1}{2k'} \leq \frac{2}{3}$, we can observe that 
\[
\begin{aligned}
	\cms(A) &\leq \left(2+  \frac{k'+1}{2k'}\cdot\frac{1}{\ass(\optms))} \right)\cms(\optms)\\
    &\leq \left(2+  \frac{2}{3\ass(\optms)} \right)\cms(\optms)
\end{aligned}
\]

Note that we are calculating the approximation ratio for both $\cms$ and $\css$ and it could be the case that one is worse than the other. Since the final simultaneous approximation ratio would be the worse of the two, next, we will examine how good $A$ is w.r.t. $\css(\optss)$.
\[
    \begin{aligned}
        \css(A) &= \sum_{i \in \C}f(i, A)\\
        &= \sum_{i \in \C}f(i, Q_{\ms}) + \sum_{i \in \C}f(i, Q_{\sums}) + \sum_{i \in \C}f(i, O)
    \end{aligned}
    \]
    Here note that we have previously showed that $\sum_{i \in \C}f(i, Q_{\sums}) \leq \frac{k'+1}{2k'}\sum_{i \in \C}f(i, \optss\setminus O)$, using the same argument we also have that $\sum_{i \in \C}f(i, Q_{\ms}) \leq \frac{k'-1}{2k'}\sum_{i \in \C}f(i, \optms\setminus O)$. Therefore, 
    \[
    \begin{aligned}
        \css(A) 
        &\leq \frac{k'-1}{2k'} \sum_{i \in \C}f(i, \optms\setminus O) + \frac{k'+1}{2k'} \sum_{i \in \C}f(i, \optss\setminus O) + \sum_{i \in \C}f(i, O)\\
        &= \frac{k'-1}{2k'} \css(\optms) + \frac{k'+1}{2k'} \css(\optss)\\
        & = \frac{k'-1}{2k'} \ass(\optms) \cdot \css(\optss) + \frac{k'+1}{2k'} \css(\optss)\\
        &= \frac{k'+1 + (k'-1)\ass(\optms)}{2k'} \cdot \css(\optss)
    \end{aligned}
    \]
    Recall that we have $\ass(\optms) \geq 1$, so we have that 
    \[
    \begin{aligned}
        \css(A) 
        &\leq \frac{k'+ k'\ass(\optms) + (1 - \ass(\optms) )}{2k'} \cdot \css(\optss)\\
        &\leq \frac{k'+ k'\ass(\optms)}{2k'} \cdot \css(\optss)\\
        &= \frac{1+ \ass(\optms) }{2} \cdot \css(\optss)
    \end{aligned}
    \]

    Hence we can conclude that choosing $A$ is a $\max\left( 2+  \frac{2}{3\ass(\optms)}, \frac{1 + \ass(\optms)}{2}\right)$ approximation of $\cms$ and $\css$. Then, this means that one of $\optms,\optss,A$ would give us a 
    $$\min\left(\ass(\optms), \frac{1}{\ass(\optms)} + 2, \max\left( 2+  \frac{2}{3\ass(\optms)}, \frac{1 + \ass(\optms)}{2}\right) \right)$$
approximation of $\cms$ and $\css$. Finally, similar to the proof for Lemma \ref{lemma:half-half-bound}, we have that one of $\optms,\optss,A$ would give us at most a $1+ \sqrt{5/3}$ approximation for both objectives.     
\end{proof}
\label{thm:half-half-bound-odd}
\end{theorem}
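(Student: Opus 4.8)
The plan is to follow the template of the even case (Theorem~\ref{thm:half-half-bound}) and exhibit three candidate solutions, $\optms$, $\optss$, and a stitched set $A = O \cup Q_{\ms} \cup Q_{\sums}$, then argue that at least one of them achieves the claimed ratio. The only structural difference from the even case is that an odd $k'$ forces an asymmetric split: I would take the $(k'-1)/2$ cheapest facilities of $\optms \setminus O$ (measured by $\sum_{i} d(i,\cdot)$) as $Q_{\ms}$ and the $(k'+1)/2$ cheapest facilities of $\optss \setminus O$ as $Q_{\sums}$, so that $|A| = |O| + (k'-1)/2 + (k'+1)/2 = k$. Writing $x = \ass(\optms) \ge 1$, Theorem~\ref{theorem:2_gen} together with Lemma~\ref{lemma:sumf} already hands me the two ``pure'' bounds $x$ and $1/x + 2$ from choosing $\optss$ or $\optms$; the work is to bound $A$ against both objectives and then optimize over all three candidates.

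For the $\cms$ bound I would use that $f$ is additive over the disjoint pieces of $A$, so $f(i,A) = f(i,O) + f(i,Q_{\ms}) + f(i,Q_{\sums})$, and then invoke Lemma~\ref{lemma:ratio} twice. Setting $R_{\ms} = \optms \setminus (Q_{\ms} \cup O)$, the key observation is that $|R_{\ms}| = (k'+1)/2 = |Q_{\sums}|$, so Lemma~\ref{lemma:ratio} with $C = R_{\ms}$ gives $f(i,Q_{\sums}) \le f(i,R_{\ms}) + \frac{2}{k'+1} f(R_{\ms},Q_{\sums})$, and a second application with $C = \{i\}$ controls the cross term. After replacing the maximum over $i$ of the cross term by its average, this should collapse to $\cms(A) \le 2\,\cms(\optms) + \frac{k'+1}{2k'n}\css(\optss)$, using $\sum_i f(i,R_{\ms}) \le n\,\cms(\optms)$ exactly as in Theorem~\ref{thm:half-half-bound}. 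Since $\frac{k'+1}{2k'} \le \frac{2}{3}$ for every odd $k' \ge 3$, and $\css(\optss) = \css(\optms)/x \le n\,\cms(\optms)/x$, this yields $\cms(A) \le \left(2 + \frac{2}{3x}\right)\cms(\optms)$.

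For the $\css$ bound I would exploit the greedy choice of $Q_{\sums}$: because $Q_{\sums}$ collects the lowest-total-cost facilities, its per-facility average is at most that of $R_{\sums}$, which (after splitting the identity $\sum_i f(i,Q_{\sums})$ into $\frac{k'+1}{2k'}$ and $\frac{k'-1}{2k'}$ weighted pieces) gives $\sum_i f(i,Q_{\sums}) \le \frac{k'+1}{2k'}\sum_i f(i,\optss\setminus O)$, and symmetrically $\sum_i f(i,Q_{\ms}) \le \frac{k'-1}{2k'}\sum_i f(i,\optms\setminus O)$. Recombining with the shared term $\sum_i f(i,O)$ (whose two coefficients sum to $1$) turns $\css(A)$ into $\frac{k'-1}{2k'}\css(\optms) + \frac{k'+1}{2k'}\css(\optss)$, and using $x \ge 1$ to drop the favorable slack leaves $\css(A) \le \frac{1+x}{2}\css(\optss)$. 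Thus $A$ is a $\max\!\left(2 + \frac{2}{3x}, \frac{1+x}{2}\right)$-approximation.

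The remaining, and I expect only genuinely delicate, step is the envelope optimization $\min\!\left(x, \frac{1}{x}+2, \max\!\left(2+\frac{2}{3x}, \frac{1+x}{2}\right)\right)$ over $x \ge 1$, the odd-case analog of Lemma~\ref{lemma:half-half-bound} with the constant $\frac{2}{3}$ in place of $\frac{1}{2}$. I would split on which of the two inner functions dominates. In the regime where $2 + \frac{2}{3x}$ dominates, the bound reduces to $\min(x, 2 + \frac{2}{3x})$, whose worst value sits at the intersection $3x^2 - 6x - 2 = 0$, i.e. $x = 1 + \sqrt{5/3}$; in the complementary regime the bound reduces to $\min\!\left(1/x + 2, \frac{1+x}{2}\right)$, whose worst value is $(5+\sqrt{17})/4 < 1 + \sqrt{5/3}$. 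The main obstacle is thus the bookkeeping of the asymmetric $(k'\pm 1)/2$ coefficients so that the two objective bounds stay clean, and verifying the case split so that the binding value comes out to exactly $1 + \sqrt{5/3}$ rather than the even case's $(5+\sqrt{17})/4$.
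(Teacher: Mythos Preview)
Your proposal is correct and follows essentially the same route as the paper: the same asymmetric split $|Q_{\ms}|=(k'-1)/2$, $|Q_{\sums}|=(k'+1)/2$, the same two applications of Lemma~\ref{lemma:ratio} exploiting $|R_{\ms}|=|Q_{\sums}|$, the same averaging trick for $Q_{\sums}$, and the same resulting bounds $2+\tfrac{2}{3x}$ and $\tfrac{1+x}{2}$ for the stitched solution. Your explicit case split for the envelope (yielding $1+\sqrt{5/3}$ from $3x^2-6x-2=0$ and the subordinate $(5+\sqrt{17})/4$ from the other regime) is in fact more detailed than the paper, which simply appeals to the argument of Lemma~\ref{lemma:half-half-bound}.
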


Therefore, with Theorem \ref{thm:half-half-bound} and Theorem \ref{thm:half-half-bound-odd}, we can conclude that the upper bound of approximation ratio for both $\cms$ and $\css$ when $k' > 1$ is $(5+\sqrt{17})/4$ when $k'$ is even and it is $1+\sqrt{5/3}$ when $k'$ is odd. 

Finally, we consider the remaining case, when $k' = 1$. In this case $\optss$ and $\optms$ overlap except for one location. We can show that for $k > 2$ the upper bound is much better than $1 + \sqrt{2}$ for approximating both $\optms$ and $\optss$. However, the  $1 + \sqrt{2}$ bound is still tight when $k = 2$. 

\begin{theorem}
No deterministic algorithm can approximate both $\cms$ and $\css$ simultaneously within a factor of better than $1+\sqrt{2}$ when $k=2$.
\label{thm:2_bound}
\end{theorem}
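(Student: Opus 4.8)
The plan is to exhibit a single metric instance with $k=2$ on which every choice of two facilities is a factor of at least $1+\sqrt 2$ (up to a vanishing additive term) away from optimal for at least one of $\cms$ and $\css$; since a deterministic algorithm commits to one solution on this fixed instance, this establishes the lower bound. The guiding principle is the tight tradeoff already visible in Theorem \ref{theorem:2_gen}: if $y = \ass(\optms)$ measures how well the Max-Sum optimum does for Sum-Sum, then choosing $\optms$ gives simultaneous ratio $y$, choosing $\optss$ gives at most $\frac{1}{y}+2$, and $\min(y,\frac{1}{y}+2)$ is maximized precisely at $y = 1+\sqrt 2$. The instance will be engineered so that (i) this relation holds with equality on it, and (ii) no third option does better.

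First I would design the instance so that $\optms$ and $\optss$ share exactly one facility and differ in exactly one, i.e. $k' = k - |\optms\cap\optss| = 1$. This is the crucial structural choice: the stitched ``half-half'' solutions that power the improved bounds of Theorems \ref{thm:half-half-bound} and \ref{thm:half-half-bound-odd} require $k'\ge 2$, so forcing $k'=1$ removes exactly the extra freedom that let us beat $1+\sqrt 2$ when $k>2$. Concretely, I would place a dominant block of client weight so that one common facility location $z$ is included in every near-optimal solution for both objectives, while the second facility is chosen from a small candidate set realizing a genuine max-versus-sum tradeoff; the remaining clients and candidate locations are positioned (a short tree metric or a planar configuration suffices) so that moving the second facility from its Sum-Sum-optimal position to its Max-Sum-optimal position traces out exactly the Pareto curve $\ams = \frac{1}{\ass}+2$.

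Next I would compute both optima and their objective values explicitly, introduce the single free parameter controlling the tradeoff, and tune it (letting the weight of the dominant block grow so that $z$ is strictly forced) so that $\ass(\optms) = 1+\sqrt 2$ and, symmetrically, $\ams(\optss) = 1+\sqrt 2$. The core of the argument is then a finite case analysis over all $2$-subsets $A$ of the candidate facility set: for each $A$ I would show $\max(\ams(A),\ass(A)) \ge 1+\sqrt 2 - o(1)$. Solutions omitting $z$ are ruled out immediately by the dominant-block weight; among the solutions containing $z$, the second-facility choice is a one-dimensional tradeoff whose Pareto frontier is exactly the curve above, so no point on it improves on $\min(y,\frac{1}{y}+2) = 1+\sqrt 2$.

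The main obstacle is step (ii): ruling out \emph{every} alternative two-facility solution, not merely the two pure optima. The danger is precisely the kind of mixed solution (one facility from each optimum) that succeeded for $k\ge 3$; here I must verify that with $k'=1$ there is no room for such stitching and that no auxiliary candidate location provides a shortcut. A secondary technical point is that a black-box reduction from the known $k=1$ lower bound of \cite{han2022optimizing} does not suffice: pinning a facility at $z$ introduces an additive distance term $d(i,z)$ in every client cost, which shrinks both approximation ratios toward $1$. The construction must therefore control this additive shift through the weighting limit and the geometry, so that the undiluted $1+\sqrt 2$ tradeoff is recovered in the limit.
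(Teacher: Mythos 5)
Your plan is essentially the paper's proof: the paper realizes it with a three-point line instance ($1$ client and a facility at $A$, $n-1$ clients and a facility at $B$, a facility at $C$, with $d(A,B)=1$ and $d(B,C)=\sqrt{2}-1$), where the dominant block at $B$ forces the shared facility, $\optms=\{A,B\}$ and $\optss=\{B,C\}$ each incur ratio $1+\sqrt{2}$ for the other objective as $n\to\infty$, and the only remaining $2$-subset $\{A,C\}$ is checked to be worse. The only thing left to do is write down such a concrete instance and verify the finite table of values, which your construction principles (shared facility $z$ co-located with the heavy client block, so the additive $d(i,z)$ term vanishes, and the $\min(y,\tfrac{1}{y}+2)$ tradeoff saturated at $y=1+\sqrt{2}$) correctly prescribe.
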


\begin{proof}
    Consider the case where there are four locations on a line as shown in Figure \ref{fig:ex2}. There is one client on location $A$, $n-1$ clients on location $B$, one facility location at $A$, one facility location at $B$ and one facility location at $C$ (i.e. the multiset of facility locations is $\F = \{A, B, C\}$). We also have that $d(A,B) = 1$, $d(C,B) =\sqrt{2}-1$. Note that in this case $\optms$ is taking facilities on $A,B$, $\optss$ is taking facilities on $B,C$, and we also consider $H = \{A,C\}$,  which consists one facility in $\optss$ and one in $\optms$. We obtain the values as shown in Table \ref{tab:ex2}.

    \begin{figure}[t]
  \begin{center}
        \begin{tikzpicture}[
            roundnode/.style={rectangle, draw=blue!60, fill=green!5, very thick, minimum size=5mm},
            squarednode/.style={rectangle, draw=blue!60, fill=blue!5, very thick, minimum size=5mm},
            scale = 1, transform shape
        ]
        
            \node[roundnode, label=above:{$1$ client, 1 facility}] (A) {$A$};
            \node[roundnode, label=above:{$n-1$ clients, $1$ facility}] (B) [right = 5cm of A]{$B$};
            \node[squarednode, label=above:{1 facility}] (C) [right = 3cm of B]{$C$};

            \draw (A) -- (B) node[draw=none,fill=none,font=\scriptsize,midway,below ] {$1$};

            \draw (B) -- (C)
            node[draw=none,fill=none,font=\scriptsize,midway,below ] {$\sqrt{2}-1$};

        \end{tikzpicture}
  
  \caption{An instance with two client locations A, B and three possible facility locations A, B, C with distances between any two adjacent locations as well as the number of clients and/or facilities at each location labeled.}
  \label{fig:ex2}
  \end{center}
\end{figure}
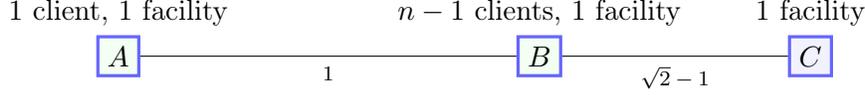

\begin{table}[]
\centering
\begin{tabular}{|c|c|c|c|}
\hline
       & $\optss$                              & $\optms$ & $H$                \\ \hline
$\css$ & $(\sqrt{2}-1)n+1$                     & n        & $\sqrt{2} \cdot n$ \\ \hline
$\cms$ & $1 + \sqrt{2}$                        & 1        & $\sqrt{2}$         \\ \hline
$\ass, n\rightarrow \infty$ & 1 & {$1 + \sqrt{2}$} & {$2 + \sqrt{2}$} \\ \hline
$\ams, n\rightarrow \infty$ & {$1 + \sqrt{2}$} & 1        & $\sqrt{2}$         \\ \hline
\end{tabular}
\caption{An instance for Theorem \ref{thm:2_bound},with $\css$, $\cms$, $\ass$, $\ams$ listed for $\optss$, $\optms$ and $H$.}
\label{tab:ex2}
\end{table}
    
    By choosing either $\optss$ or $\optms$ gives us a $1+\sqrt{2}$ for approximating both objectives and choosing $H$ gives a worse approximation ratio. Therefore, there does not exist a solution that would give an approximation ratio better than $1+\sqrt{2}$.     
\end{proof}

Next, to show that we can obtain a better upper bound for $k > 2$, we will first show a more general result. 
\begin{theorem}
    For $k' \geq 1$, given the optimal facility set $\optms$ that minimizes $\cms$ and the optimal facility set $\optss$ that minimizes $\css$, we have that $\ams(\optss) \leq 1 + \frac{k'}{k-k'}+\max\left(1, \frac{k'}{k-k'}\right)\cdot\frac{1}{\ass\left(\optms\right)}$. 

\begin{proof}
    First, let $A = \optms\setminus O, C = \optss\setminus O$. Note that we also have $|A|=|C|=k', |O| = k-k'$. With triangle inequality, we have that
    \[
    \begin{aligned}
        \cms(\optss) &= \max_{i \in \C}f(i, O \cup C)\\
        &= \max_{i \in \C} \left(  f(i, O) + f(i, C)\right) \\    
    \end{aligned}
    \]
 Now, by Lemma \ref{lemma:ratio}, we have that $f(i, C) \leq \frac{k'}{k-k'}f(i, O) + \frac{1}{k-k'}f(O, C)$, $f(O, C) \leq k' \cdot f(i, O) + (k-k') f(i, C)$ for any $i \in \C$, so we can see that  
    \[
    \begin{aligned}
        \cms(\optss) &\leq \max_{i \in \C} \left( f(i, O)+  \frac{k'}{k-k'}f(i, O) + \frac{1}{k-k'}f(O, C)\right)\\
        &\leq \left(1 + \frac{k'}{k-k'}\right)\cms(\optms) +  \frac{1}{k-k'}f(O, C)\\
        &\leq  \left(1 + \frac{k'}{k-k'}\right)\cms(\optms) + \frac{1}{k-k'} \cdot \frac{1}{n} \left(k'\cdot\sum_{i \in \C} f(i, O) + (k-k')\sum_{i \in \C} f(i, C)\right)\\
        &\leq  \left(1 + \frac{k'}{k-k'}\right)\cms(\optms) + \max\left(1, \frac{k'}{k-k'}\right)\frac{1}{n} \left(\sum_{i \in \C} f(i, O) + \sum_{i \in \C} f(i, C)\right)\\
        &\leq  \left(1 + \frac{k'}{k-k'}\right)\cms(\optms) + \max\left(1, \frac{k'}{k-k'}\right)\frac{1}{n} \cdot \css(\optss)\\
        &\leq  \left(1 + \frac{k'}{k-k'}\right)\cms(\optms) + \max\left(1, \frac{k'}{k-k'}\right)\frac{1}{n} \cdot\frac{\css(\optms)}{\ass(\optms)}\\
        &\leq  \left(1 + \frac{k'}{k-k'}\right)\cms(\optms) + \max\left(1, \frac{k'}{k-k'}\right)\frac{n}{n} \cdot\frac{\cms(\optms)}{\ass(\optms)}
    \end{aligned}
    \]
    Then, we divide both sides by $\cms(\optms)$, which gives us 
    \[
    \begin{aligned}
    	\ams(\optss) &\leq  1 + \frac{k'}{k-k'} + \max\left(1, \frac{k'}{k-k'}\right)\cdot\frac{1}{\ass(\optms)}
    \end{aligned}
    \]
    
\end{proof}
\label{thm:overlap-bound}
\end{theorem}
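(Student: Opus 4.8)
The plan is to write $\optss$ as the disjoint union $O \cup C$, where $C = \optss \setminus O$ has size $k'$ and $O$ has size $k - k'$, and to exploit that for the sum-variant cost $f$ is additive over disjoint facility sets, so $f(i,\optss) = f(i,O) + f(i,C)$ for every client $i$. Since $\cms(\optss) = \max_{i \in \C} f(i,\optss)$, the whole task reduces to controlling, uniformly over clients, the ``extra'' contribution $f(i,C)$ from the facilities of $\optss$ not shared with $\optms$, expressing it in terms of quantities governed by $\optms$ and by the optimal $\css$ value.

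First I would apply Lemma \ref{lemma:ratio} with the shared set $O$ as the intermediate set to obtain $f(i,C) \le \frac{k'}{k-k'} f(i,O) + \frac{1}{k-k'} f(O,C)$. Substituting into $f(i,\optss) = f(i,O) + f(i,C)$ gives a coefficient $(1 + \frac{k'}{k-k'})$ on $f(i,O)$; because $O \subseteq \optms$ forces $f(i,O) \le f(i,\optms)$ for the sum cost, and because the cross term $f(O,C)$ is constant in $i$ and factors out of the maximum, I get $\cms(\optss) \le (1+\frac{k'}{k-k'})\cms(\optms) + \frac{1}{k-k'} f(O,C)$. So the only thing left is the cross term.

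The crux (and the main obstacle) is that $f(O,C)$ is a single scalar not indexed by any client, so it cannot be matched directly against $\cms(\optms)$, which is a maximum over clients. The key idea is to invoke Lemma \ref{lemma:ratio} a second time, but now with a single client $\{i\}$ as the intermediate set, yielding $f(O,C) \le k' f(i,O) + (k-k') f(i,C)$ for every $i \in \C$. Since this holds for all $n$ clients, I can average it over $\C$, converting the per-client bound into $f(O,C) \le \frac{1}{n}\bigl(k'\sum_i f(i,O) + (k-k')\sum_i f(i,C)\bigr)$. This is exactly the maneuver that trades a ``max-flavored'' object for a ``sum-flavored'' one: pulling out the factor $\max(1,\frac{k'}{k-k'})$ and using $\sum_i f(i,O) + \sum_i f(i,C) = \sum_i f(i,\optss) = \css(\optss)$ bounds the cross term by $\max(1,\frac{k'}{k-k'})\frac{1}{n}\css(\optss)$.

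Finally I would rescale the $\css$ value back to the $\cms$ scale: by definition $\css(\optss) = \css(\optms)/\ass(\optms)$, and since $\css(\optms) = \sum_i f(i,\optms) \le n\,\cms(\optms)$, the cross term collapses to $\max(1,\frac{k'}{k-k'})\frac{1}{\ass(\optms)}\cms(\optms)$. Collecting the two contributions and dividing through by $\cms(\optms)$ gives $\ams(\optss) \le 1 + \frac{k'}{k-k'} + \max(1,\frac{k'}{k-k'})\frac{1}{\ass(\optms)}$, as claimed. I expect the delicate points to be (i) choosing the two \emph{different} intermediate sets in Lemma \ref{lemma:ratio}, so that one resulting term stays per-client (to match $\cms(\optms)$) while the other averages cleanly into a sum (to match $\css$), and (ii) keeping the coefficient bookkeeping of $\frac{k'}{k-k'}$ and $\max(1,\frac{k'}{k-k'})$ straight throughout.
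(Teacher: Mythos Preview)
Your proposal is correct and follows essentially the same route as the paper: decompose $\optss=O\cup C$, apply Lemma~\ref{lemma:ratio} once with $O$ as the intermediate set to extract the $(1+\tfrac{k'}{k-k'})$ coefficient and isolate the cross term $\tfrac{1}{k-k'}f(O,C)$, apply the lemma a second time with a single client as the intermediate set and average over $\C$ to turn that cross term into $\max(1,\tfrac{k'}{k-k'})\tfrac{1}{n}\css(\optss)$, and then convert via $\css(\optss)=\css(\optms)/\ass(\optms)\le n\,\cms(\optms)/\ass(\optms)$. The two delicate points you flag (the two different choices of intermediate set, and the $\tfrac{k'}{k-k'}$ bookkeeping) are exactly the ones the paper handles.
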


\begin{corollary}
Let $a = \frac{k'}{k-k'}$. For $a \leq 1$, there always exists a facility set $A\subseteq\F$ such that choosing $A$ would give a $(\sqrt{a^2 + 2a + 5}+a+1)/2$ approximation both for minimizing $\cms$ and minimizing $\css$. In fact, we would either get a $(1, (\sqrt{a^2 + 2a + 5}+a+1)/2)$ approximation by choosing $\optms$ or a $((\sqrt{a^2 + 2a + 5}+a+1)/2, 1)$ approximation by choosing $\optss$. In other words, at least one of $\ass(\optms)$ or $\ams(\optss)$ is always less than or equal to $(\sqrt{a^2 + 2a + 5}+a+1)/2$. 

\begin{proof}

First note that since $a \leq 1$, by Theorem \ref{thm:overlap-bound}, we have that $\ams(\optss) \leq  1 + a + \frac{1}{\ass(\optms)}$.
We will show that $\max (\min (\ass(\optms), \ams(\optss))) \leq (\sqrt{a^2 + 2a + 5}+a+1)/2$. 
To do this, we will consider two cases. First, assume $\ass(\optms) > (\sqrt{a^2 + 2a + 5}+a+1)/2$, then we would have $\ams(\optss) \leq 1 + a + \frac{1}{\ass(\optms)} < 1 + a + \frac{2}{\sqrt{a^2 + 2a + 5}+a+1} = (\sqrt{a^2 + 2a + 5}+a+1)/2$. Otherwise, we would have $\ass(\optms) \leq (\sqrt{a^2 + 2a + 5}+a+1)/2$. Therefore, we can conclude that $\max (\min (\ass(\optms), \ams(\optss))) \leq (\sqrt{a^2 + 2a + 5}+a+1)/2$. This result implies that one of $\ams(\optss)$ and $\ass(\optms)$ is less than or equal to $(\sqrt{a^2 + 2a + 5}+a+1)/2$, as desired.
\end{proof}
\label{coro:2_gen}
\end{corollary}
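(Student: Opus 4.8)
The plan is to reduce the entire claim to the single quantitative estimate already furnished by Theorem~\ref{thm:overlap-bound}, and then to run an elementary threshold (or balancing) argument between the two candidate solutions $\optms$ and $\optss$. Because we assume $a \leq 1$, we have $\max(1,a)=1$, so Theorem~\ref{thm:overlap-bound} collapses to the clean inequality $\ams(\optss) \leq 1 + a + \frac{1}{\ass(\optms)}$. First I would note that $\optms$ is by definition optimal for $\cms$, so choosing it yields a $(1, \ass(\optms))$-approximation; symmetrically, $\optss$ yields an $(\ams(\optss), 1)$-approximation. Hence the best simultaneous ratio achievable among these two is exactly $\min(\ass(\optms), \ams(\optss))$, and it suffices to bound this minimum by $\beta := (\sqrt{a^2+2a+5}+a+1)/2$ uniformly over all admissible values $\ass(\optms) \geq 1$.

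The crux of the argument is the observation that $\beta$ is precisely the positive root of $x^2 - (1+a)x - 1 = 0$, equivalently the unique positive fixed point of the map $g(x) = 1 + a + \frac{1}{x}$. Granting this, the case split is immediate. If $\ass(\optms) \leq \beta$ we are done, since $\optms$ already meets the target. Otherwise $\ass(\optms) > \beta$, and because $g$ is strictly decreasing on $(0,\infty)$, feeding this lower bound into the inequality from Theorem~\ref{thm:overlap-bound} gives $\ams(\optss) < g(\beta) = \beta$, where the final equality is exactly the fixed-point relation; thus $\optss$ meets the target. In either case $\min(\ass(\optms), \ams(\optss)) \leq \beta$, which is the desired conclusion.

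The only genuinely delicate step — and really the whole substance of the proof — is pinning down the correct threshold $\beta$ and confirming it is a fixed point of $g$; the rest is bookkeeping. Concretely, I would solve $x = 1 + a + \frac{1}{x}$, obtaining $x^2 - (1+a)x - 1 = 0$, take the positive root $\frac{(1+a)+\sqrt{(1+a)^2+4}}{2}$, and simplify $(1+a)^2 + 4 = a^2 + 2a + 5$ to recover the stated closed form. The one place to be careful is the direction of monotonicity: since $g$ decreases, the strict inequality $\ass(\optms) > \beta$ must push $\ams(\optss)$ strictly below $\beta$ rather than above it, so a sign slip here would silently invalidate the whole argument. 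Since no geometric or combinatorial structure beyond Theorem~\ref{thm:overlap-bound} is required, I anticipate no deeper obstacle.
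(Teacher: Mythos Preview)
Your proposal is correct and follows essentially the same approach as the paper: both invoke Theorem~\ref{thm:overlap-bound} with $a\le 1$ to get $\ams(\optss)\le 1+a+\frac{1}{\ass(\optms)}$, then run the identical two-case threshold argument on whether $\ass(\optms)$ exceeds $\beta=(\sqrt{a^2+2a+5}+a+1)/2$. Your framing of $\beta$ as the positive fixed point of $g(x)=1+a+\frac{1}{x}$ (equivalently the positive root of $x^2-(1+a)x-1=0$) is a slightly more transparent way to motivate the choice of threshold than the paper's direct algebraic verification $1+a+\frac{2}{\sqrt{a^2+2a+5}+a+1}=\beta$, but the content is the same.
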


\begin{corollary}
For $k' = 1, k \geq 3$, there always exists a facility location set $A\subseteq\F$ such that choosing $A$ would give a $2$ approximation both for minimizing $\cms$ and minimizing $\css$. 
\label{coro:3_less}
\end{corollary}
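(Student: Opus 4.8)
The plan is to specialize Corollary \ref{coro:2_gen} to the case $k' = 1$, since that corollary already does all the heavy lifting. With $k' = 1$, the parameter $a = \frac{k'}{k-k'}$ becomes $a = \frac{1}{k-1}$. The first step is to verify that the hypothesis of Corollary \ref{coro:2_gen} is satisfied: for $k \geq 3$ we have $a = \frac{1}{k-1} \leq \frac{1}{2}$, which is in particular at most $1$. Hence Corollary \ref{coro:2_gen} applies and guarantees that one of $\optms, \optss$ achieves a simultaneous approximation ratio of $g(a) := \frac{\sqrt{a^2+2a+5}+a+1}{2}$ for $\cms$ and $\css$.

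Next, I would bound $g(a)$ over the relevant range. As $k$ ranges over $\{3,4,5,\dots\}$, the value $a = \frac{1}{k-1}$ ranges over $(0,\tfrac12]$, so it suffices to show that $g$ is monotone increasing and to evaluate it at the right endpoint $a = \tfrac12$. Monotonicity is immediate from $g'(a) = \frac{1}{2}\bigl(1 + \frac{a+1}{\sqrt{a^2+2a+5}}\bigr) > 0$ for all $a \geq 0$, so $g(a) \leq g(\tfrac12)$ for every $k \geq 3$. A direct computation then gives $g(\tfrac12) = \frac{\sqrt{1/4 + 1 + 5}\,+\,3/2}{2} = \frac{5/2 + 3/2}{2} = 2$, which yields the claimed factor of $2$. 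It is worth remarking that since $g$ is strictly increasing and $a \to 0$ as $k \to \infty$, the bound actually improves toward $\frac{\sqrt5+1}{2} \approx 1.618$ for large $k$; the value $2$ is attained only in the boundary case $k = 3$.

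I expect no genuinely hard step here: the whole statement is a one-parameter specialization of the already-established Corollary \ref{coro:2_gen}, and the only real work is the monotonicity check for $g$ together with the arithmetic confirming $g(\tfrac12) = 2$. The one subtlety worth flagging explicitly is that $k \geq 3$ is precisely the threshold forcing $a \leq \tfrac12$, and hence $g(a) \leq 2$; this is exactly why the statement excludes $k = 2$, where $a = 1$ and $g(1) = \frac{\sqrt8+2}{2} = 1+\sqrt2$, recovering the tight lower bound of Theorem \ref{thm:2_bound}.
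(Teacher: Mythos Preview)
Your proposal is correct and follows essentially the same approach as the paper: specialize Corollary~\ref{coro:2_gen} with $a=\tfrac{1}{k-1}\le\tfrac12$, then bound $g(a)\le g(\tfrac12)=2$. The only difference is that you explicitly justify the monotonicity of $g$ via its derivative, whereas the paper simply asserts the inequality at $a=\tfrac12$ without comment.
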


\begin{proof}
Let $a = \frac{1}{k-1}$, note that since $k\geq 3$, we have $a \leq \frac{1}{2}< 1 $. Then, by Corollary \ref{coro:2_gen}, we have that there always exists a facility location $A\in\F$ such that choosing $A$ would give a $(\sqrt{a^2 + 2a + 5}+a+1)/2$ approximation both for minimizing $\cms$ and minimizing $\css$. Then, we have that 
\[
\begin{aligned}
	\frac{\sqrt{a^2 + 2a + 5}+a+1}{2} \leq \frac{\sqrt{(1/2)^2 + 2\cdot (1/2) + 5}+(1/2)+1}{2} = 2
\end{aligned}
\]
as desired. 
\end{proof}

This means that if we are choosing more than 2 facility locations, if the optimal solution for $\cms$ and $\css$ are the same except for one location, then either $\optms$ or $\optss$ is a 2 approximation for both objectives. In addition, since 2 is smaller than the approximation ratio we found in Theorem \ref{thm:half-half-bound} and Theorem \ref{thm:half-half-bound-odd}, these two theorems hold for any $k'$ when $k > 2$ (note that $k' = 0$ implies that $\optms = \optss$, therefore it's a 1 approximation for both objectives by choosing $\optss$).

Recall that by Corollary \ref{coro:bound_2}, using previous results and choosing either $\optms$ or $\optss$, we cannot get an approximation ratio better than $1 + \sqrt{2}\approx 2.41$. By choosing the best of these two solutions together with $A = O \cup Q_{\ms} \cup Q_{\sums}$, however, we are able to use Theorems \ref{thm:half-half-bound}, \ref{thm:half-half-bound-odd}, and \ref{coro:3_less} to improve this bound to $\max\{(5+\sqrt{17})/4, 1 + \sqrt{5/3}, 2\} = 1 + \sqrt{5/3} \approx 2.29$ for any $k\geq 3$. In addition, we can show that when we are choosing more than 2 facilities, the lower bound is $(4+\sqrt{7})/3 \approx 2.215$.

\begin{theorem}
No deterministic algorithm can approximate both $\cms$ and $\css$ simultaneously within a factor of better than $(4+\sqrt{7})/3$ for $k \geq 3$.
\label{thm:sum_low_bound}
\end{theorem}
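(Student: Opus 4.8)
The plan is to exhibit a single hard instance (generalizing the one in Theorem \ref{thm:2_bound}) on which \emph{every} placement of $k$ facilities is simultaneously bad for both objectives. A key structural constraint drives the design: Corollary \ref{coro:3_less} shows that whenever $\optms$ and $\optss$ differ in only one location the simultaneous ratio is at most $2 < (4+\sqrt7)/3$, so no lower bound above $2$ is possible in the $k'=1$ regime. The instance must therefore be arranged so that $\optms$ and $\optss$ are \emph{forced} to differ in at least two facilities, placing the construction squarely in the ``half--half'' regime analyzed in Theorems \ref{thm:half-half-bound} and \ref{thm:half-half-bound-odd}, where the upper bound is largest.

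Concretely, I would work on the line with a single outlier client together with a mass of $n-1$ coincident clients, and with a small, fixed set of candidate facility locations of \emph{bounded multiplicity}: one location at the outlier, one at the mass, and one or more ``beyond'' locations placed on the far side of the mass at carefully chosen distances. The bounded multiplicities are what prevents the \css{} optimum from collapsing all facilities onto the mass location (which would make $\ass(\optms)$ blow up with $n$); instead $\optss$ is forced to spread across the beyond-locations, so that the per-client mass costs of $\optms$ and $\optss$ are of the same order in $n$ and their ratio is a finite constant, exactly as in the $k=2$ gadget. I would first pin down $\optms$ and $\optss$ and evaluate $\cms$ and $\css$ on each in the limit $n\to\infty$.

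The heart of the argument is then to bound $\max(\ams(A), \ass(A))$ from below over all admissible $A$, not just over the two optima and their mixture. Because the instance has only $O(1)$ distinct facility locations, I would reduce an arbitrary solution, via a closest-first exchange argument (farther beyond-locations are dominated, so mass-side facilities are filled greedily), to being characterized by a single integer parameter: how many of its facilities lie on the outlier side versus the mass side. Writing $\cms(A)$ and $\css(A)$ as explicit functions of this parameter, minimizing $\ams$ pushes facilities toward the outlier while minimizing $\ass$ pushes them toward the mass, so the two approximation ratios move in opposite directions and the best simultaneous value occurs where they cross. Tuning the beyond-distances so that this crossover is as high as possible should make the binding condition reduce to the quadratic $3x^2 - 8x + 3 = 0$, whose relevant root is precisely $(4+\sqrt7)/3$; letting $n\to\infty$ then yields the claimed bound.

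The main obstacle I anticipate is twofold. First, choosing the multiplicities and the beyond-distances so that simultaneously (a) $\optms$ and $\optss$ genuinely differ in at least two locations and (b) the trade-off curve peaks exactly at $(4+\sqrt7)/3$ rather than at some incomparable value---this is a constrained optimization over the instance parameters that must be reverse-engineered from the target quadratic. Second, and more delicate, is verifying that the extra freedom available for $k\geq 3$ cannot be exploited: one must show that every intermediate (mixed) placement, which has genuinely more options than in the $k=2$ case, still fails to beat the crossover value. I expect this to follow from a monotonicity argument establishing that $\cms(A)$ and $\css(A)$ are respectively monotone in the outlier-side count, so that the pointwise maximum of the two approximation ratios is minimized at the single crossover parameter and no placement does better.
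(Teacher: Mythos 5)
Your high-level plan is the right one and matches the paper's: a two-cluster line instance (one outlier client, $n-1$ coincident clients), a constant number of candidate facility locations with bounded multiplicity, an enumeration of all placements by how many facilities sit at each location, and a limit $n\to\infty$; you have even identified the correct quadratic $3x^2-8x+3=0$ whose larger root is $(4+\sqrt{7})/3$. The gap is in the concrete geometry you propose, and it is not a detail that can be deferred: placing a candidate facility location \emph{at the mass} (and another at the outlier) makes the construction self-defeating by the very constraint you state at the outset. With a facility available at the mass location, $\optss$ always takes it (it costs the mass clients nothing), and $\optms$ takes it as well, together with the outlier facility and the nearest beyond-location, since that combination balances the outlier's and the mass's column sums; the two optima then overlap in all but one facility, i.e.\ $k'=1$, and by Corollary \ref{coro:3_less} some solution achieves a simultaneous ratio of at most $2<(4+\sqrt{7})/3$, so no such instance can certify the claimed lower bound. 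You correctly flag requirement (a) --- that the optima must differ in at least two locations --- as an obstacle, but the geometry you describe violates it, and resolving this is precisely the content of the construction rather than a tuning afterthought.

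The paper's fix is to put \emph{no} candidate facilities at the client locations. Its instance (Figure \ref{fig:ex1}) has one client at $A$, $n-1$ clients at $C$, and only two candidate locations, each with multiplicity $3$: a point $B$ equidistant from $A$ and $C$ (at distance $1$ from each), and a point $D$ at distance $(\sqrt{7}-2)/3$ past $C$. Because $B$ is the exact balance point, $\cms$ wants \emph{all} facilities at $B$, while $\css$ (dominated by the mass) wants all facilities at $D$; the two optima are therefore completely disjoint, and the four possible solutions are parameterized by the number of $D$'s chosen --- exactly the single-integer reduction you envision, with no exchange argument needed. The distance $(\sqrt{7}-2)/3$ is tuned so that $\ams(\optss)$ and $\ass$ of the best mixture $\{B,D,D\}$ both equal $(4+\sqrt{7})/3$ while every other placement is worse. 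If you redesign your instance along these lines --- a balanced midpoint location and a single beyond-location, each with multiplicity $k$, and no facilities at the clients --- your crossover argument goes through.
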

\begin{proof}

    Consider the case where there are four locations on a line as shown in Figure \ref{fig:ex1}. 
    There is one client on location $A$, $n-1$ clients on location $C$, three facility locations at $B$ and three facility locations at $D$ (i.e. the multiset of facility locations is $\F = \{B,B,B,D,D,D\}$). We also have that $d(A,B) = d(B,C) = 1$, $d(C,D) = (\sqrt{7}-2)/3$. Note that in this case $\optms$ is taking all facilities on $B$, $\optss$ is taking all facilities on $D$, and we also consider $H$ which consists one facility in $\optss$ and two in $\optms$ and $H'$ which consists two facility in $\optss$ and one in $\optms$. We obtain the values as shown in Table \ref{tab:ex5}.

    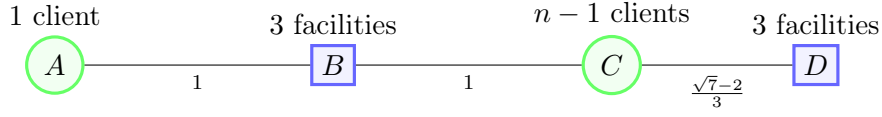
\begin{figure}[t]
  \begin{center}
  \begin{tikzpicture}[
            roundnode/.style={circle, draw=green!60, fill=green!5, very thick, minimum size=7mm},
            squarednode/.style={rectangle, draw=blue!60, fill=blue!5, very thick, minimum size=5mm},
            scale = 1, transform shape
        ]
        
            \node[roundnode, label=above:{$1$ client}] (A) {$A$};
            \node[squarednode, label=above:{$3$ facilities}] (B) [right = 3cm of A]{$B$};
            \node[roundnode, label=above:{$n-1$ clients}] (C) [right = 3cm of B]{$C$};
            \node[squarednode, label=above:{$3$ facilities}] (D) [right = 2cm of C]{$D$};
            
            \draw (A) -- (B) node[draw=none,fill=none,font=\scriptsize,midway,below ] {$1$};

            \draw (B) -- (C)
            node[draw=none,fill=none,font=\scriptsize,midway,below ] {$1$};
            
            \draw (C) -- (D)
            node[draw=none,fill=none,font=\scriptsize,midway,below ] {$\frac{\sqrt{7}-2}{3}$};
            
        \end{tikzpicture}
  \caption{An instance with two client locations A, C and two possible facility locations B, D such that $B$ and $D$ can both be chosen 3 times individually, with distances between any two adjacent locations as well as the number of clients or facilities at each location labeled.}
  \label{fig:ex1}
  \end{center}
\end{figure}

\begin{table}[t]
\centering
\begin{tabular}{|c|c|c|c|c|}
\hline
       & $\optss$     & $\optms$ & $H$                        & $H'$         \\ \hline
$\css$ &
  $(\sqrt{7}-2)n + 6$ &
  $3n$ &
  $\left(\frac{2\sqrt{7}-1}{3}\right)n + 6$ &
  $\left(\frac{\sqrt{7}+4}{3}\right)n + 2$ \\ \hline
$\cms$ & $4+\sqrt{7}$ & $3$      & $\frac{11 + 2\sqrt{7}}{3}$ & $2+\sqrt{7}$ \\ \hline
$\ass, n\rightarrow \infty$ &
  1 &
  {$2+\sqrt{7}$} &
  {$\frac{4+\sqrt{7}}{3}$} &
  $\frac{5+2\sqrt{7}}{3}$ \\ \hline
$\ams, n\rightarrow \infty$ &
  {$\frac{4+\sqrt{7}}{3}$} &
  1 &
  {$\frac{11 + 2\sqrt{7}}{9}$} &
  $\frac{2+\sqrt{7}}{3}$ \\ \hline
\end{tabular}
\caption{An instance for Theorem \ref{thm:sum_low_bound}, with $\css$, $\cms$, $\ass$, $\ams$ listed for $\optss$, $\optms$, $H$ and $H'$.}
\label{tab:ex5}
\end{table}
    
    Note that in this case, the only options of locations are both facilities on $B$ ($\optms$), both facilities on $D$ ($\optss$) and a mixture of facilities on $B$ and $D$ ($H$ and $H'$). By choosing either $\optss$ or $H$ gives us a $(4+\sqrt{7})/3$ for approximating both objectives and choosing $\optms$ or $H'$ gives a worse approximation ratio. Therefore, there does not exist a solution that would give an approximation ratio better than $(4+\sqrt{7})/3$ when $k \geq 3$. 
\end{proof}

\subsection{\textsc{Sum-Max} and \textsc{Max-Max}}

By Lemma \ref{lemma:maxf} and Corollary \ref{coro:bound_2}, there exists a solution that is a $1+\sqrt{2}$ approximation for both \textsc{Sum-Max} and \textsc{Max-Max}. 
Unfortunately, unlike what we showed for $\cms$ and $\css$ in the previous section, the best possible upper bound for
simultaneously approximating $\cmm$ and $\csm$ is still $1 + \sqrt{2}$, even for larger values of $k$. 

\begin{theorem}
    For any deterministic algorithm, there does not exist an approximation bound better than $1 + \sqrt{2}$ that simultaneously approximates $\cmm$ and $\csm$.
\begin{proof}
    Consider the case where there are four locations on a line as shown in Figure \ref{fig:ex3}. 
    There is one client on location $A$, $n-1$ clients on location $C$, $k$ facility locations at $B$ and $k$ facility locations at $D$ (i.e the multiset of facility locations is $\F = \{\underbrace{B, \ldots, B}_k , \underbrace{D, \ldots, D}_k\}$). We also have that $d(A,B) = d(B,C) = 1$, $d(C,D) =  \sqrt{2}-1$. Note that in this case $\optmm$ is taking both facilities on $B$, $\optsm$ is taking both facilities on $D$, and we also consider $H$ which consists at least one facility in $O_n$ and at least one facility in $O_1$. We obtain the values as shown in Table \ref{tab:ex3}.

\begin{figure}[t]
  \begin{center}
  \begin{tikzpicture}[
            roundnode/.style={circle, draw=green!60, fill=green!5, very thick, minimum size=7mm},
            squarednode/.style={rectangle, draw=blue!60, fill=blue!5, very thick, minimum size=5mm},
            scale = 1, transform shape
        ]
        
            \node[roundnode, label=above:{$1$ client}] (A) {$A$};
            \node[squarednode, label=above:{$k$ facilities}] (B) [right = 3cm of A]{$B$};
            \node[roundnode, label=above:{$n-1$ clients}] (C) [right = 3cm of B]{$C$};
            \node[squarednode, label=above:{$k$ facilities}] (D) [right = 2cm of C]{$D$};
            
            \draw (A) -- (B) node[draw=none,fill=none,font=\scriptsize,midway,below ] {$1$};

            \draw (B) -- (C)
            node[draw=none,fill=none,font=\scriptsize,midway,below ] {$1$};
            
            \draw (C) -- (D)
            node[draw=none,fill=none,font=\scriptsize,midway,below ] {$\sqrt{2}-1$};
            
        \end{tikzpicture}
  \caption{An instance with two client locations A, C and two possible facility locations B, D such that both $B$ and $D$ can be chosen $k$ times individually, with distances between any two adjacent locations as well as the number of clients or facilities at each location labeled.}
  \label{fig:ex3}
  \end{center}
\end{figure}
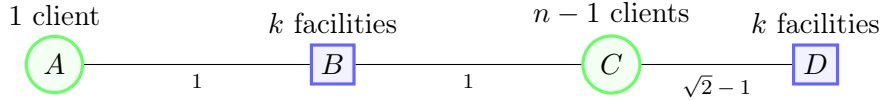

\begin{table}[]
\centering
\begin{tabular}{|c|c|c|c|}
\hline
       & $\optsm$                              & $\optmm$                              & $H$                                   \\ \hline
$\csm$ & $(\sqrt{2}-1)n+2$                     & n                                     & $*$                        \\ \hline
$\cmm$ & $1 + \sqrt{2}$                        & 1                                     & $1 + \sqrt{2}$                        \\ \hline
$\asm, n\rightarrow \infty$ & 1                                     & {$1 + \sqrt{2}$} & {*} \\ \hline
$\amm, n\rightarrow \infty$ & {$1 + \sqrt{2}$} & 1                                     & {$1 + \sqrt{2}$} \\ \hline
\end{tabular}
\caption{An instance for Theorem \ref{thm:max_bound},with $\cmm$, $\csm$, 
 $\amm$, $\asm$ listed for $\optmm$, $\optsm$ and $H$. Note that * means that the value varies by the choice of $H$.}
\label{tab:ex3}
\end{table}

    Note that in this case, the only options of locations are both facilities on $B$ ($\optmm$), both facilities on $D$ ($\optsm$) and a combination of at least one facility on $B$ and at least one facility on $D$ ($H$). By choosing any of $\optsm$, $\optmm$ or $H$ gives us at least a $1 + \sqrt{2}$ approximation for approximating both objectives. Therefore, there does not exist a solution that would give an approximation ratio better than $1 + \sqrt{2}$. 
    
\end{proof}
\label{thm:max_bound}
\end{theorem}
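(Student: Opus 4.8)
The plan is to prove this lower bound by constructing a single hard instance on a line and showing that every feasible solution is forced to have simultaneous approximation ratio at least $1+\sqrt{2}$. The key simplification, compared to the analogous sum-based lower bound of Theorem~\ref{thm:sum_low_bound}, is that both $\csm$ and $\cmm$ are built from $\max_{a\in A}d(i,a)$, so the cost a client incurs depends only on \emph{which locations} receive at least one facility, not on how the $k$ facilities are distributed among them. This collapses the otherwise enormous solution space into three meaningful cases, and makes the construction work uniformly for every value of $k$ (which is exactly why the bound does not improve for larger $k$).

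Concretely, I would place $A,B,C,D$ left-to-right with $d(A,B)=d(B,C)=1$ and $d(C,D)=\sqrt{2}-1$, put one client at $A$ and $n-1$ clients at $C$, and make $k$ copies of each of $B$ and $D$ available. First I would record the distances that matter: every client reaches $B$ at distance $1$, the $A$-client reaches $D$ at distance $d(A,B)+d(B,C)+d(C,D)=1+\sqrt{2}$, and each $C$-client reaches $D$ at distance $\sqrt{2}-1$. Then I would evaluate $\csm$ and $\cmm$ for the three cases as in Table~\ref{tab:ex3}: all facilities at $B$ (this is $\optmm$), all at $D$ (this is $\optsm$), and any mixture $H$ using at least one facility at each of $B$ and $D$. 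Because the objectives are max-based, $H$ always pays $1+\sqrt{2}$ for the $A$-client and $1$ for each $C$-client, so it matches $\optmm$ in the $\cmm$ direction and $\optsm$ in the $\csm$ direction simultaneously.

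Next I would identify the optima ($\optmm$ is optimal for $\cmm$ with value $1$, since any use of $D$ forces $\cmm\ge 1+\sqrt{2}$; and $\optsm$ is optimal for $\csm$ with value $(\sqrt{2}-1)n+2$, which beats the value $n$ of $\optmm$ for large $n$) and take $n\to\infty$ in each ratio. This yields $\asm(\optmm)\to\frac{1}{\sqrt{2}-1}=1+\sqrt{2}$ with $\amm(\optmm)=1$; $\amm(\optsm)=1+\sqrt{2}$ with $\asm(\optsm)=1$; and both ratios equal to $1+\sqrt{2}$ for $H$. Since these exhaust all feasible solutions, every solution has simultaneous ratio at least $1+\sqrt{2}$, which is the desired conclusion.

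The only genuinely delicate point is the calibration $d(C,D)=\sqrt{2}-1$. This value is chosen precisely so that the two competing limiting ratios coincide, i.e.\ so that the $\cmm$-cost $2+d(C,D)$ of $\optsm$ equals the $\csm$-ratio $\tfrac{1}{d(C,D)}$ of $\optmm$; this is the quadratic $x^2+2x-1=0$ whose positive root is $\sqrt{2}-1$, forcing the common value $1+\sqrt{2}$. Getting this calibration right, and confirming that the mixed case $H$ cannot slip below it, is the heart of the argument; the remaining distance computations, the identification of optima, and the $n\to\infty$ limits are routine arithmetic.
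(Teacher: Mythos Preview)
Your proposal is correct and follows essentially the same construction and case analysis as the paper's own proof: the same four-point line instance with $d(C,D)=\sqrt{2}-1$, the same three solution types, and the same limiting-ratio computations. Your added explanation of why $d(C,D)=\sqrt{2}-1$ equalizes the two competing ratios is a nice touch that the paper omits.
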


\subsection{\textsc{Max-Max} and \textsc{Max-Sum}} \label{section:mmms}
Until now, we have only looked at pairs of objectives that have the same individual cost function for each client but different ones for the overall cost (max and sum respectively). We now consider pairs of objectives that have different individual cost functions for each client but the same cost function for the overall cost. Note that in this case, when $k = 1$, the two objectives would become exactly the same. For example, consider $\cmm$ and $\cms$: when $k=1$, both would become minimax, and thus would have the same optimum solution. Because of these differences, previous results from Section \ref{section:prelim} do not apply to this set of objectives, and different techniques are required to form a good approximation for both objectives. We will first consider \textsc{Max-Max} and \textsc{Max-Sum}.
Let $\optms$ be the optimal solution for $\cms$, and $\optmm$ be the optimal solution for $\cmm$.
\begin{definition} We define $k_1, k_2$ as follows:
	\begin{enumerate}
		\item $k_1 \cdot \max_{i \in \C} \max_{a \in \optms} d(i,a) = \max_{i \in \C} \sum_{a \in \optms} d(i,a)$
		\item $k_2 \cdot \max_{i \in \C} \max_{a \in \optmm} d(i,a) = \max_{i \in \C} \sum_{a \in \optmm} d(i,a)$
	\end{enumerate}
	Note that we have $1 \leq k_1 \leq k, 1 \leq k_2 \leq k$.
	\label{def_ks}
\end{definition}

First, we make the following simple observation.

\begin{theorem}
	For any $k$, given the optimal facility set $\optmm$ that minimizes $\cmm$ and the optimal facility set $\optms$ that minimizes $\cms$, we have that $\ams(\optmm) \cdot \amm(\optms) = \frac{k_2}{k_1}$. 
	
	\begin{proof}
		By Definition \ref{def_ks}, we have that 
		\[
		\begin{aligned}
			\max_{i \in \C} \max_{a \in \optms} d(i,a) &= \frac{1}{k_1} \max_{i \in \C} \sum_{a \in \optms} d(i,a)\\
			&= \frac{1}{k_1 \cdot \ams(\optmm)} \max_{i \in \C} \sum_{a \in \optmm} d(i,a)\\
			&= \frac{k_2}{k_1 \cdot \ams(\optmm)} \max_{i \in \C} \max_{a \in \optmm} d(i,a)
		\end{aligned}
		\]
	Then, divide both sides by $\max_{i \in \C} \max_{a \in \optmm} d(i,a)$, we have that 
	\[
	\begin{aligned}
		\amm(\optms) &= \frac{k_2}{k_1 \cdot \ams(\optmm)}\\
		\amm(\optms) \cdot \ams(\optmm) &= \frac{k_2}{k_1}
	\end{aligned}
	\]
	as desired.
	\end{proof}
	\label{thm:gen_mmms}
\end{theorem}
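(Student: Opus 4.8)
The plan is to prove this identity by unwinding the definitions of the two approximation ratios together with the two constants $k_1, k_2$, and then observing an exact cancellation. The key realization is that everything is governed by four quantities: the \cmm-cost and the \cms-cost of each of the two optimal solutions. I would name them as $X = \max_{i\in\C}\max_{a\in\optmm}d(i,a)$, $Y = \max_{i\in\C}\sum_{a\in\optmm}d(i,a)$, $Z = \max_{i\in\C}\max_{a\in\optms}d(i,a)$, and $W = \max_{i\in\C}\sum_{a\in\optms}d(i,a)$. By Definition \ref{def_ks}, these satisfy $Y = k_2 X$ and $W = k_1 Z$; each of these is precisely the statement that, within a single fixed solution, its \cms-cost is a $k_2$-fold (respectively $k_1$-fold) multiple of its \cmm-cost.

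Next I would express the two approximation ratios as ratios of these quantities. Since $\optms$ is optimal for \cms, the \cms-cost of the competitor $\optmm$ is inflated by exactly the factor $\ams(\optmm) = Y/W$. Symmetrically, since $\optmm$ is optimal for \cmm, we have $\amm(\optms) = Z/X$. Multiplying these and substituting $Y = k_2 X$ and $W = k_1 Z$ gives $\ams(\optmm)\cdot\amm(\optms) = \tfrac{Y}{W}\cdot\tfrac{Z}{X} = \tfrac{k_2 X}{k_1 Z}\cdot\tfrac{Z}{X} = \tfrac{k_2}{k_1}$, where the $X$ and $Z$ factors cancel.

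Concretely, I would present this as a single chain of equalities rather than introducing the four names: start from $\max_{i}\max_{a\in\optms}d(i,a)$, rewrite it as $\tfrac{1}{k_1}W$ using the first relation of Definition \ref{def_ks}, replace $W$ by $Y/\ams(\optmm)$ using the optimality of $\optms$ for \cms, and finally replace $Y$ by $k_2 X$ using the second relation of Definition \ref{def_ks}. Dividing both sides by $X = \max_{i}\max_{a\in\optmm}d(i,a)$ then isolates $\amm(\optms) = k_2/(k_1\,\ams(\optmm))$, which rearranges immediately to the claimed product.

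There is essentially no analytic obstacle here: the result is an exact algebraic identity, not an inequality, so no metric or triangle-inequality reasoning is required, and the bounds $1\le k_1\le k$ and $1\le k_2\le k$ are not even needed. The only point demanding care is bookkeeping — ensuring that the denominator of $\ams(\optmm)$ is the \cms-optimal value $W$ and the denominator of $\amm(\optms)$ is the \cmm-optimal value $X$, since swapping which optimal solution sits in the denominator is the natural place an error could slip in. Once the two defining relations are substituted, the cancellation of $X$ and $Z$ is immediate.
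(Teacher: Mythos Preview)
Your proposal is correct and essentially identical to the paper's proof: the paper also writes the chain $\max_{i}\max_{a\in\optms}d(i,a) = \tfrac{1}{k_1}\cms(\optms) = \tfrac{1}{k_1\,\ams(\optmm)}\cms(\optmm) = \tfrac{k_2}{k_1\,\ams(\optmm)}\cmm(\optmm)$, then divides by $\cmm(\optmm)$ and rearranges. Your ``Concretely'' paragraph is exactly the paper's argument, and your observation that no metric reasoning is needed is accurate.
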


\begin{corollary}
	For any $k$, there always exists a facility location set $A \subseteq \F$ such that choosing $A$ would give a $\sqrt{k_2/k_1}$ approximation both for minimizing $\cmm$ and minimizing $\cms$. In fact, we would either get a $(1, \sqrt{k_2/k_1})$ approximation by choosing $\optmm$ or a $(\sqrt{k_2/k_1},1)$ approximation by choosing $\optms$. In other words, at least one of $\ams(\optmm)$ or $\amm(\optms)$ is always less than or equal to $\sqrt{k_2/k_1}$.
	\begin{proof}
		To show the above result, we will consider two cases. First, assume $\ams(\optmm) > \sqrt{k_2/k_1}$, then we would have $\amm(\optms) = \frac{{k_2}/{k_1}}{\ams(\optmm)} \leq \sqrt{k_2/k_1}$ by Theorem \ref{thm:gen_mmms}. Otherwise, we would have $\ams(\optmm) \leq \sqrt{k_2/k_1}$. Thus, we can conclude that at least one of $\ams(\optmm)$ or $\amm(\optms)$ is always less than or equal to $\sqrt{k_2/k_1}$.
	\end{proof}
	\label{coro:kkmmms}
\end{corollary}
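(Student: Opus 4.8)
The plan is to deduce the corollary directly from the multiplicative identity established in Theorem \ref{thm:gen_mmms}, namely that $\amm(\optms) \cdot \ams(\optmm) = k_2/k_1$. The key observation is purely elementary: when two positive quantities have a fixed product, at least one of them is bounded above by the square root of that product, i.e. by their geometric mean. Since $\sqrt{k_2/k_1}$ is exactly that geometric mean, this will immediately yield the stated bound once we reinterpret the two cross-ratios correctly.

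Concretely, I would argue by a short dichotomy on the size of $\ams(\optmm)$. Suppose first that $\ams(\optmm) > \sqrt{k_2/k_1}$. Solving the identity of Theorem \ref{thm:gen_mmms} for the other factor gives $\amm(\optms) = (k_2/k_1)/\ams(\optmm)$, and substituting the assumed lower bound on the denominator yields $\amm(\optms) < (k_2/k_1)/\sqrt{k_2/k_1} = \sqrt{k_2/k_1}$. In the complementary case we have $\ams(\optmm) \leq \sqrt{k_2/k_1}$ directly. In either case, at least one of $\ams(\optmm)$ or $\amm(\optms)$ is at most $\sqrt{k_2/k_1}$, which is precisely the final sentence of the statement.

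It then remains to translate this into the claimed simultaneous-approximation guarantee for a concrete solution. Because $\optmm$ is by definition optimal for $\cmm$, we have $\amm(\optmm) = 1$, so selecting $\optmm$ produces the pair $(1, \ams(\optmm))$; symmetrically, selecting $\optms$ produces $(\amm(\optms), 1)$. If the dichotomy above places the small ratio on $\ams(\optmm)$, then $\optmm$ is the desired solution, since its worse ratio is $\max\{1, \ams(\optmm)\} = \ams(\optmm) \leq \sqrt{k_2/k_1}$; otherwise $\optms$ plays this role by the symmetric computation. The only point one must note carefully is that every approximation ratio is at least $1$ by Definition \ref{def:sa}, so taking the maximum against the optimal ratio of $1$ does not inflate the bound.

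There is no genuine technical obstacle here: essentially all of the work is already packaged into Theorem \ref{thm:gen_mmms}, and this corollary is just the standard remark that a fixed product forces one factor below the geometric mean. Accordingly, I expect the proof to be a few lines of case analysis, with the modest ``care point'' being the interpretation of $\max\{1, \cdot\}$ noted above rather than any substantive inequality.
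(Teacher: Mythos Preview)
Your proposal is correct and follows essentially the same argument as the paper: a two-case split on whether $\ams(\optmm)$ exceeds $\sqrt{k_2/k_1}$, using the product identity from Theorem~\ref{thm:gen_mmms} to bound $\amm(\optms)$ in the first case. The only difference is that you spell out the translation to the $(1,\sqrt{k_2/k_1})$ and $(\sqrt{k_2/k_1},1)$ pairs more explicitly than the paper does.
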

Recall that by Definition \ref{def_ks}, we have that $1 \leq k_1 \leq k, 1 \leq k_2 \leq k$. Therefore, we can see that $\sqrt{k_2/k_1} \leq \sqrt{k/1} = \sqrt{k}$. Hence, we can also conclude the following corollary. 
\begin{corollary}
For any $k$, there always exists a facility location set $A \subseteq \F$ such that choosing $A$ would give a $\sqrt{k}$ approximation both for minimizing $\cmm$ and minimizing $\cms$. In fact, we would either get a $(1, \sqrt{k})$ approximation by choosing $\optmm$ or a $(\sqrt{k},1)$ approximation by choosing $\optms$. In other words, at least one of $\ams(\optmm)$ or $\amm(\optms)$ is always less than or equal to $\sqrt{k}$.
	\label{coro:sqrtkmmms}
\end{corollary}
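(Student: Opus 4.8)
The plan is to derive this corollary as an immediate consequence of Corollary \ref{coro:kkmmms} together with the bounds on $k_1$ and $k_2$ recorded in Definition \ref{def_ks}. First I would recall from Corollary \ref{coro:kkmmms} that at least one of $\ams(\optmm)$ or $\amm(\optms)$ is at most $\sqrt{k_2/k_1}$, and that the witnessing solution is simply one of the two optima: choosing $\optmm$ gives $\amm(\optmm) = 1$ by optimality, so the resulting pair is $(1, \sqrt{k_2/k_1})$, and symmetrically choosing $\optms$ gives $(\sqrt{k_2/k_1}, 1)$. Thus the entire structure of the claim is already supplied by the earlier corollary; all that remains is to replace the instance-dependent quantity $\sqrt{k_2/k_1}$ by the uniform bound $\sqrt{k}$.

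The key remaining step is the observation that $\sqrt{k_2/k_1} \leq \sqrt{k}$ for every instance. By Definition \ref{def_ks}, each of $k_1$ and $k_2$ is the ratio of a max-of-sums over exactly $k$ facilities to the corresponding max-of-maxes, and hence satisfies $1 \leq k_1 \leq k$ and $1 \leq k_2 \leq k$. Since $k_1$ and $k_2$ range over independent intervals, the ratio $k_2/k_1$ is maximized by taking the numerator as large as possible and the denominator as small as possible, i.e., $k_2 = k$ and $k_1 = 1$, which yields $\sqrt{k_2/k_1} \leq \sqrt{k/1} = \sqrt{k}$. Combining this with the previous paragraph, at least one of $\ams(\optmm)$ or $\amm(\optms)$ is at most $\sqrt{k}$, and the corresponding optimum ($\optmm$ or $\optms$) realizes the $(1,\sqrt{k})$ or $(\sqrt{k},1)$ guarantee, exactly as claimed.

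I do not expect any genuine obstacle here: the argument is a one-line monotonicity observation on a two-variable ratio subject to box constraints, layered on top of the already-established Corollary \ref{coro:kkmmms}. If anything, the only point worth stating carefully is that the bounds $1 \leq k_1 \leq k$ and $1 \leq k_2 \leq k$ hold \emph{independently}, so that the worst case $k_2 = k$, $k_1 = 1$ is simultaneously attainable from the perspective of the bound, justifying the substitution $\sqrt{k_2/k_1} \leq \sqrt{k}$ without any further case analysis.
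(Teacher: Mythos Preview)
Your proposal is correct and follows essentially the same approach as the paper: the paper derives the corollary in one line by invoking Corollary \ref{coro:kkmmms} and then using the bounds $1 \leq k_1 \leq k$ and $1 \leq k_2 \leq k$ from Definition \ref{def_ks} to conclude $\sqrt{k_2/k_1} \leq \sqrt{k/1} = \sqrt{k}$.
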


While Corollary \ref{coro:sqrtkmmms} gives a good upper bound for the simultaneous approximation ratio for $\cmm$ and $\cms$ when $k\in \{1,2,3,4\}$, we can actually show that the upper bound is at most 2 for larger $k$. To show this, we first observe the following lemmas. Let $(i^*,b^*) = \operatorname{argmax}_{i \in \C, b\in \optms} d(i,b)$, $(j^*,a^*) = \operatorname{argmax}_{j \in \C, a\in \optmm} d(j,a)$, $B = \optms\setminus \{b^*\}$.

\begin{lemma}
	For any $j \in \C$, we have that 
	$$(k-2)d(j, b^*) \geq (k-2k_1)d(i^*, b^*).$$
	
	\begin{proof}
	 By triangle inequality, for any $b \in B$, we have that 
	 \[
	 \begin{aligned}
	 	d(i^*,b^*) \leq d(i^*,b) + d(j,b) + d(j,b^*)
	 \end{aligned}
	 \]
	 Therefore, we can see that
	 \[
	 \begin{aligned}
	 	(k-1)d(i^*,b^*) &\leq \sum_{b \in B}d(i^*,b) + \sum_{b \in B}d(j,b) + (k-1)d(j,b^*)
	 \end{aligned}
	 \]
	Recall that  $(i^*,b^*) = \operatorname{argmax}_{i \in \C, b\in \optms} d(i,b)$. By Definition \ref{def_ks}, this means that for any $j \in \C$, we have that
	\[
	\begin{aligned}
		\sum_{b \in \optms} d(j, b) \leq \max_{i \in \C} \sum_{b \in \optms} d(i,b) = k_1 \cdot d(i^*,b^*) 
	\end{aligned}
	\]
	Hence, we have that
	\[
	\begin{aligned}
		\sum_{b \in \optms} d(i^*, b) &\leq  k_1 \cdot d(i^*,b^*) \\
		\sum_{b \in B} d(i^*, b) &\leq  (k_1-1) \cdot d(i^*,b^*)
	\end{aligned}
	\]
	Now, with the above results, we have that 
	\[
	\begin{aligned}
		(k-1)d(i^*,b^*) &\leq \sum_{b \in B}d(i^*,b) + \sum_{b \in B}d(j,b) + (k-1)d(j,b^*)\\
		&\leq (k_1-1) d(i^*,b^*) + \sum_{b \in \optms}d(j,b) + (k-2)d(j,b^*)\\
		&\leq (k_1-1) d(i^*,b^*) + k_1\cdot d(i^*,b^*)+ (k-2)d(j,b^*)\\
		&\leq  (2k_1-1) d(i^*,b^*)+ (k-2)d(j,b^*)\\
	\end{aligned}
	\]
	Then, rearrange the above inequality, we have that 
	\[
	\begin{aligned}
		(k-2)d(j,b^*) &\geq (k-2k_1)d(i^*,b^*)
	\end{aligned}
	\]
	as desired. 
	\end{proof}
	\label{lemma:jc}
\end{lemma}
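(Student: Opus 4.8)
The plan is to prove this inequality through a single chained application of the triangle inequality, summed over the $k-1$ facilities in $B = \optms \setminus \{b^*\}$, combined with the defining property of $k_1$. The governing idea is that $d(i^*, b^*) = \max_{i \in \C} \max_{a \in \optms} d(i,a)$ is the largest client–facility distance occurring in $\optms$, while by Definition \ref{def_ks} we have $k_1 \cdot d(i^*, b^*) = \max_{i \in \C} \sum_{a \in \optms} d(i,a) = \cms(\optms)$. Consequently the total distance from \emph{any} single client to all of $\optms$ is at most $k_1 d(i^*, b^*)$, and this is the only nontrivial fact I need.

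First I would fix the arbitrary client $j \in \C$ and, for each facility $b \in B$, route the triangle inequality through both $b$ and $j$ to obtain
$$d(i^*, b^*) \leq d(i^*, b) + d(j, b) + d(j, b^*).$$
Summing this over all $k-1$ facilities of $B$ gives
$$(k-1)d(i^*, b^*) \leq \sum_{b \in B} d(i^*, b) + \sum_{b \in B} d(j, b) + (k-1)d(j, b^*).$$

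Next I would control the two summations using the $k_1$ bound above. Applying it to the client $i^*$ gives $\sum_{b \in \optms} d(i^*, b) \leq k_1 d(i^*, b^*)$, so after peeling off the $b^*$ term we get $\sum_{b \in B} d(i^*, b) \leq (k_1 - 1)d(i^*, b^*)$. Applying the same bound to $j$ gives $\sum_{b \in \optms} d(j, b) \leq k_1 d(i^*, b^*)$, and here the key bookkeeping is to recombine $\sum_{b \in B} d(j, b) + (k-1)d(j, b^*) = \sum_{b \in \optms} d(j, b) + (k-2)d(j, b^*)$, which drops the coefficient of $d(j, b^*)$ from $k-1$ to $k-2$. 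Substituting both estimates yields
$$(k-1)d(i^*, b^*) \leq (k_1 - 1)d(i^*, b^*) + k_1 d(i^*, b^*) + (k-2)d(j, b^*) = (2k_1 - 1)d(i^*, b^*) + (k-2)d(j, b^*),$$
and rearranging produces exactly $(k-2)d(j, b^*) \geq (k - 2k_1)d(i^*, b^*)$.

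The algebra is routine once the right triple-triangle-inequality chain is selected, so the main obstacle is purely the accounting: one must peel off the $b^*$ term from each summation at precisely the right moment so that the two $(k-1)$ coefficients of $d(j, b^*)$ collapse into a single $(k-2)$, and one must recognize that the definition of $k_1$ bounds the total distance to $\optms$ for \emph{both} $i^*$ and the arbitrary client $j$ by the same quantity $k_1 d(i^*, b^*)$. No case analysis on the size of $k$ is needed; the boundary case $k = 2$ merely renders the inequality trivial, since then the left side vanishes and $k_1 \geq 1$ forces the right side to be nonpositive.
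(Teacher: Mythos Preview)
Your proposal is correct and follows essentially the same approach as the paper's proof: the same three-term triangle inequality routed through each $b \in B$, summed over $B$, and then bounded via the $k_1$ relation applied to both $i^*$ and $j$, with the identical bookkeeping that recombines $\sum_{b \in B} d(j,b) + (k-1)d(j,b^*)$ into $\sum_{b \in \optms} d(j,b) + (k-2)d(j,b^*)$. Your additional remark on the $k=2$ boundary case is a nice clarification that the paper omits.
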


\begin{lemma}
	For any $a \notin \optms$, there must exist some $j$ such that $d(j,b^*) \leq d(j, a)$.
	\begin{proof}
		Let $A = \{a\} \cup B$. First, note that since $A$ is not the optimal solution (or, it is also an optimal solution but $A \neq \optms$) for $\cms$, there must exist some $j \in \C$ such that 
		$$\sum_{p \in \optms}d(j, p) \leq \sum_{p \in A}d(j, p)$$
	To see why this is true, assume otherwise, $\sum_{p \in \optms}d(j, p) > \sum_{p \in A}d(j, p)$ for all $j \in \C$, this means that $\max_{j \in \C}\sum_{p \in \optms}d(j, p) > \max_{j\in \C}\sum_{p \in A}d(j, p)$, but since $\optms$ is the optimal solution for $\cms$, this is a contradiction. Therefore, we can see that 
	\[
	\begin{aligned}
		\sum_{p \in \optms}d(j, p) &\leq \sum_{p \in A}d(j, p)\\
		d(j,b^*)+\sum_{p \in B}d(j, p) &\leq d(j,a) + \sum_{p \in B}d(j, p)\\
		d(j,b^*) &\leq d(j, a)
	\end{aligned}
	\]
	as desired. 
	\end{proof}
	\label{lemma:opt}
\end{lemma}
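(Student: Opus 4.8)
The plan is to prove this by a single-swap local-optimality argument that exploits the fact that $\optms$ minimizes $\cms$. First I would form the candidate solution $A = (\optms \setminus \{b^*\}) \cup \{a\}$, obtained by deleting the ``worst'' facility $b^*$ from $\optms$ and inserting the outside location $a$. Since $B = \optms \setminus \{b^*\}$ has $k-1$ elements and $a \notin \optms$ (hence $a \notin B$), the set $A$ is a feasible facility set of size $k$. Because $\optms$ is a minimizer of $\cms$, no feasible set can beat it, so $\cms(\optms) \leq \cms(A)$, i.e. $\max_{j \in \C} \sum_{p \in \optms} d(j,p) \leq \max_{j \in \C} \sum_{p \in A} d(j,p)$.

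Next I would extract from this inequality between the two maxima a single client $j$ witnessing the corresponding pointwise comparison. The clean route is contradiction: suppose that for \emph{every} client $j$ we had $\sum_{p \in \optms} d(j,p) > \sum_{p \in A} d(j,p)$. Letting $j_1$ be the maximizer of the right-hand side would then give $\max_{j} \sum_{p \in A} d(j,p) = \sum_{p \in A} d(j_1,p) < \sum_{p \in \optms} d(j_1,p) \leq \max_{j} \sum_{p \in \optms} d(j,p)$, so $\cms(A) < \cms(\optms)$, contradicting optimality of $\optms$. Hence some client $j$ satisfies $\sum_{p \in \optms} d(j,p) \leq \sum_{p \in A} d(j,p)$. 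Finally, since $\optms$ and $A$ agree on all of $B$ and differ only in that $\optms$ contains $b^*$ where $A$ contains $a$, both sums share the common term $\sum_{p \in B} d(j,p)$; cancelling it leaves exactly $d(j,b^*) \leq d(j,a)$, as claimed.

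The only genuinely delicate step is the second one: passing from ``the maximum over clients of one sum is at most the maximum of the other'' to ``some individual client realizes the pointwise inequality.'' This does not hold coordinate-wise in general, so the hard part is to phrase it carefully as the contradiction above — a uniform strict pointwise inequality across all clients forces strict inequality of the two maxima. Everything else (feasibility of the swap and cancellation of the shared facilities in $B$) is routine once the optimality of $\optms$ is invoked.
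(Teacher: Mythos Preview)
Your proposal is correct and follows essentially the same approach as the paper: form the swapped set $A = B \cup \{a\}$, use optimality of $\optms$ to get $\cms(\optms) \le \cms(A)$, derive by contradiction a client $j$ with $\sum_{p\in\optms} d(j,p) \le \sum_{p\in A} d(j,p)$, and cancel the common $B$-terms. Your handling of the delicate max-to-pointwise step via the maximizer $j_1$ is exactly the paper's argument, stated a touch more explicitly.
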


Finally, we can conclude with the following theorem.
\begin{theorem}
For any $k$, there always exists a facility location set $A \subseteq \F$ such that choosing $A$ would give a 2 approximation both for minimizing $\cmm$ and minimizing $\cms$. In fact, we would either get a $(1, 2)$ approximation by choosing $\optmm$ or a $(2,1)$ approximation by choosing $\optms$. In other words, at least one of $\amm(\optms)$ or $\ams(\optmm)$ is always less than or equal to $2$.
	\begin{proof}
		We claim that $\min(\amm(\optms), \ams(\optmm)) \leq 2$. To show this, recall that by Corollary \ref{coro:kkmmms} and Definition \ref{def_ks}, we have that $\min(\amm(\optms), \ams(\optmm)) \leq \sqrt{k_2/k_1} \leq \sqrt{k/k_1}$. Then, we consider two cases. First, consider the case where $k - 2k_1 \leq 0, k \leq 2k_1$, this means that $\min(\amm(\optms), \ams(\optmm)) \leq \sqrt{k/k_1} \leq \sqrt{2k_1/k_1} = \sqrt{2} < 2$. Next, we consider the case where $k - 2k_1 > 0, k > 2k_1$. Assume $\optmm \neq \optms$, otherwise choosing $\optms$ would be a 1 approximation for both objectives. This means that there exists some $a \in \optmm, a \notin \optms$. Here note that by Lemma \ref{lemma:opt} and Lemma \ref{lemma:jc}, there exist some $j \in \C$ such that
		\[
		\begin{aligned}
			d(i^*, b^*) &\leq \frac{k-2}{k-2k_1}d(j,b^*)\\
			&\leq \frac{k-2}{k-2k_1}d(j,a)\\
			&\leq \frac{k-2}{k-2k_1}d(j^*,a^*)\\
		\end{aligned}
		\]
		Now, divide both side by $d(j^*,a^*)$, we have that $\amm(\optms) \leq \frac{k-2}{k-2k_1}$,
		$ \min(\amm(\optms), \ams(\optmm)) \leq$ $ \min\left(\sqrt{\frac{k}{k_1}}, \frac{k-2}{k-2k_1}\right )$. Let $k_1\cdot x = k > 2k_1, x > 2$. We then have  
		\[
		\begin{aligned}
			\min(\amm(\optms), \ams(\optmm)) &\leq \min\left(\sqrt{\frac{k_1\cdot x}{k_1}}, \frac{k_1 \cdot x-2}{k_1 \cdot x-2k_1}\right)\\
			&= \min\left(\sqrt{x}, \frac{k_1 \cdot x-2}{k_1(x-2)}\right)\\
			&= \min\left(\sqrt{x}, \frac{x}{x-2}- \frac{2}{k_1(x-2)}\right)\\
			&\leq \min\left(\sqrt{x}, \frac{x}{x-2}\right)
		\end{aligned}
		\]
	The last inequality holds since $k_1 \geq 1, x >2$. We then consider two cases. First, consider the case $x > 4$, this means that $\frac{x}{x-2} < 2$. Next, consider the case where $x \leq 4$, we have that $\sqrt{x} \leq 2$. Therefore, we can conclude that $\min(\amm(\optms), \ams(\optmm)) \leq \min\left(\sqrt{x}, \frac{x}{x-2}\right) \leq 2$. Finally, since both cases give the result $\min(\amm(\optms), \ams(\optmm)) \leq 2$, we can see that at least one of $\amm(\optms)$ or $\ams(\optmm)$  is always less than or equal to $2$ as desired.
	\end{proof}
	\label{thm:2mmms}
\end{theorem}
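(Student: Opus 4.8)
The plan is to prove the equivalent claim $\min(\amm(\optms), \ams(\optmm)) \leq 2$, since this immediately yields that one of $\optmm, \optms$ is a simultaneous $2$-approximation (giving the advertised $(1,2)$ or $(2,1)$ guarantee). I would start from the bound already supplied by Corollary \ref{coro:kkmmms}, namely $\min(\amm(\optms), \ams(\optmm)) \leq \sqrt{k_2/k_1}$, and combine it with $k_2 \leq k$ (Definition \ref{def_ks}) to obtain $\min(\amm(\optms), \ams(\optmm)) \leq \sqrt{k/k_1}$. The whole argument then hinges on a case split governed by the sign of $k - 2k_1$, which is precisely the quantity appearing in Lemma \ref{lemma:jc}.

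In the first case, $k \leq 2k_1$, the square-root bound already suffices: $\sqrt{k/k_1} \leq \sqrt{2} < 2$, and there is nothing more to do. The interesting case is $k > 2k_1$, which in particular forces $k > 2$ so that Lemma \ref{lemma:jc} applies with $k - 2 > 0$. Here I would assume $\optmm \neq \optms$ (otherwise $\optms$ is trivially optimal for both objectives) and select some $a \in \optmm \setminus \optms$. Feeding this $a$ into Lemma \ref{lemma:opt} produces a client $j \in \C$ with $d(j, b^*) \leq d(j, a)$, and chaining this with Lemma \ref{lemma:jc} together with $d(j,a) \leq d(j^*,a^*)$ gives $d(i^*, b^*) \leq \frac{k-2}{k-2k_1}\, d(j^*, a^*)$. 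Recognizing that $d(i^*, b^*) = \cmm(\optms)$ and $d(j^*, a^*) = \cmm(\optmm)$, dividing through yields the second bound $\amm(\optms) \leq \frac{k-2}{k-2k_1}$.

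It then remains to show $\min\!\left(\sqrt{k/k_1},\, \frac{k-2}{k-2k_1}\right) \leq 2$. The key simplification is the substitution $x = k/k_1$, so that $x > 2$ and the two quantities become $\sqrt{x}$ and $\frac{k_1 x - 2}{k_1(x-2)} = \frac{x}{x-2} - \frac{2}{k_1(x-2)} \leq \frac{x}{x-2}$, where the last step discards a nonnegative term using $k_1 \geq 1$. A final two-way split finishes the argument: when $x \leq 4$ we have $\sqrt{x} \leq 2$, and when $x > 4$ we have $\frac{x}{x-2} < 2$, so the minimum is at most $2$ in every regime.

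Since the two supporting lemmas (which I may assume) carry the genuine metric reasoning, the remaining obstacle is mainly organizational rather than technical: recognizing that the correct dichotomy is between $k \leq 2k_1$ and $k > 2k_1$, and that the substitution $x = k/k_1$ collapses the minimum of a square-root bound and a rational bound into a clean one-variable inequality. The small algebraic observation that $\frac{2}{k_1(x-2)}$ can simply be dropped is what makes the resulting bound uniform in $k_1$, and is the one step where I would be careful to justify the inequality direction.
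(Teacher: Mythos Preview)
Your proposal is correct and follows essentially the same approach as the paper's proof: the same case split on the sign of $k-2k_1$, the same chaining of Lemmas~\ref{lemma:jc} and~\ref{lemma:opt} to obtain $\amm(\optms)\le \frac{k-2}{k-2k_1}$, and the same substitution $x=k/k_1$ followed by the $x\le 4$ versus $x>4$ split. Your explicit remark that $k>2k_1$ forces $k>2$ (so that $k-2>0$ when rearranging Lemma~\ref{lemma:jc}) is a useful sanity check that the paper leaves implicit.
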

Combining Corollary \ref{coro:sqrtkmmms} and Theorem \ref{thm:2mmms} together, we can conclude that there always exists a solution that is a $\min(\sqrt{k},2)$ approximation for both $\cmm$ and $\cms$. In fact, we can show that this bound is tight when $k=2$. 

\begin{theorem}
    When $k=2$, no deterministic algorithm can approximate both $\cmm$ and $\cms$ simultaneously within a factor of better than $\sqrt{2}$.
\label{thm:max_boundmmms}
\end{theorem}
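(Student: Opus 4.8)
The plan is to prove this lower bound by exhibiting a single instance in which every choice of two facilities is at least a $\sqrt{2}$-approximation for at least one of the two objectives, thereby matching the upper bound $\min(\sqrt{k},2)$ at $k=2$. Since Theorem \ref{thm:gen_mmms} gives $\amm(\optms)\cdot\ams(\optmm)=k_2/k_1$, and $k_2/k_1$ is maximized at $2$ when $k=2$, I would engineer the instance so that $k_1=1$ and $k_2=2$: the $\cms$-optimum places a facility on top of its worst client (so that client's per-facility sum collapses to a single distance), while the $\cmm$-optimum is forced onto two symmetric points equidistant from the binding client.

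Concretely, I would work in the Euclidean plane with two clients $c_1=(0,0)$ and $c_2=(2,0)$ and the four candidate locations $\F=\{(0,0),(2,0),(1,1),(1,-1)\}$, each available once. The isosceles/right-angle geometry is exactly what produces the factor $\sqrt{2}$: each apex point $(1,\pm 1)$ lies at distance $\sqrt{2}$ from both clients, whereas the only point achieving a smaller covering radius, the true midpoint $(1,0)$, is deliberately excluded from $\F$.

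The first step is to pin down the two optima. For $\cms$ I would use the triangle inequality $d(c_1,f)+d(c_2,f)\ge d(c_1,c_2)=2$ for every facility $f$; summing over the two chosen facilities shows the two per-client totals add to at least $4$, hence $\cms\ge 2$ for every solution, with equality for $\optms=\{(0,0),(2,0)\}$ (both facilities on the segment $c_1c_2$). For $\cmm$ I would note that the smallest covering radius achievable by any available point is $\sqrt{2}$, attained only at the apex points, so $\cmm\ge\sqrt{2}$ with equality for $\optmm=\{(1,1),(1,-1)\}$. Thus $\cms(\optms)=2$ and $\cmm(\optmm)=\sqrt{2}$, and a direct computation gives $\ams(\optmm)=2\sqrt{2}/2=\sqrt{2}$ and $\amm(\optms)=2/\sqrt{2}=\sqrt{2}$, so each optimum is exactly a $\sqrt{2}$-approximation on the other objective.

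The final step is an exhaustive check over the six available pairs, tabulating $\cmm$ and $\cms$ and the induced ratios. One finds that $\optmm$ and $\optms$ achieve simultaneous ratio exactly $\sqrt{2}$, while every mixed pair consisting of one client-point and one apex point has $\cmm=2$ and $\cms=2+\sqrt{2}$, giving simultaneous ratio $(2+\sqrt{2})/2>\sqrt{2}$. Hence no solution simultaneously approximates $\cmm$ and $\cms$ better than $\sqrt{2}$. The main obstacle, and the only place needing care, is the completeness of this case analysis: I must confirm that no asymmetric pair (nor, for robustness, any repeated-point choice) can dip below $\sqrt{2}$ on both objectives at once. This is precisely why the two individual lower bounds $\cms\ge 2$ and $\cmm\ge\sqrt{2}$ are established first, since they immediately cap how good any single solution can be on each objective and reduce the analysis to confirming that the off-objective cost of every non-optimal candidate is inflated by a factor of at least $\sqrt{2}$.
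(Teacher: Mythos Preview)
Your proposal is correct and uses essentially the same construction as the paper: two clients at the endpoints of a segment, facilities available at those endpoints, and ``apex'' facilities equidistant from both clients at distance $\sqrt{2}$ times half the segment length, so that $\optms$ picks the endpoints and $\optmm$ picks the apex(es), each incurring exactly a $\sqrt{2}$ ratio on the other objective. The only cosmetic differences are that the paper works at half your scale and uses a single apex point with multiplicity two in the facility multiset rather than your two distinct symmetric apex points $(1,\pm 1)$; your explicit lower bounds $\cms\ge 2$ and $\cmm\ge\sqrt{2}$ are a nice addition that the paper leaves implicit.
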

\begin{proof}
    Consider the case where there are four locations on a triangle as shown in Figure \ref{fig:ex4}. 
    There is one client on location $C$, one client on location $D$, two facility locations at $A$, one facility location at $C$ and one facility location at $D$ (i.e. the multiset of facility locations is $\F = \{A,A,C,D\}$). We also have that $d(A,C) = d(A,D) = \sqrt{2}/2$, $d(C,D) =  1$. Note that in this case $\optmm$ is taking both facilities on $A$, $\optms$ is taking facilities on $C$ and $D$. W.l.o.g, we also consider solution $H$ which chooses a facility on location $A$ and a facility $C$. We obtain the values as shown in Table \ref{tab:ex4}.

    \begin{figure}[t]
  \begin{center}
  \begin{tikzpicture}[
            roundnode/.style={rectangle, draw=blue!60, fill=green!5, very thick, minimum size=5mm},
            squarednode/.style={rectangle, draw=blue!60, fill=blue!5, very thick, minimum size=5mm},
            scale = 1, transform shape
        ]
        \node[squarednode] (A) at (0,3) {$A$};
        \node[roundnode] (C) at (-5,0) {\textit{C}};
        \node[roundnode] (D) at (5,0) {\textit{D}};
        
        \draw (A) -- (C) node[midway, left=10pt] {$\frac{\sqrt{2}}{2}$};
        \draw (A) -- (D) node[midway, right=10pt] {$\frac{\sqrt{2}}{2}$};
        \draw (C) -- (D) node[midway, above] {$1$};
        
        \node[above=10pt] at (A) {2 facilities};
        \node[below=10pt] at (C) {1 client, 1 facility};
        \node[below=10pt] at (D) {1 client, 1 facility};
    \end{tikzpicture}
  \caption{An instance with two client locations C, D and three possible facility locations A, C, D such that $C$ and $D$ can be chosen once but $A$ can be chosen twice, with distances between any two adjacent locations as well as the number of clients and/or facilities at each location labeled.}
  \label{fig:ex4}
  \end{center}
\end{figure}
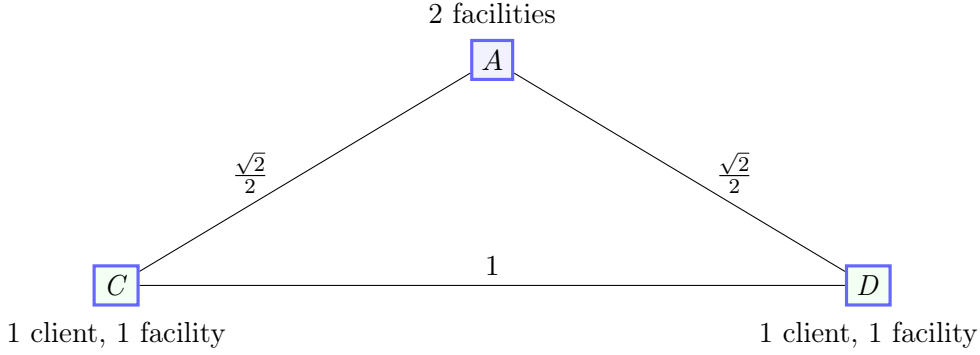

\begin{table}[]
\centering
\begin{tabular}{|c|c|c|c|}
\hline
       & $\optms$ & $\optmm$             & $H$                      \\ \hline
$\cms$ & $1$      & $\sqrt{2}$           & $1 + \frac{\sqrt{2}}{2}$ \\ \hline
$\cmm$ & $1$      & $\frac{\sqrt{2}}{2}$ & $1$                      \\ \hline
$\ams$ & 1                                 & {$\sqrt{2}$} & {$1 + \frac{\sqrt{2}}{2}$} \\ \hline
$\amm$ & {$\sqrt{2}$} & 1                                 & {$\sqrt{2}$}               \\ \hline
\end{tabular}
\caption{An instance for Theorem \ref{thm:max_boundmmms},with $\cmm$, $\cms$,  $\amm$, $\ams$ listed for $\optmm$, $\optms$ and $H$.}
\label{tab:ex4}
\end{table}
    
    Note that in this case, the only options of locations are both facilities on $A$ ($\optmm$), facilities on $C,D$ ($\optms$) and one facility on $A$ and one on $C$ or $D$ ($H$). By choosing $\optmm$, $\optms$ gives us a $\sqrt{2}$ for approximating both objectives and choosing $H$ would lead to a worse result. Therefore, there does not exist a solution that would give an approximation ratio better than $\sqrt{2}$. 
\end{proof}

\subsection{\textsc{Sum-Sum} and \textsc{Sum-Max}}
Lastly, we look at \textsc{Sum-Sum} and \textsc{Sum-Max}. As we have have discussed in Section \ref{section:mmms}, note that when $k=1$, both objectives become the same (minisum) and previous results from Section \ref{section:prelim} do not apply to this set of objectives. Let $\optss$ be the optimal solution for $\css$, $\optsm$ be the optimal solution for $\csm$. Similar to Theorem \ref{thm:gen_mmms}, we can obtain the following result.

\begin{theorem}
	For any $k$, given the optimal facility set $\optsm$ that minimizes $\csm$ and the optimal facility set $\optss$ that minimizes $\css$, we have that $\ass(\optsm) \cdot \asm(\optss) = k$. 
	
	\begin{proof}
		Similar to the proof of Theorem \ref{thm:gen_mmms}, we can show that
		\[
		\begin{aligned}
			\sum_{i \in \C} \max_{a \in \optss} d(i,a) &\leq \sum_{i \in \C} \sum_{a \in \optss} d(i,a)\\
			&= \frac{1}{\ass(\optsm)} \sum_{i \in \C} \sum_{a \in \optsm} d(i,a)\\
			&\leq \frac{k}{\ass(\optsm)} \sum_{i \in \C} \max_{a \in \optsm} d(i,a)
		\end{aligned}
		\]
	Then, divide both sides by $\sum_{i \in \C} \max_{a \in \optsm} d(i,a)$, we have that 
	\[
	\begin{aligned}
		\asm(\optss) &= \frac{k}{\ass(\optsm)}\\
		\ass(\optsm) \cdot \asm(\optss) &= k
	\end{aligned}
	\]
	as desired.
	\end{proof}
	\label{thm:gen_sssm}
\end{theorem}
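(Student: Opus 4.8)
The plan is to mirror the argument used for Theorem \ref{thm:gen_mmms}, building a single chain that relates the Sum-Max value of $\optss$ to the Sum-Max value of $\optsm$, routing through the Sum-Sum objective on the way. The two elementary facts that do all the work are the per-client inequalities $\max_{a \in A} d(i,a) \leq \sum_{a \in A} d(i,a)$ and $\sum_{a \in A} d(i,a) \leq k \cdot \max_{a \in A} d(i,a)$, each valid for every client $i$ and every facility set $A$ with $|A| = k$, simply because a sum of $k$ nonnegative terms lies between its maximum and $k$ times its maximum.

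First I would start from $\csm(\optss) = \sum_{i \in \C} \max_{a \in \optss} d(i,a)$ and apply the first inequality termwise to bound it by $\css(\optss)$. Next I would replace $\css(\optss)$ by the Sum-Sum value of $\optsm$ using the definition of the approximation ratio: since $\ass(\optsm) = \css(\optsm)/\css(\optss)$, we have $\css(\optss) = \css(\optsm)/\ass(\optsm)$. Finally I would apply the second inequality termwise to $\css(\optsm)$, bounding it by $k \cdot \csm(\optsm)$. Putting the three steps together gives the chain
\[
\csm(\optss) = \sum_{i \in \C} \max_{a \in \optss} d(i,a) \leq \sum_{i \in \C} \sum_{a \in \optss} d(i,a) = \frac{1}{\ass(\optsm)} \sum_{i \in \C} \sum_{a \in \optsm} d(i,a) \leq \frac{k}{\ass(\optsm)} \sum_{i \in \C} \max_{a \in \optsm} d(i,a),
\]
whose right-hand side is $\frac{k}{\ass(\optsm)} \csm(\optsm)$. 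Dividing through by $\csm(\optsm)$ isolates $\asm(\optss) = \csm(\optss)/\csm(\optsm)$ and yields the relation $\ass(\optsm) \cdot \asm(\optss) = k$ recorded in the statement.

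The step I expect to be the conceptual crux — the gap from the cleaner Theorem \ref{thm:gen_mmms} — is that here there is no exact per-instance constant playing the role of $k_1$ or $k_2$. In the Max-Max/Max-Sum setting those quantities were \emph{defined} precisely so that the analogues of my two inequalities became equalities, which is why the product emerged as the instance-dependent ratio $k_2/k_1$. In the Sum setting the quantity $\sum_{a} d(i,a) / \max_a d(i,a)$ varies from client to client, so the two bounds cannot be collapsed to a single exact factor; the argument must absorb both the ``$\max \leq \sum$'' slack incurred on $\optss$ and the ``$\sum \leq k\,\max$'' slack incurred on $\optsm$, and it is exactly this doubled worst-case slack that forces the appearing constant to be $k$. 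I would therefore be careful to orient each inequality in the direction that keeps $\ass(\optsm)$ in the denominator throughout, so that all of the accumulated slack is charged to the single factor $k$ and the product relation closes as claimed.
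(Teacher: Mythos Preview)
Your proposal is correct and follows essentially the same route as the paper: the identical three-step chain $\max \leq \sum$ on $\optss$, the substitution $\css(\optss) = \css(\optsm)/\ass(\optsm)$, and $\sum \leq k\cdot\max$ on $\optsm$, followed by division by $\csm(\optsm)$. Your closing observation is apt: the chain in fact only yields $\ass(\optsm)\cdot\asm(\optss)\le k$ rather than equality, which is exactly what the paper's own derivation establishes and is all that Corollary~\ref{coro:ksssm} requires.
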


Then, using the same proof as we have shown in Corollary \ref{coro:kkmmms}, we can see that 

\begin{corollary}
	For any $k$, there always exists a facility location set $A \subseteq \F$ such that choosing $A$ would give a $\sqrt{k}$ approximation both for minimizing $\csm$ and minimizing $\css$. In fact, we would either get a $(1, \sqrt{k})$ approximation by choosing $\optsm$ or a $(\sqrt{k},1)$ approximation by choosing $\optss$. In other words, at least one of $\ass(\optsm)$ or $\asm(\optss)$ is always less than or equal to $\sqrt{k}$.

	\label{coro:ksssm}
\end{corollary}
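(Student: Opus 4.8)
The plan is to obtain this corollary as an immediate consequence of the product identity already established in Theorem \ref{thm:gen_sssm}, exactly mirroring how Corollary \ref{coro:kkmmms} was derived from Theorem \ref{thm:gen_mmms}. All of the real work has already been done: Theorem \ref{thm:gen_sssm} gives the exact relation $\ass(\optsm) \cdot \asm(\optss) = k$, and the corollary is just the elementary observation that whenever two positive numbers multiply to $k$, at least one of them must be at most $\sqrt{k}$.

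Concretely, I would argue by a two-case split on the value of $\asm(\optss)$. First, suppose $\asm(\optss) > \sqrt{k}$. Then by Theorem \ref{thm:gen_sssm} we have
\[
\ass(\optsm) = \frac{k}{\asm(\optss)} < \frac{k}{\sqrt{k}} = \sqrt{k},
\]
so choosing $\optsm$ gives a good bound. Otherwise $\asm(\optss) \leq \sqrt{k}$, and we instead look at $\optss$. In either case we conclude that $\min\big(\ass(\optsm),\, \asm(\optss)\big) \leq \sqrt{k}$.

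To get the sharper paired statement, I would then note that each of these two solutions is optimal for one of the objectives: since $\optsm$ minimizes $\csm$, we have $\asm(\optsm) = 1$, and since $\optss$ minimizes $\css$, we have $\ass(\optss) = 1$. Hence choosing $\optsm$ yields a $(1, \sqrt{k})$ approximation for $(\csm, \css)$ and choosing $\optss$ yields a $(\sqrt{k}, 1)$ approximation, and by the case analysis above at least one of these two solutions has its nontrivial ratio bounded by $\sqrt{k}$.

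There is essentially no obstacle here; the only point worth care is remembering to record that $\optss$ and $\optsm$ each achieve ratio exactly $1$ on their own objective, which is what upgrades the bare bound $\min(\ass(\optsm), \asm(\optss)) \leq \sqrt{k}$ into the claimed $(1,\sqrt{k})$ and $(\sqrt{k},1)$ pairs. The substantive content lives entirely in Theorem \ref{thm:gen_sssm}, whose proof already encodes the $\csm \le \css \le k \cdot \csm$ sandwich that forces the product to equal $k$.
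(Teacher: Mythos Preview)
Your proposal is correct and is exactly the approach the paper takes: the paper simply says ``using the same proof as we have shown in Corollary \ref{coro:kkmmms},'' which is precisely the two-case split on whether one of the ratios exceeds $\sqrt{k}$, invoking Theorem \ref{thm:gen_sssm} to bound the other.
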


Recall that Theorem \ref{thm:3sssm} gives a 3 approximation for approximating both \textsc{Sum-Sum} and \textsc{Sum-Max} simultaneously. Now, combined with Corollary \ref{coro:ksssm}, we can conclude that there always exists a solution that is a $\min(\sqrt{k},3)$ approximation for both $\optsm$ and $\optss$. 

\section{Conclusion}
We studied the metric facility location problem  (or, equivalently, the committee selection problem), and showed that various natural objectives are compatible with each other, meaning that we can always find a solution which is close-to-optimal for both of these objectives. We did this by first generalizing some existing work to placing multiple facilities, but our main contributions are the more precise and intricate analysis of how $\css$ and $\cms$ can be simultaneously approximated when placing multiple facilities. Moreover, we provided new approaches for analysing objectives with different individual costs for the clients, such as $\cmm$ and $\cms$, and proved that such objectives can both be approximated within a small constant factor; no previous work has approximated these objectives together to the best of our knowledge.

Despite our results, many interesting open questions remain. While some of our bounds are tight and thus the best possible, it may still be possible to improve some of our other bounds, as shown in Figure \ref{fig:results}. We believe that many of our results and techniques can be greatly generalized to other objective functions; this is the subject of our current work. Finally, although $\css$ and $\cmm$ can be computed in polynomial time, it would be very interesting to see how well multiple objectives in our facility location setting can be approximated {\em efficiently}, as most of our current results only show existence of a good solution, and not its efficient computation.

\subsubsection*{Acknowledgments} This work was partially supported by National Science Foundation award CCF-2006286.

\bibliography{refs}

\appendix

\section*{Appendix}

\section{Extension to the $l$-$centrum$ Problem}

As we have briefly mentioned in Section \ref{section:prelim}, the extension we showed in that section not only hold for $\cm$ and $\cs$ as defined in Definition \ref{def:maxsum}, but also for all $l$-$centrum$ problems. This is due to the fact that \cite{han2022optimizing} considers the $l$-$centrum$ problem while $\cm$ and $\cs$ are merely special cases of such family of problems. We define the $l$-$centrum$ problem as follows.

\begin{definition}
	Let $A \subseteq \F, |A| = k, 1 \leq l \leq n$. We order the $n$ clients $v^1, v^2, \cdots, v^n$ such that 
	$$f\left(v^1, A\right) \geq f\left(v^2, A\right) \geq \ldots \geq f\left(v^n, A\right)$$
	Where $f:M \times M^{k} \rightarrow \mathbb{R^{+}}\cup \{0\}$. Then, we define $l$-$centrum$ ($c_l$) as
	$$c_l(A) = \sum_{i = 1}^{l} f(v^i, A).$$
	
	\label{def:cent}
\end{definition}

\section{Polynomial-time 3-Distortion of Multi-Winner Voting}
\label{section:appen}


As we have discussed in the Introduction, the facility location problems can also be applied in a spacial voting setting where candidates and voters are located in some metric space $(M,d)$. Under this setting, in this section, we will consider the case where we don't know the voter's exact locations but their ordinal preferences for the candidates, namely, {\it distortion} (see the survey \cite{anshelevich2021distortion}). We define {\it distortion} as follows.

\begin{definition}
Let $(\mathcal{M}, \mathcal{D})$ be the set of all metric spaces that are consistent with the voters' ordinal preferences for the candidates that are given, and $O_{d'}$ be the optimal $k$-candidate set for metric $(M', d')$. Denote the cost function by $c$, then the {\em distortion} of a feasible solution $A$ is
    \[
    \sup_{d' \in \mathcal{D}} \frac{ c(A,d')}{c(O_{d'},d')}.
    \]
    \label{def:distortion}
\end{definition}

Now, if we are given a set of $n$ voters $\C$ and a set of candidates $\F$ in metric space $(M,d)$, the goal is to choose $k$ candidates that forms a committee $A \subseteq \F$ such that it minimizes some objectives. Instead of considering $\cm$ and $\cs$ as we did in Section \ref{section:prelim}, we would instead consider a generalized version of those two objectives named the $l$-$centrum$ problem as defined in Definition \ref{def:cent}. As before, $f$ can be any function that satisfies the triangle Inequality \ref{equ:f}, including max and sum functions.

Note that with Definition \ref{def:cent}, $\cm$ is equivalent to $c_1$ and  $\cs$ is equivalent to $c_n$. \cite{caragiannis2022metric} presented a $9$-distortion polynomial-time multi-winner voting rule for $f$ being the $q$-$social$ $cost$ of a committee of size $k$ with $q > k/2$ with $c_n$ in Corollary 3 of their paper. They have also shown that $q$-$social$ $cost$ with $q > k/2$ obeys Inequality \ref{equ:f} in Lemma 2 of their paper, and thus satisfies the desirable property for our function $f$. Because of this, with the mechanism proposed in \cite{Kizilkaya2022PluralityVA}, we can improve the distortion for their objective function to 3, as shown in the following more general theorem. This theorem provides a mechanism for multi-winner voting with distortion at most 3. It works for any objective function where the function $f$ determining the cost of a committee for a single voter obeys the triangle inequality (e.g., max, sum, or $q$-social cost with $q>k/2$), and the objective function combining the costs of individual voters is any $l$-centrum objectives $c_l$. 

\begin{theorem}
    For any $1 \leq l \leq n$, consider $l$-centrum objective $c_l$ with any $f$ such that 
    $$f(v,A) \leq f(v,B) + f(p,B) + f(p,A) \qquad \forall v,p\in \C, A \subseteq \F, B \subseteq \F.$$
    Then there exists a polynomial-time multi-winner voting rule for $c_l$ with distortion at most 3.
\begin{proof}
    We will utilize Plurality Veto presented in \cite{Kizilkaya2022PluralityVA}. The voting role is as follows. For each voter $i$, let $A_{i}$ be the top $k$-committee for $i$ with respect to $f$, since $f$ measures how much each voter likes each set of $k$ candidates. Then, run Plurality Veto on a set of new candidates $\F' = \{A_{1}, A_{2}, \cdots, A_n\}$ and voters $\C$ with the same preference profile induced by the original problem, and the winner would be the winner of original problem. 
    
    First note that this would reduce the problem to a voting problem with $n$ voters and $n$ candidates, which means that running Plurality Veto on it would take polynomial-time. To show that this mechanism has distortion at most 3, let $A$ be the winner, $O$ be the optimal solution, and order the clients as $v^1, v^2, \cdots, v^n$ w.r.t. $A$ as defined in Definition \ref{def:cent}. Note that by construction of Plurality Veto, $A$ must be in $\{A_1, A_2, \cdots, A_n\}$. Besides, we know that there exists a matching which maps each voter $v^i$ to some voters $p[i] \in \C$ such that $v^i$ prefers $A$ over $p[i]$'s favorite candidate $A_{p[i]}$ (this is by standard properties of Plurality Veto). Therefore, we have 
    \[
    \begin{aligned}
        c_l(A) &= \sum_{i = 1}^{l} f(v^i, A)\\
        &\leq \sum_{i = 1}^{l} f\left(v^i, A_{p[i]}\right).
    \end{aligned}
    \]

Now, since we have that $f(v,A) \leq f(v,B) + f(p,B) + f(p,A), \forall v,p\in \C, A \subseteq \F, B \subseteq \F$, we have that 
\[
    \begin{aligned}
        c_l(A) 
        &\leq  \sum_{i = 1}^{l} f(v^i, O) +  \sum_{i = 1}^{l} f(p[i], O) +  \sum_{i = 1}^{l} f(p[i], A_{p[i]}).
    \end{aligned}
    \]
Then, since $A_{p[i]}$ is $p[i]$'s favorite candidate, we must have that $\sum_{i = 1}^{l} f(p[i], A_{p[i]}) \leq \sum_{i = 1}^{l} f(p[i], O)$. In addition, since the mapping $p[i]$ forms a matching, and by definition of $c_l$, we have that $\sum_{i = 1}^{l} f(p[i], O)\leq  c_l(O)$ and $\sum_{i = 1}^{l} f(v^i, O) \leq c_l(O)$. Thus, we can then see that 
\[
    \begin{aligned}
        c_l(A)
        &\leq c_l(O) + c_l(O) + c_l(O)\\
        &= 3 \cdot c_l(O)
    \end{aligned}
    \]
Therefore, this voting rule has distortion at most 3, as desired. 
\end{proof}
\label{them:distor3}
\end{theorem}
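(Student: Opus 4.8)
The plan is to reduce the multi-winner problem to a single-winner election and then invoke the distortion-$3$ guarantee of Plurality Veto from \cite{Kizilkaya2022PluralityVA}. First I would, for each voter $i \in \C$, compute the size-$k$ committee $A_i$ that minimizes $f(i,\cdot)$, i.e. voter $i$'s most preferred committee. I would then build a new election with the same voter set $\C$ but with the $n$ virtual candidates $\F' = \{A_1, \ldots, A_n\}$, where each voter ranks these committees according to the value of $f$. Running Plurality Veto on this instance returns some winner $A \in \F'$, which is automatically a valid size-$k$ committee; and since the reduced instance has only $n$ voters and at most $n$ candidates, this runs in polynomial time.

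The crucial structural fact I would extract from Plurality Veto is its matching (domination) property: for the returned winner $A$ there exists a bijection $p : \C \to \C$ such that every voter $v^i$ weakly prefers $A$ to the favorite committee $A_{p[i]}$ of its matched voter, that is $f(v^i, A) \leq f(v^i, A_{p[i]})$. Ordering the voters $v^1, \ldots, v^n$ by decreasing $f(\cdot, A)$ as in Definition \ref{def:cent}, this gives
\[
c_l(A) = \sum_{i=1}^{l} f(v^i, A) \leq \sum_{i=1}^{l} f(v^i, A_{p[i]}).
\]
Next I would apply Inequality \ref{equ:f} with the optimum committee $O$ in the role of $B$ and $p[i]$ in the role of $p$, yielding $f(v^i, A_{p[i]}) \leq f(v^i, O) + f(p[i], O) + f(p[i], A_{p[i]})$, so that $c_l(A)$ is bounded by a sum of three terms.

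Each of these three terms is at most $c_l(O)$. The first, $\sum_{i=1}^{l} f(v^i, O)$, ranges over $l$ distinct voters, hence is at most the sum of the $l$ largest client costs under $O$, which is exactly $c_l(O)$. The third term satisfies $f(p[i], A_{p[i]}) \leq f(p[i], O)$ because $A_{p[i]}$ is $p[i]$'s favorite committee, reducing it to the second term; and the second term $\sum_{i=1}^{l} f(p[i], O)$ again ranges over $l$ distinct voters (since $p$ is a bijection), hence is at most $c_l(O)$. Adding the three bounds yields $c_l(A) \leq 3\,c_l(O)$, which is the desired distortion guarantee.

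The main obstacle I anticipate is not the distortion arithmetic but correctly importing and justifying the matching property of Plurality Veto in this reduced setting: I must argue that the ordinal information available in the original election suffices to determine each voter's favorite committee $A_i$ and induce a ranking over $\F'$, so that Plurality Veto is both applicable and polynomial, and that its domination guarantee (a bijection $p$ with $f(v^i, A) \leq f(v^i, A_{p[i]})$ for all $i$) carries over unchanged to the underlying metric objective. Once that guarantee is in place, the remaining steps are routine applications of Inequality \ref{equ:f}, the optimality of each $A_i$ for its own voter, and the fact that $c_l(O)$ upper-bounds the $f(\cdot, O)$-cost of any $l$ distinct voters.
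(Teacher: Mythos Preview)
Your proposal is correct and follows essentially the same approach as the paper: reduce to a single-winner election over each voter's favorite committee, run Plurality Veto, use its matching property together with Inequality~\ref{equ:f} and the optimality of each $A_{p[i]}$ to bound $c_l(A)$ by $3\,c_l(O)$. The paper's proof is structured identically, including the polynomial-time observation and the use of the bijection to ensure each of the three sums is at most $c_l(O)$.
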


\end{document}